\def\IsSubmission{} 
\def\IsDraft{} 
\definecolor{weborange}{rgb}{.8,.3,.3}
\definecolor{webblue}{rgb}{0,0,.8}
\definecolor{internallinkcolor}{rgb}{0,.5,0}
\definecolor{externallinkcolor}{rgb}{0,0,.5}
\providecommand{\remove}[1]{}
\newcommand{\Draft}[1]{\ifdefined\IsDraft \texttt{ #1} \fi}
\newcommand{\LLNCS}[1]{\ifdefined\IsLLNCS #1 \fi}
\newcommand{\TLLNCS}[2]{\ifdefined\IsLLNCS#1\else #2 \fi}
\newcommand{\authnote}[2]{{\bf [{\color{red} #1's Note:} {\color{blue} #2}]}}
\newcommand{\authnote}[2]{}
\newcommand{\sdotfill}{\textcolor[rgb]{0.8,0.8,0.8}{\dotfill}} 
\newenvironment{protocol}{\begin{proto}}{\vspace{-\topsep}\sdotfill\end{proto}}
\newenvironment{algorithm}{\begin{algo}}{\vspace{-\topsep}\sdotfill\end{algo}}
\newcommand{\aka} {also known as,\ }
\newcommand{\resp}{resp.,\ }
\newcommand{\ie} {i.e.,\ }
\newcommand{\eg} {e.g.,\ }
\newcommand{\wrt} {with respect to\ }
\newcommand{\cf}{{cf.,\ }}
\newcommand{\set}[1]{\ens{#1}}
\newcommand{\floor}[1]{\left \lfloor#1 \right \rfloor}
\newcommand{\half}{\tfrac{1}{2}}
\newcommand{\eqdef}{:=}
\newcommand{\R}{{\mathbb R}}
\newcommand{\N}{{\mathbb{N}}}
\newcommand{\zo}{\set{0,1}}
\newcommand{\zn}{\zo^n}
\newcommand{\zs}{\zo^\ast}
\newcommand{\xor}{\oplus}
\newcommand{\eps}{\epsilon}
\newcommand{\from}{\leftarrow}
\newcommand{\la}{\gets}
\newcommand{\poly}{\operatorname{poly}}
\newcommand{\Exp}{\operatorname*{E}}
\newcommand{\negl}{\operatorname{neg}}
\newcommand{\Supp}{\operatorname{Supp}}
\newcommand{\MathAlg}[1]{\mathsf{#1}\xspace}
\newcommand{\class}[1]{\mathrm{#1}}
\newcommand{\IP}{\class{IP}}
\newcommand{\IA}{\class{IA}}
\renewcommand{\P}{\class{P}}
\renewcommand{\cref}{\Cref}
	\newaliascnt{claiml}{theorem}
	\newtheorem{claiml}[claiml]{Claim}
	\renewenvironment{claim}{\begin{claiml}}{\end{claiml}}
	\newaliascnt{lemmal}{theorem}
	\newtheorem{lemmal}[lemmal]{Lemma}
	\renewenvironment{lemma}{\begin{lemmal}}{\end{lemmal}}
	\newaliascnt{propositionl}{theorem}
	\newtheorem{propositionl}[propositionl]{Proposition}
	\renewenvironment{proposition}{\begin{propositionl}}{\end{propositionl}}
	\newaliascnt{definitionl}{theorem}
	\newtheorem{definitionl}[definitionl]{Definition}
	\renewenvironment{definition}{\begin{definitionl}}{\end{definitionl}}
	\newaliascnt{corollaryl}{theorem}
	\newtheorem{corollaryl}[definitionl]{Corollary}
	\renewenvironment{corollary}{\begin{corollaryl}}{\end{corollaryl}}
	\newtheorem{theorem}{Theorem}[section]
	\newaliascnt{lemma}{theorem}
	\newtheorem{lemma}[lemma]{Lemma}
	\newaliascnt{claim}{theorem}
	\newtheorem{claim}[claim]{Claim}
	\newaliascnt{corollary}{theorem}
	\newtheorem{corollary}[corollary]{Corollary}
	\newaliascnt{proposition}{theorem}
	\newtheorem{proposition}[proposition]{Proposition}
	\newaliascnt{conjecture}{theorem}
	\newaliascnt{definition}{theorem}
	\newtheorem{definition}[definition]{Definition}
	\newaliascnt{remark}{theorem}
	\newaliascnt{example}{theorem}
\crefname{lemma}{Lemma}{Lemmas}
\crefname{figure}{Figure}{Figures}
\crefname{claim}{Claim}{Claims}
\crefname{corollary}{Corollary}{Corollaries}
\crefname{proposition}{Proposition}{Propositions}
\crefname{conjecture}{Conjecture}{Conjectures}
\crefname{definition}{Definition}{Definitions}
\crefname{remark}{Remark}{Remarks}
\crefname{exmaple}{Example}{Examples}
\newaliascnt{construction}{theorem}
\crefname{construction}{Construction}{Constructions}
\newaliascnt{fact}{theorem}
\newtheorem{fact}[fact]{Fact}
\crefname{fact}{Fact}{Facts}
\newaliascnt{notation}{theorem}
\newtheorem{notation}[notation]{Notation}
\crefname{notation}{Notation}{Notation}
\crefname{equation}{Equation}{Equations}
\newaliascnt{proto}{theorem}
\newtheorem{proto}[proto]{Protocol}
\crefname{proto}{protocol}{protocols}
\newaliascnt{algo}{theorem}
\newtheorem{algo}[algo]{Algorithm}
\crefname{algo}{algorithm}{algorithms}
\newaliascnt{expr}{theorem}
\newtheorem{expr}[expr]{Experiment}
\crefname{experiment}{experiment}{experiments}
\def\FullBox{$\Box$}
\def\qed{\ifmmode\qquad\FullBox\else{\unskip\nobreak\hfil
\penalty50\hskip1em\null\nobreak\hfil\FullBox
\parfillskip=0pt\finalhyphendemerits=0\endgraf}\fi}
\def\qedsketch{\ifmmode\Box\else{\unskip\nobreak\hfil
\penalty50\hskip1em\null\nobreak\hfil$\Box$
\parfillskip=0pt\finalhyphendemerits=0\endgraf}\fi}
\newcommand{\Tau}{\mathrm{T}} 
\newcommand{\ex}[2]{\Exp_{#1}\left[#2\right]}
\newcommand{\pr}[1]{\Pr\left[#1\right]}
\newcommand{\ppr}[2]{\Pr_{#1}\left[#2\right]}
\newcommand{\ens}[1]{\left\{#1\right\}}
\newcommand{\size}[1]{\left|#1\right|}
\newcommand{\cindist}{\mathbin{\stackrel{\rm nuC}{\approx}}}
\newcommand{\ucindist}{\mathbin{\stackrel{\rm C}{\approx}}}
\newcommand{\sindist}{\mathbin{\stackrel{\rm S}{\approx}}}
\newcommand{\rindist}{\mathbin{\stackrel{\rm R}{\approx}}}
\newcommand{\tP}{\widetilde{P}}
\newcommand{\tf}{\tilde{f}}
\newcommand{\out}{\operatorname{out}}
\newcommand{\Out}{\operatorname{Out}}
\newcommand{\Uni}{{\mathord{\mathcal{U}}}}
\newcommand{\prob}[1]{\mathsf{\textsc{#1}}}
\newcommand{\SD}{\prob{SD}}
\newcommand{\ppt}{{\sc ppt}\xspace}
\newcommand{\pptm}{{\sc pptm}\xspace}
\newcommand{\cs}{{\cal{S}}}
\newcommand{\cA}{\mathcal{A}}
\newcommand{\cX}{\mathcal{X}}
\newcommand{\cY}{\mathcal{Y}}
\newcommand{\Ac}{\MathAlgX{A}}
\newcommand{\Bc}{\MathAlgX{B}}
\newcommand{\ABc}{\set{\Ac,\Bc}}
\newcommand{\Mc}{\MathAlgX{M}}
\newcommand{\Dc}{\MathAlgX{D}}
\newcommand{\Pc}{\MathAlgX{P}}
\newcommand{\tX}{\tilde{X}}
\newcommand{\Tableofcontents}{
	\ifdefined\IsLLNCS \else
	\thispagestyle{empty}
	\pagenumbering{gobble}
	\clearpage
	\ifdefined\IsSubmission \else
	\setcounter{tocdepth}{2}
	\tableofcontents
	\thispagestyle{empty}
	\clearpage
	\fi
	\pagenumbering{arabic}
	\fi
}
\newcommand{\hide}[1]{ }
\newcommand{\oo}{o}
\newcommand{\vv}{v}
\newcommand{\heps}{\widehat{\eps}}
\newcommand{\hdelta}{\widehat{\delta}}
\newcommand{\halph}{\widehat{\alpha}}
\newcommand{\pk}{\kappa}
\DeclareMathAlphabet \mathbfcal{OMS}{cmsy}{b}{n}
\newcommand{\C}{C}
\newcommand{\Ensuremath}[1]{\ensuremath{#1}\xspace}
\newcommand{\myOptName}[1]{\Ensuremath{\operatorname{#1}}}
\newcommand{\MathAlgX}[1]{\Ensuremath{\MathAlg{#1}}}
\newcommand{\nuppt}{\ensuremath{{\sf ppt}^{\sf NU}\xspace}}
\newcommand{\Ham}{\operatorname{Ham}}
\newcommand{\wGamma}{\widetilde{\Gamma}}
\newcommand{\Jnote}[1]{\authnote{Jad}{#1}}
\newcommand{\Nnote}[1]{\authnote{Noam}{#1}}
\newcommand{\Rnote}[1]{\authnote{Ronen}{#1}}
\newcommand{\Inote}[1]{\authnote{Iftach}{#1}}
\newcommand{\DP}{\myOptName{DP}}
\newcommand{\WBSC}{\myOptName{WBSC}}
\newcommand{\SWBSC}{\myOptName{SWBSC}}
\newcommand{\OT}{{\myOptName{OT}}}
\newcommand{\CDP}{\myOptName{IND-DP}}
\newcommand{\Leak}{\myOptName{leakage}}
\newcommand{\CLeak}{\myOptName{comp-leakage}}
\newcommand{\NCLeak}{\myOptName{nu-comp-leakage}}
\newcommand{\NCDP}{\myOptName{SIM-DP}}
\newcommand{\CHAN}{\myOptName{CHN}}
\newcommand{\Agr}{\myOptName{agreement}}
\newcommand{\Fcalls}{\operatorname{final}}
\newcommand{\VA}{V^\Ac}
\newcommand{\VB}{V^\Bc}
\newcommand{\VP}{V^\p}
\newcommand{\UA}{U^\Ac}
\newcommand{\OA}{O^{\Ac}}
\newcommand{\OB}{O^{\Bc}}
\newcommand{\OP}{O^{\p}}
\newcommand{\tC}{\widetilde{C}}
\newcommand{\tAc}{\widetilde{\Ac}}
\newcommand{\tBc}{\widetilde{\Bc}}
\newcommand{\tPc}{\widetilde{\Pc}}
\newcommand{\tpi}{\widetilde{\pi}}
\newcommand{\tPi}{\tpi}
\newcommand{\tVA}{V^{\tAc}}
\newcommand{\tVB}{V^{\tBc}}
\newcommand{\tVP}{V^{\tPc}}
\newcommand{\tOA}{O^{\tAc}}
\newcommand{\tOB}{O^{\tBc}}
\newcommand{\tOP}{O^{\tPc}}
\newcommand{\hAc}{\widehat{\Ac}}
\newcommand{\hBc}{\widehat{\Bc}}
\newcommand{\hPc}{\widehat{\Pc}}
\newcommand{\hC}{\widehat{C}}
\newcommand{\hOA}{O^{\hAc}}
\newcommand{\hOB}{O^{\hBc}}
\newcommand{\hVA}{V^{\hAc}}
\newcommand{\hVB}{V^{\hBc}}
\newcommand{\hVP}{V^{\hPc}}
\newcommand{\ooB}{\obar^{\Bc}}
\newcommand{\ooA}{\obar^{\Ac}}
\newcommand{\oOA}{ \Obar^\Ac }
\newcommand{\oOB}{\Obar^\Bc }
\newcommand{\oOP}{\Obar^\Pc }
\newcommand{\oVA}{ \Vbar^\Ac }
\newcommand{\oVB}{\Vbar^\Bc }
\newcommand{\oVP}{\Vbar^\Pc }
\newcommand{\p}{\Pc}
\newcommand{\Obar}{\overline{O}}
\newcommand{\Vbar}{\overline{V}}
\newcommand{\obar}{\overline{o}}
\renewcommand{\pi}{\Pi}
\title{ Channels of Small Log-Ratio  Leakage  and \\
	 Characterization of Two-Party  Differentially Private  Computation
	\Draft{\\\small \sc Working Draft: Please Do Not Distribute}\thanks{An extended abstract of this work appeared  in TCC 2019 \cite{haitner2019channels}}
}
 	\author{}
	 \date{}
    \author{Iftach Haitner\thanks{School of Computer Science, Tel Aviv University. Emails: \texttt{iftachh@cs.tau.ac.il}. Research supported by ERC starting grant 638121. Member of the  Check Point Institute for Information Security}
    	\and Noam Mazor\thanks{School of Computer Science, Tel Aviv University. Emails: \texttt{noammaz@gmail.com}. Research supported by ERC starting grant 638121.}
    	\and Ronen Shaltiel\thanks{Department of Computer Science. University of Haifa, Email: \texttt{ronen@cs.haifa.ac.il}. Research supported by ISF grant 1628/17.}
    	\and Jad Silbak\footnotemark\thanks{School of Computer Science, Tel Aviv University. Emails: \texttt{jadsilbak@mail.tau.ac.il}. Research supported by ERC starting grant 638121 and by ISF grant 1628/17. Some of this research was done while Jad Silbak was a student at the University of Haifa.}
    }
\begin{document}

\maketitle

\begin{abstract}
	Consider a  \ppt two-party protocol  $\Pi = (\Ac,\Bc)$  in which the parties get no private inputs and obtain outputs $\OA,\OB \in \zo$, and let  $\VA$ and $\VB$ denote the parties' individual views. Protocol $\pi$ has \emph{$\alpha$-agreement} if $\Pr[\OA=\OB] = \half+\alpha$. The \emph{leakage} of $\pi$ is the amount of information a party obtains about the event $\set{\OA=\OB}$; that is, the \emph{leakage} $\eps$ is the maximum, over $\Pc \in \set{\Ac,\Bc}$, of the distance between $V^\Pc|_{\OA = \OB}$ and  $V^\Pc|_{\OA \ne \OB}$. Typically, this  distance is measured in \emph{statistical distance}, or, in the computational setting, in \emph{computational indistinguishability}.  For this choice, \citeauthor{wullschleger2009oblivious} [TCC '09] showed that if $\eps \ll \alpha$ then the protocol can be transformed into an OT protocol.
	
	We consider  measuring the protocol leakage  by the \textit{log-ratio distance} (which was popularized by its use in the differential privacy framework). The log-ratio distance between $X,Y$ over domain $\Omega$ is the minimal $\eps \ge 0$ for which, for every $v \in \Omega$, $\log \frac{\Pr[X=v]}{\Pr[Y=v]} \in [-\eps,\eps]$. In the computational setting, we use computational indistinguishability from having log-ratio distance~$\eps$. We show that a protocol with (noticeable) accuracy $\alpha \in \Omega(\eps^2)$ can be transformed into an OT protocol (note that this allows $\eps \gg \alpha$). We complete the picture, in this respect,  showing that a protocol with $\alpha \in o(\eps^2)$ does not necessarily imply OT. Our results hold for both the information theoretic and the computational settings, and can be viewed as a ``fine grained'' approach to ``weak OT amplification''.
	
	We then use the above result  to \emph{fully} characterize the complexity of differentially private two-party computation for the XOR function, answering the open question put  by  \citeauthor*{Goyal2016KMPS} [ICALP '16]  and \citeauthor*{HNOSS18}   [FOCS '18]. Specifically, we show that for any (noticeable) $\alpha\in \Omega(\eps^2)$,  a two-party protocol that computes the XOR function with  $\alpha$-accuracy and $\eps$-differential privacy  can be transformed into an OT protocol. This improves upon \citeauthor{Goyal2016KMPS} that only handle $\alpha\in\Omega(\eps)$, and upon \citeauthor{HNOSS18} who showed that such a protocol implies  (infinitely-often) key agreement  (and not OT). Our characterization is tight since OT does not follow from protocols in which $\alpha \in o( \eps^2)$, and extends to  functions (over many bits) that ``contain'' an ``embedded copy'' of the XOR function.
	
	\Inote{Made some polishing edits}
\end{abstract}


	\thispagestyle{empty}
	\pagenumbering{gobble}
	\clearpage
	\pagenumbering{arabic}

\Tableofcontents

\section{Introduction}\label{sec:Intro}

Oblivious transfer (OT), introduced by \citet{Rabin81}, is one of the most fundamental primitives in cryptography and a complete primitive for secure multi-party computation \cite{Yao86,GoldreichMW87}. Oblivious transfer protocols are known to exist assuming (several types of) \textit{families of trapdoor permutations} \cite{EvenGL85,Haitner04},  \textit{learning with errors} \cite{PeikertVW2008}, \textit{decisional
	Diffie-Hellman} \cite{naor2001efficient, aiello2001priced},\textit{computational Diffie-Hellman} \cite{bellare1989non}  and \textit{quadratic residuosity} \cite{kalai2005smooth}. While in some of the constructions of OT in the literature, the construction  immediately yields a full-fledged OT, in others it only yields a ``weak'' form of OT, that is later ``amplified'' into a full-fledged one.

In this paper we introduce a new notion for a ``weak form of OT'', and show how to amplify this ``weak OT'' into full-fledged OT. This notion is more ``fine grained'' than some previously suggested notions, which  allows us to obtain OT in scenarios that could not be handled by previous works. Our approach is suitable for the  computational  and for the  information theoretic settings (\ie the dishonest parties are assumed to be computationally bounded or not).

\subsection{Our Results}
We start with presenting our results in the information  theoretic setting, and then move to the computation one.

\subsubsection{The Information Theoretic Setting}
The information theoretic analogue of a two-party protocol between parties $\Ac$ and $\Bc$, is a ``channel'': namely, a quadruple of random variables  $C=((\VA,\OA),(\VB,\OB))$, with the interpretation that when ``activating'' (or ``calling'')  the channel $C$, party $\P \in \ABc$ receives his ``output'' $\OP$ and his ``view'' $V^\Pc$. In other words, ``activating a channel'' is analogous to running a two-party protocol with fresh randomness. (We  assume that the view $V^\Pc$ contains the output $\OP$).

\paragraph{Log-ratio leakage (channels).}

We are  interested in the special case where the channel $C=((\VA,\OA),(\VB,\OB))$ has Boolean  outputs (\ie $\OA,\OB \in \zo$), and  assume for simplicity that the channel is \emph{balanced}, meaning that for both $\Pc \in \ABc$, $\OP$ is uniformly distributed. Such channels are  parameterized by their \emph{agreement} and \emph{leakage}:
\begin{itemize}
\item A channel  $C$ has \emph{$\alpha$-agreement} if $\Pr[\OA=\OB] = \half + \alpha$. (Without loss of generality,  $\alpha \ge 0$, as otherwise one of the parties can flip his output).

\item The \emph{leakage} of party $\Bc$ in $C$ is the distance between the distributions $V^\Ac|_{\OA = \OB}$ and  $V^\Ac|_{\OA \ne \OB}$. (Note that these two distributions are well defined if $\alpha \in [0,\half)$). The leakage of party $\Ac$ is defined in an analogous way, and the leakage of $C$ is the maximum of the two leakages.
\end{itemize}

This approach (with somewhat different notation) was taken by past work \cite{wullschleger2009oblivious,WullschlegerThesis}, using \textit{statistical distance} as the distance measure.

Loosely speaking, leakage measures how well can a party distinguish the case  $\set{\OA=\OB}$ from the case  $\set{\OA \ne \OB}$. As each party knows his output, this can be thought of as the ``amount of information'' on the input of one party that \emph{leaks} to the other party.\footnote{We remark that one should be careful with this intuition. Consider a ``binary symmetric channel": a channel in which $\VA=\OA$ and $\VB=\OB$ (\ie the parties receive no additional view except their outputs), $\OA$ is uniformly distributed, and $\OB=\OA \xor U_{p}$ (where $U_p$ is an independent biased coin which is one with probability $p$). The leakage of this channel is zero, for every choice of $p$, whereas each party can predict the output of the other party with probability $1-p$ by using his own output as a prediction.}

We will measure leakage using a \emph{different distance measure}, which we refer to as ``log-ratio distance''.

\begin{definition}[Log-Ratio distance]
Two numbers $p_0,p_1 \in [0,1]$ satisfy\\ $p_0 \rindist_{\eps,\delta} p_1$ if for both $b \in \zo$: $p_b \le e^{\eps} \cdot p_{1-b} + \delta$. Two distributions $D_0,D_1$ over the same domain $\Omega$, are {\sf $(\eps,\delta)$-log-ratio-close} (denoted $D_0 \rindist_{\eps,\delta} D_1$) if for every $A \subseteq \Omega$:
\[ \Pr[D_0 \in A] \rindist_{\eps,\delta} \Pr[D_1 \in A]. \]
\end{definition}

We use the notation $D_0 \sindist_{\delta} D_1$ to say that the \emph{statistical distance} between $D_0$ and $D_1$ is at most $\delta$.
Log-ratio distance is a generalization of statistical distance as $\sindist_{\delta}$ is the same as $\rindist_{0,\delta}$. This measure of distance was popularized by its use in the \emph{differential privacy} framework \cite{DworkMNS2006} (that we discuss in Section \ref{sec:intro:dp}).

Loosely speaking, log-ratio distance considers the ``log-ratio function'' $L_{D_0||D_1}(x) \eqdef \log \frac{\pr{D_0=x}}{\pr{D_1=x}}$, and the two distribution are $(\eps,\delta)$-log-ratio-close if this function is in the interval $[-\eps,\eps]$ with probability $1-\delta$. As such, it can be seen as a ``cousin'' of \textit{relative entropy}  (\aka \textit{Kullback--Leibler (KL) divergence}) that measures the expectation of the log-ratio function.

Note that for  $\eps \in [0,1]$, $D_0 \rindist_{\eps,0} D_1$ implies $D_0 \rindist_{0,2\eps} D_1$, but the converse is not true, and the  condition ($D_0 \rindist_{\eps,0} D_1$) gives tighter handle on the distance between independent samples of distributions (as we explain in detail in \cref{sec:technique:log ratio}).

We use the log-ratio  distance to measure leakage in channels. This leads to the following definition (in which we substitute ``log-ratio distance'' as a distance measure).

\begin{definition}[Log-ratio leakage, channels, informal]\label{def:RelLeakage:IT:Inf}
A channel $C=((\VA,\OA),(\VB,\OB))$ has {\sf  log-ratio leakage $(\eps,\delta)$, denoted $(\eps,\delta)$-leakage} if for both $\Pc\in \ABc$:
\[ V^\Pc|_{\OA=\OB} \rindist_{\eps,\delta} V^\Pc|_{\OA \ne \OB} .\]
\end{definition}

This definition is related (and inspired by) the \textit{differential privacy} framework \cite{DworkMNS2006}. In the terminology of differential privacy, this can be restated as follows: let $E$ be the indicator variable for the event $\set{\OA=\OB}$. For both $\Pc \in \ABc$, the ``mechanism'' $V^\Pc$ is $(\eps,\delta)$-differentially private with regards to the ``secret''/``database'' $E$.


\paragraph{Channels of small log-ratio leakage imply OT.}
 \citet{wullschleger2009oblivious} considered channels with small leakage (measured by statistical distance). Using our terminology, he showed for  $\alpha \in [0,\half)$ and $\eps \in [0,1]$ with $\eps$ ``sufficiently smaller than'' $\alpha^2$, a channel with $\alpha$-agreement  and $(0,\eps)$-leakage yields OT. This can be interpreted as saying that if the leakage $\eps$ is sufficiently \emph{smaller} than the agreement $\alpha$, then the channel yields OT. We prove the following ``fine grained'' amplification result, which is restated with precise notation in \Cref{thm:main:IT}.

\begin{theorem}[Channels of small log-ratio leakage imply OT, infromal]\label{thm:Amp:IT:Inf} There exists a constants $c_1>0$ such that the following holds for  every  $\eps,\delta,\alpha$ with  $ c_1 \cdot \eps^2 \le \alpha < 1/8 $ and $\delta \le  \eps^2$: a channel $C$ that has $\alpha$-agreement and $(\eps,\delta)$-\Leak yields OT (of statistical security).
\end{theorem}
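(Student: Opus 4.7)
I would reduce to \citet{wullschleger2009oblivious}'s amplification of weak OT (with statistical-distance leakage) into full OT, but replace statistical-distance reasoning throughout with log-ratio-distance reasoning. The naive conversion from $(\eps,\delta)$-log-ratio closeness to statistical distance only gives $\eps + \delta$, and combined with Wullschleger's threshold $\eps_{SD} \ll \alpha^2$ this would demand $\alpha \gg \sqrt{\eps}$, far stronger than the hypothesis $\alpha \ge c_1 \eps^2$. The improvement must come from the tighter composition of log-ratio distance: by the advanced-composition theorem from differential privacy, $k$ independent copies of an $(\eps,0)$-log-ratio-close pair of distributions are $(\eps\sqrt{2k\ln(1/\delta')},\, \delta')$-log-ratio-close for every $\delta' > 0$. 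That is, leakage grows as $\eps\sqrt{k}$ rather than $\eps k$, and this $\sqrt{k}$-versus-$k$ saving is precisely what turns the old threshold $\eps \ll \alpha^2$ into the new threshold $\eps \ll \sqrt{\alpha}$.

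Concretely, first I would re-examine Wullschleger's multi-stage amplification and identify every point at which the leakage of $k$ parallel (or independently sampled) channel invocations is bounded using the triangle inequality for statistical distance. Each such step contributes roughly $k\eps_{SD}$ of leakage in his analysis, and after balancing against the agreement amplification this is what produces the $\eps_{SD} \ll \alpha^2$ threshold. I would then replicate the reduction tracking log-ratio parameters $(\eps_i,\delta_i)$ at every stage, substituting the log-ratio composition bound in place of the SD triangle inequality at each parallel-repetition step. Because the hypothesis $\delta \le \eps^2$ provides ample $\delta$-slack, I can let the $\delta$-parameters accumulate up to a small constant times $\eps^2$ while the $\eps$-parameter grows only as $\eps\sqrt{k}$. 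Balancing this slower leakage growth against the agreement amplification yields the improved threshold $\alpha \ge c_1 \eps^2$; a final conversion to statistical distance then produces a derived channel satisfying the hypothesis of Wullschleger's theorem, which I invoke as a black box to extract full statistically-secure OT.

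\textbf{Expected main obstacle.} The subtlety is that Wullschleger's reduction involves rewinding, conditioning, and combining steps that may create correlations between the parties' views conditioned on the secret event $\set{\OA = \OB}$. Log-ratio composition is cleanest under strict independence, so in these contexts I would likely need an intermediate ``good view'' step: restrict the analysis to a $(1-\delta')$-probability subset on which the required conditional independence still holds, and absorb the remaining mass into the $\delta'$-slack afforded by the log-ratio distance. A secondary subtlety is keeping the accumulated $\delta$-parameter at the final stage of the reduction small enough to invoke Wullschleger's statistical-security guarantee --- this is exactly where the hypothesis $\delta \le \eps^2$ is consumed, together with the fact that only polynomially many reduction stages are required.
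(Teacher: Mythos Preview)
You correctly identify that the crux is log-ratio composition under repetition (the $\eps\sqrt{k}$ versus $\eps k$ saving), but your plan to thread log-ratio parameters through the internals of Wullschleger's reduction is not what the paper does, and your own ``expected obstacle'' pinpoints why that route is hazardous.

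The paper instead inserts a \emph{single, explicit preprocessing protocol} and then invokes Wullschleger as a true black box on its output. The protocol samples the channel $\ell \approx 1/(4\alpha)$ times and rejection-samples: it accepts a block only when the $\ell$ output pairs satisfy $\oOA \oplus \oOB \in \{0^\ell, 1^\ell\}$, and each party outputs one bit encoding which of the two strings in $\{\oOA,\oOA\oplus 1^\ell\}$ it holds. This boosts agreement from $\alpha$ to a constant, since $(1/2+\alpha)^\ell / [(1/2+\alpha)^\ell + (1/2-\alpha)^\ell] \approx 1/(1+e^{-4\alpha\ell})$. Crucially, conditioned on the new outputs agreeing (resp.\ disagreeing), each party's view in the accepted block is \emph{exactly} $\ell$ i.i.d.\ samples from $\VP|_{\OA=\OB}$ (resp.\ $\VP|_{\OA \neq \OB}$). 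Hence your advanced-composition bound applies with no correlation caveats, giving leakage roughly $\ell\eps^2 + \eps\sqrt{\ell\ln(1/\delta')}$, which under $\alpha \ge c_1\eps^2$ (so $\ell\eps^2 \le 1/(4c_1)$) is an arbitrarily small constant. Converting that to statistical distance and feeding the resulting constant-agreement, small-constant-leakage channel into Wullschleger's theorem finishes the proof.

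The rejection-sampling step is engineered precisely so that the conditioning you worry about becomes trivial: the protocol \emph{forces} the relevant conditional views to be product distributions over copies of the original channel. Your plan would instead have to cope with whatever conditioning Wullschleger's internal stages (privacy amplification, information reconciliation, etc.) induce, and there is no reason those yield product-form conditionals amenable to advanced composition. So the obstacle you flag is real for your route and simply absent in the paper's. The missing idea in your proposal is this specific agreement-amplification protocol; once you have it, there is no need to open Wullschleger's box at all.
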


For simplicity, let us focus on \cref{thm:Amp:IT:Inf} in the case that $\delta=0$. Two distributions that are $(\eps,0)$-log-ratio close, may have statistical distance $\eps$, and so, a channel with $(\eps,0)$-leakage, can only be assumed to have $(0,\eps)$-leakage (when measuring leakage in statistical distance). Nevertheless, in contrast to \cite{wullschleger2009oblivious}, \cref{thm:Amp:IT:Inf} allows the leakage parameter $\eps$ to be \emph{larger} than the agreement parameter $\alpha$.\footnote{\label{fn:example}To make this more concrete, consider the following channel $C=((\VA,\OA),(\VB,\OB))$: $\OA \from U_{1/2}$, $\OB \from \OA \xor U_{1/2-\alpha}$, $\VA \from \OB \xor U_{1/2-\eps}$, $\VB \from \OA \xor U_{1/2-\eps}$ (where $U_p$ denotes a biased coin which is one with probability $p$, and the three ``noise variables'' are independent). This channel is balanced, has $\alpha$-agreement, and $(O(\eps),0)$-leakage. However, if we were to measure leakage using statistical distance, then we would report that it has $(0,O(\eps))$-leakage. We are assuming that $\eps > \alpha$, and it will be critical that leakage is measured by log-ratio distance, as we do not know how to amplify leakage that is measured by statistical distance in this range.}

The above can be interpreted as saying that when the leakage is ``well behaved'' (that is the $\delta$ parameter in log-ratio distance is sufficiently small), OT can be obtained even from a channel whose leakage $\eps$ is \emph{much larger} than the agreement $\alpha$. This property will be the key for our applications in \cref{sec:intro:dp}.

\paragraph{Triviality of channels with large leakage.}

We now observe that the relationship between $\eps$ and $\alpha$ in \Cref{thm:Amp:IT:Inf} is best possible (up to constants). Namely,  a channel with agreement that is asymptotically smaller than the one allowed in \cref{thm:Amp:IT:Inf} does not necessarily yield OT.

\begin{theorem}[Triviality of channels with large leakage, informal]
\label{thm:triviality:IT:Inf}
There exists a constant $c_2>0$, such that the following holds for every $\eps>0$: there exists a two-party protocol (with no inputs) that when it ends, party $\Pc \in \ABc$  outputs $\OP$ and sees view $\VP$, and the induced channel $C=((\VA,\OA),(\VB,\OB))$ has $(c_2 \cdot \eps^2)$-agreement and $(\eps,0)$-leakage. \Inote{changed from $\alpha$-agreement for $\alpha >c_2 \cdot \eps^2$. Easier to read and good enough for the intro, TMO. Did the same for other triviality statements}
\end{theorem}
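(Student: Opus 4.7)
The plan is to exhibit a plain-model, no-input protocol whose induced channel meets the claimed parameters. Consider the following protocol, parameterized by a small $\gamma>0$ to be chosen:
\begin{enumerate}
\item Alice samples $a \gets \zo$ uniformly and sends $a$ to Bob.
\item Alice independently samples a biased coin $r_A \in \zo$ with $\Pr[r_A = 0] = \tfrac12 + \gamma$ and outputs $\OA := a \oplus r_A$; her view is $\VA = (a, r_A)$.
\item Bob independently samples $r_B \in \zo$ from the same biased distribution and outputs $\OB := a \oplus r_B$; his view is $\VB = (a, r_B)$.
\end{enumerate}

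Since $r_A, r_B$ are independent, $\Pr[\OA = \OB] = \Pr[r_A = r_B] = (\tfrac12+\gamma)^2 + (\tfrac12-\gamma)^2 = \tfrac12 + 2\gamma^2$, so the channel has agreement $\alpha = 2\gamma^2$. For the leakage, note that the public bit $a$ is independent of the event $E = \{\OA = \OB\} = \{r_A = r_B\}$, so the log-ratio distance between $\VA \mid E$ and $\VA \mid \overline{E}$ reduces to the corresponding distance for $r_A$ alone; by symmetry the same reduction handles Bob's view. A direct Bayes computation then yields
\[
\frac{\Pr[r_A = 0 \mid E]}{\Pr[r_A = 0 \mid \overline{E}]} = \frac{(1+2\gamma)^2}{1+4\gamma^2}, \qquad \frac{\Pr[r_A = 1 \mid E]}{\Pr[r_A = 1 \mid \overline{E}]} = \frac{(1-2\gamma)^2}{1+4\gamma^2}.
\]
For $\gamma$ in a small neighborhood of $0$, both ratios lie in $[e^{-C\gamma}, e^{C\gamma}]$ for some absolute constant $C>0$. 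Setting $\gamma := \eps/C$ therefore gives log-ratio leakage at most $\eps$, i.e., $(\eps,0)$-leakage, and agreement $\alpha = 2\gamma^2 = (2/C^2)\eps^2$, so the theorem holds with $c_2 := 2/C^2$.

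The key design choice --- and the main subtle point I would have to get right --- is splitting the ``correlation source'' between the two parties' private randomness. If instead a single party generated the correlation between the outputs (e.g., Alice sends $\OA$ and Bob outputs a noisy copy), that party would learn $E$ exactly and incur infinite log-ratio leakage. In the protocol above, each party's posterior on $E$ is moved only by its \emph{own} biased coin, which shifts the conditional distribution by a multiplicative factor of just $1 \pm O(\gamma)$. This is the ``square-root'' phenomenon that yields agreement $\Theta(\gamma^2)$ for log-ratio leakage $O(\gamma)$, and it matches the positive-direction threshold of \cref{thm:Amp:IT:Inf} up to constants. For $\eps$ large enough that the linearization is loose the statement becomes uninteresting (when $c_2\eps^2$ would exceed $\tfrac12$, the trivial protocol in which both parties output $0$ gives agreement $\tfrac12$ and zero leakage), so the constant $c_2$ can be chosen uniformly.
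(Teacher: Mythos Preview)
Your proof is correct and is essentially the randomized-response construction the paper alludes to (the paper defers the full proof to the long version but explicitly says it ``uses a variant of the well-known randomized response approach of \citet{W65a}''). Your protocol---a shared public reference bit $a$ together with independent biased noise $r_\Ac,r_\Bc$ at each party---is exactly that variant, and your computation of the agreement $2\gamma^2$ and the pointwise log-ratio $(1\pm 2\gamma)^2/(1+4\gamma^2)\in[e^{-C\gamma},e^{C\gamma}]$ is right; since $(\eps,0)$-log-ratio closeness is a pointwise condition, checking singletons of $\VP=(a,r_\Pc)$ suffices and the $a$-component cancels as you note.
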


Together, the two theorems say that our characterization of ``weak-OT'' using agreement $\alpha$ and $(\eps,0)$-log-ratio leakage has a ``threshold behavior'' at $\alpha \approx \eps^2$: if $\alpha \ge c_1 \cdot \eps^2$ then the channel yields OT, and if $\alpha \le c_2 \cdot \eps^2$ then such a channel can be simulated by a two-party protocol with no inputs (and thus cannot yield OT with information theoretic security). The proof of  \cref{thm:triviality:IT:Inf} uses a variant of the well-known randomized response approach of \citet{W65a}.

\subsubsection{The Computational Setting}

We  consider  a no-input, Boolean output, two-party protocol $\pi=(\Ac,\Bc)$. Namely, both parties receive a security parameter $1^\pk$ as a common input, get no private input,  and both output one bit. We  denote the output of party $\Pc$ by $\OP_\pk$, and its view by $\VP_\pk$. In other words, an instantiation of $\pi(1^\pk)$ can be thought of as inducing a  channel $C_\pk=((\VA_\pk,\OA_\pk),(\VB_\pk,\OB_\pk))$. Similar to the information theoretic setting, protocol $\pi$ has $\alpha$-\Agr if for every $\pk \in \N$: $\pr{\OA_\pk = \OB_\pk} = 1/2 + \alpha(\pk)$.

\paragraph{Log-ratio leakage (protocols).}

We  extend the definition of log-ratio leakage to the computational setting (where adversaries are \ppt machines). We will use the simulation paradigm to extend the information theoretic definition to the computational setting.

\begin{definition}[Log-ratio leakage, protocols, informal]\label{def:RelLeakage:C:Inf}
A two-party no-input Boolean output protocol $\pi = (\Ac,\Bc)$ has {\sf  Comp-log-ratio leakage $(\eps,\delta)$, denoted $(\eps,\delta)$-\CLeak}, if  there exists an ``ideal channel'' ensemble  $\tC=\set{\tC_\pk=((\tVA_\pk,\tOA_\pk),(\tVB_\pk,\tOB_\pk))}_{\pk \in \N}$ such that the following holds:
\begin{itemize}
\item For every $\pk \in \N$: the channel  $\tC_\pk$ has $(\eps(\pk),\delta(\pk))$-leakage.	

\item For every   $\Pc \in \ABc$: the ensembles $\set{{\VP_\pk,\OA_\pk,\OB_\pk}}_{\pk \in \N}$ and $\set{\tVP_\pk,\tOA_\pk,\tOB_\pk}_{\pk \in \N}$ are \emph{computationally indistinguishable}.\footnote{In the technical section, we consider computational indistinguishability by both \emph{uniform} and \emph{nonuniform} \ppt machines. We ignore this issue in the introduction.}
\end{itemize}	
\end{definition}

\paragraph{Protocols of small log-ratio leakage imply OT.}
We  prove the following computational analogue of \cref{thm:Amp:IT:Inf} (the next Theorem is restated with precise notation in \Cref{thm:main:C}). 


\begin{theorem}[Amplification of protocols with small log-ratio leakage, informal]\label{thm:Amp:C:Inf} There exists a constant $c_1>0$ such that the following holds for every function $\eps,\delta,\alpha$ with $ c_1 \cdot \eps(\pk)^2 \le \alpha(\pk) < 1/8$, $\delta(\pk) \le  \eps(\pk)^2$ and $1/\alpha(\pk) \in \poly(\pk)$: a \ppt protocol that has  $\alpha$-agreement and $(\eps,\delta)$-\CLeak yields  OT (of computational security).
\end{theorem}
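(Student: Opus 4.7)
The plan is to lift the information-theoretic amplification of \Cref{thm:Amp:IT:Inf} to the computational setting via a simulation-plus-hybrid argument. Given a \ppt protocol $\pi=(\Ac,\Bc)$ with $\alpha$-agreement and $(\eps,\delta)$-\CLeak, let $\tC = \set{\tC_\pk}_{\pk \in \N}$ be the ideal channel ensemble guaranteed by \Cref{def:RelLeakage:C:Inf}. By assumption each $\tC_\pk$ has $\alpha(\pk)$-agreement and $(\eps(\pk),\delta(\pk))$-log-ratio leakage, which sits in the parameter regime of \Cref{thm:Amp:IT:Inf}. Unpacking that theorem yields a polynomial-time black-box transformation $\operatorname{Amp}$ that, using $t(\pk) \in \poly(1/\alpha(\pk),1/\eps(\pk)) \subseteq \poly(\pk)$ independent calls to any channel meeting those parameters, produces a statistically-secure OT protocol. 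The proposed construction is $\Pi^\ast \eqdef \operatorname{Amp}^{\pi(1^\pk)}$, i.e., every ideal channel call $\operatorname{Amp}$ would have made is replaced by a fresh execution of $\pi(1^\pk)$.

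Correctness of $\Pi^\ast$ (the honest parties producing the intended OT outputs) follows from correctness of $\operatorname{Amp}$ over the ideal channel, combined with a hybrid argument across the $t$ channel invocations. For security, I would show that no \ppt adversary corrupting party $\Pc \in \ABc$ can break sender or receiver privacy of $\Pi^\ast$. Let $\tPi^\ast \eqdef \operatorname{Amp}^{\tC_\pk}$ be the ideal-world analogue; by \Cref{thm:Amp:IT:Inf} it is a statistically-secure OT. A standard $t$-step hybrid over the channel invocations then shows that the joint distribution of the corrupted party's view in $\Pi^\ast$ is computationally indistinguishable from that in $\tPi^\ast$: each step replaces one real invocation of $\pi$ by one ideal sample of $\tC$, losing only the per-invocation indistinguishability advantage supplied by \Cref{def:RelLeakage:C:Inf}, and since $t \in \poly(\pk)$ the cumulative distinguishing advantage remains negligible. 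Combined with the statistical security of $\tPi^\ast$, this yields computational security of $\Pi^\ast$.

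The main obstacle is making this hybrid reduction go through cleanly, because \Cref{def:RelLeakage:C:Inf} only promises computational indistinguishability of the per-party triple $(\VP_\pk,\OA_\pk,\OB_\pk)$, not of the full joint transcript of $\pi$ that the honest party may refer to while running $\operatorname{Amp}$. The standard remedy is that inside a single hybrid step the distinguisher receives one sample $(\VP_\pk,\OA_\pk,\OB_\pk)$ (either from $\pi(1^\pk)$ or from $\tC_\pk$), internally generates the remaining $t-1$ invocations in the appropriate world, and runs the honest party's portion of $\operatorname{Amp}$ using those samples together with the outputs $\OA_\pk,\OB_\pk$; it then feeds the resulting adversarial view to the purported $\Pi^\ast$-attacker. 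This simulation is efficient because $\operatorname{Amp}$ is polynomial time, and it is faithful because the honest party's computation in $\operatorname{Amp}$ depends on each channel call only through its own view and output, while the corrupted party's dependence is exactly what \Cref{def:RelLeakage:C:Inf} controls. Once this simulation is in place, any \ppt adversary breaking $\Pi^\ast$ with non-negligible advantage yields either a \ppt distinguisher contradicting \Cref{def:RelLeakage:C:Inf} or a \ppt (hence statistical) adversary contradicting the information-theoretic security of $\tPi^\ast$ from \Cref{thm:Amp:IT:Inf}, completing the proof.
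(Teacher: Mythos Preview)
Your approach is essentially the paper's: apply the information-theoretic amplification $\Delta$ of \cref{thm:main:IT} to the real protocol $\pi$, and argue security by comparing to $\Delta^{\tC}$ via a hybrid over the channel calls. Two points are worth flagging.

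First, you assert that ``each $\tC_\pk$ has $\alpha(\pk)$-agreement'', but \cref{def:RelLeakage:C:Inf} does not say this. It follows, however, from the computational indistinguishability of $(\VP_\pk,\OA_\pk,\OB_\pk)$ and $(\tVP_\pk,\tOA_\pk,\tOB_\pk)$: checking whether $\OA=\OB$ is an efficient test, so the agreement of $\tC_\pk$ equals $\alpha(\pk)$ up to a negligible term. The paper makes exactly this observation before invoking the information-theoretic result.

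Second, and more substantively, your hybrid reduction has the distinguisher ``internally generate the remaining $t-1$ invocations in the appropriate world''. Generating real-world samples is fine since $\pi$ is \ppt, but you do not explain how the reduction efficiently produces samples from the ideal channel $\tC_\pk$, which need not be efficiently samplable. This is precisely why the paper's formal definition of computational leakage (\cref{def:compLeakNU}) equips the uniform distinguisher with oracle access to both $C_\pk$ and $\tC_\pk$; without this (or without passing to the nonuniform setting), the hybrid step you describe cannot be carried out. Once you use the oracle-augmented definition, your argument goes through and matches the paper's proof of \cref{thm:main:C}. The paper additionally has the protocol estimate $\alpha$ by sampling so that it need not be given as input, but for the informal statement you are proving this refinement is not essential.
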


\paragraph{Triviality of protocols with large leakage.}
An immediate corollary of \cref{thm:triviality:IT:Inf}  is the relationship between $\eps$ and $\alpha$ in \Cref{thm:Amp:C:Inf} is best possible (up to constants). 

\begin{corollary}[Triviality of protocols with large leakage, informal]
	\label{cor:triviality:comp:Inf}
	There exists a constant $c_2>0$, such that the following holds for every function $\eps$ with $\eps(\pk) > 0$: there exists a \ppt protocol that has $( c_2 \cdot \eps^2)$-agreement and $(\eps,0)$-leakage.
\end{corollary}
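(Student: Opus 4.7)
The plan is to lift the information theoretic construction of \Cref{thm:triviality:IT:Inf} into the computational framework in the most direct way possible. For each security parameter $\pk \in \N$, I would instantiate the no-input two-party protocol guaranteed by \Cref{thm:triviality:IT:Inf} with the leakage parameter set to $\eps(\pk)$, and let $\pi$ be the resulting protocol ensemble. Since the underlying randomized-response-style construction requires each party only to toss a constant number of biased coins (with bias determined by $\eps(\pk)$) and to send the outcomes, $\pi$ is \ppt provided $\eps$ is efficiently computable, a mild assumption implicit in the statement. By construction, for every $\pk$ the induced channel $C_\pk = ((\VA_\pk, \OA_\pk), (\VB_\pk, \OB_\pk))$ coincides exactly with the information theoretic channel supplied by \Cref{thm:triviality:IT:Inf} on input $\eps(\pk)$, and therefore has $(c_2 \cdot \eps(\pk)^2)$-agreement and $(\eps(\pk), 0)$-leakage for the same absolute constant $c_2$.

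The remaining task is to verify the computational leakage condition of \Cref{def:RelLeakage:C:Inf}. The natural choice is to take the ``ideal channel'' ensemble $\set{\tC_\pk}_{\pk \in \N}$ to be the protocol's own induced channel, \ie $\tC_\pk \eqdef C_\pk$. The first bullet of \Cref{def:RelLeakage:C:Inf} is then precisely the information theoretic $(\eps, 0)$-leakage bound established above, while the computational indistinguishability required by the second bullet is immediate because $(\VP_\pk, \OA_\pk, \OB_\pk)$ and $(\tVP_\pk, \tOA_\pk, \tOB_\pk)$ are identically distributed by definition (so the condition holds even against unbounded distinguishers). I expect no real obstacle here; the only minor technicality is ensuring efficient sampleability of the biased coins used in randomized response, which is standard once $\eps(\pk)$ is computable to $\poly(\pk)$-bit precision, and this reduces the corollary to a routine re-packaging of \Cref{thm:triviality:IT:Inf}.
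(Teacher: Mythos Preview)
Your proposal is correct and matches the paper's approach: the paper states this as ``an immediate corollary of \Cref{thm:triviality:IT:Inf}'' with no further argument, and your write-up simply spells out the obvious lifting---run the information-theoretic protocol with parameter $\eps(\pk)$ and take the ideal channel $\tC_\pk$ to be $C_\pk$ itself, so that both bullets of \Cref{def:RelLeakage:C:Inf} hold trivially.
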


\subsubsection{Application: Characterization of Two-Party Differentially Private Computation.}
\label{sec:intro:dp}

We use our results to characterize the complexity of differentially private two-party computation for the XOR function, answering the open question put  by  \cite{Goyal2016KMPS,HNOSS18}. The framework of differential privacy typically studies a ``one-party'' setup, where a ``curator'' wants to answer statistical queries on a database without compromising the privacy of individual users whose information is recorded as rows in the database \cite{DworkMNS2006}. In this paper, we are interested in \emph{two-party} differentially-private computation (defined in \cite{MironovPRV09}). This setting is closely related to the setting of secure function evaluation: the parties $\Ac$ and $\Bc$ have private inputs $x$ and $y$, and wish to compute some functionality $f(x,y)$ without compromising the privacy of their inputs. In secure function evaluation, this intuitively means that parties do not learn any information about the other party's input, that cannot be inferred from their own inputs and outputs. This guarantee is sometimes very weak: For example, for the XOR function $f(x,y)=x \xor y$, secure function evaluation completely reveals the inputs of the parties (as a party that knows $x$ and $f(x,y)$ can infer $y$). Differentially private two-party computation aims to give some nontrivial security even in such cases (at the cost of compromising the \emph{accuracy} of the outputs).

\begin{definition}[Differentially private computation \cite{MironovPRV09}]\label{def:DP:Inf}
	A \ppt two-party\\ protocol $\pi = (\Ac,\Bc)$ over input domain $\zn\times \zn$ is {$\eps$-DP}, if for every \ppt nonuniform machines $\Bc^\ast$ and $\Dc$, and every $x,x' \in \zn$ with $\Ham(x,x') =1$:   let $V^{\Bc^\ast}_\pk(x)$ be the view of $\Bc^\ast$ in a random execution of $(\Ac(x),\Bc^\ast)(1^\pk))$, then
	$$\pr{\Dc(V^{\Bc^\ast}_\pk(x)) = 1}  \leq   e^{\eps(\pk)} \cdot \pr{\Dc(V^{\Bc^\ast}_\pk(x')) = 1} + \negl(\pk),$$
	and the same hold for the secrecy of $\Bc$.
	
	Such a protocol is {\sf semi-honest $\eps$-DP}, if the above is only guaranteed for semi-honest adversaries (\ie for $\Bc^\ast = \Bc$).
\end{definition}

In this paper, we are interested in functionalities $f$, in which outputs are single bits (as in the case of the XOR function). In this special case, the accuracy of a protocol can be measured as follows:

\begin{definition}[accuracy]
	A \ppt two-party protocol $\pi = (\Ac,\Bc)$ over input domain $\zn\times \zn$ with outputs $\OA(x,y),\OB(x,y) \in \zo$ has \emph{perfect agreement} if for every $x,y \in \zn\times \zn$, and every $\pk \in \N$, in a random execution of the protocol $(\Ac(x),\Bc(y))(1^\pk)$, it holds that $\Pr[\OA(x,y)=\OB(x,y)]=1$.

The protocol implements a functionality $f$ over input domain $\zn\times \zn$ with \emph{$\alpha$-accuracy}, if for $\pk \in \N$, every $\Pc \in \ABc$, and every $x,y \in \zn\times \zn$, in a random execution of the protocol $(\Ac(x),\Bc(y))(1^\pk)$, it holds that $\Pr[\OP(x,y)=f^{\Pc}(x,y)]=\half + \alpha(\pk)$.
\end{definition}

A natural question is what assumptions are needed for two-party differentially private computation achieving a certain level of accuracy/privacy (for various functionalities). A sequence of works showed that for certain tasks, achieving high  accuracy requires one-way functions \cite{BeimelNO08, ChanSS12,MMPRTV11,GoyalMPS2013}; some cannot even be instantiated in the random-oracle model \cite{HaitnerOZ2016}; and some cannot be black-box reduced to key agreement  \cite{KhuranaMS2014}. See \cref{sec:intro:relatedWork} for more details on these results. In this work we fully answer the above question for the XOR function.

Consider the functionality $f_{\alpha}(x,y)$ which outputs $x \xor y \xor U_{1/2-
\alpha}$ (where $U_{1/2-\alpha}$ is an independent biased coin which is one with probability $1/2 - \alpha$). Assuming OT, there exists a two-party protocol that securely implement  $f_{\alpha}$, and this protocol is $\eps$-DP, for $\eps = \Theta(\alpha)$. This is the best possible differential privacy that can be achieved for accuracy $\alpha$. On the other extreme, an $\Theta(\eps^2)$-accurate, $\eps$-differential private,  protocol for computing  XOR  can  be constructed (with information theoretic security) using the so-called \textit{randomized response} approach of \citet{W65a}, as shown in \cite{GoyalMPS2013}. Thus, it is natural to ask whether OT \emph{follows} from $\alpha$-accurate, $\eps$-DP computation of XOR, for intermediate choices of $\eps^2 \ll \alpha \ll \eps$.  In this paper, we completely resolve this problem and prove that OT is implied for any intermediate $\eps^2 \ll \alpha \ll \eps$.

\paragraph{Differentially private XOR to OT, a tight characterization.}

\begin{theorem}\label{thm:DPXORInf}[Differentially private XOR to OT, informal] There exists a constant $c_1>0$ such that the following holds for every function $\eps,\alpha$ with $\alpha \ge c_1 \cdot \eps^2$ such that $1/\alpha \in \poly$: the existence of a perfect agreement, $\alpha$-accurate, semi-honest $\eps$-DP \ppt protocol  for computing  XOR implies OT (of computational security).
 \end{theorem}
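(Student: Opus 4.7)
The plan is to apply the amplification theorem \Cref{thm:Amp:C:Inf} to a suitable no-input protocol built from the given $\alpha$-accurate, semi-honest $\eps$-DP XOR protocol $\pi=(\Ac,\Bc)$. I first reduce to effectively single-bit inputs: restrict $\Ac$ and $\Bc$ to inputs of the form $u\cdot 0^{n-1}$ and $v\cdot 0^{n-1}$ with $u,v\in\zo$, so $x\oplus y$ depends only on $u\oplus v$ and neighboring choices of $u$ (or $v$) are at Hamming distance $1$. Both accuracy and semi-honest $\eps$-DP are preserved under this restriction, so I may assume from here on that inputs are single bits.

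Define the no-input protocol $\pi'$ as follows: $\Ac$ samples a uniform bit $x\in\zo$, $\Bc$ samples a uniform bit $y\in\zo$, they execute $\pi(x,y)$, and set channel outputs $o_\Ac := \OA\oplus x$ and $o_\Bc := y$. Perfect agreement gives $\OA=\OB$, and $\alpha$-accuracy gives $\pr{\OA = x\oplus y}=\tfrac{1}{2}+\alpha$; hence $\pr{o_\Ac = o_\Bc} = \pr{\OA\oplus x = y} = \tfrac{1}{2}+\alpha$. Since $x,y$ are uniform, both channel outputs are uniform in $\zo$, so $\pi'$ induces a balanced channel with $\alpha$-agreement.

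For leakage, consider $\Ac$ (the argument for $\Bc$ is symmetric by swapping the roles of $x,y$ and the definition of the channel outputs). Conditioning on $\VA=v$ fixes $x$, the internal randomness, and hence $o_\Ac$, turning $\set{o_\Ac=o_\Bc}$ into $\set{y=o_\Ac(v)}$. The semi-honest $\eps$-DP of $\pi$ states that for every \ppt distinguisher and every $y,y'\in\zo$ with $\Ham(y,y')=1$, $\VA(x,y)\rindist_{\eps,\negl}\VA(x,y')$. Applying Bayes' rule (and absorbing the negligible slack of the DP guarantee) gives $\pr{y=b\mid \VA=v}\in[\tfrac{1}{e^\eps+1},\tfrac{e^\eps}{e^\eps+1}]$ up to $\negl$, and hence the two conditional views $\VA\mid o_\Ac=o_\Bc$ and $\VA\mid o_\Ac\ne o_\Bc$ are $(O(\eps),\negl)$-log-ratio close against any \ppt distinguisher.

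The main obstacle is upgrading this to Comp-log-ratio leakage in the sense of \Cref{def:RelLeakage:C:Inf}, which requires exhibiting an ideal channel $\tC_\pk$ that is \emph{statistically} $(O(\eps),\negl)$-log-ratio leaky and computationally indistinguishable from the real channel induced by $\pi'$. I would construct $\tC_\pk$ via an efficient ``smoothing simulator'' that, for each sampled view, slightly reperturbs the opposite party's input so that the posterior satisfies the log-ratio bounds exactly; the perturbation is $\negl$ by computational DP, so computational indistinguishability from the real channel is preserved, while the ideal channel now meets the statistical log-ratio requirement. With $\pi'$ established as $\alpha$-agreement and $(O(\eps),\negl)$-\CLeak, and since the hypotheses $\alpha\ge c_1\eps^2$, $1/\alpha\in\poly$, and $\negl\le\eps^2$ for large $\pk$ hold (after readjusting $c_1$ to absorb the $O(\cdot)$), \Cref{thm:Amp:C:Inf} yields an OT protocol of computational security.
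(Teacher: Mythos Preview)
Your overall strategy---turn the $\eps$-DP XOR protocol into a no-input channel with $\alpha$-agreement and $O(\eps)$-log-ratio leakage, then invoke \Cref{thm:Amp:C:Inf}---is exactly the paper's route, and your channel construction (random $x,y$, outputs $\OA\oplus x$ and $y$) is a close variant of the paper's \cref{pro:DPXOR:IT}. In the information-theoretic setting your leakage sketch is basically right: conditioning on $\VA=v$ reduces $\{o_\Ac=o_\Bc\}$ to an event about $y$, and pointwise $\eps$-DP plus the (automatic) bound $\tfrac{1+2\alpha}{1-2\alpha}\le e^\eps$ gives $(2\eps,0)$-leakage, matching \cref{clm:DPXORtoOT:IT:Leak}. (Your ``by symmetry'' for $\Bc$ is not literally a symmetry since your channel is asymmetric, but the analogous argument---conditioning on $\VB=v$ turns $\{o_\Ac=o_\Bc\}$ into an event about $x$, and $\eps$-DP of $\VB$ in $x$ applies---does go through.)

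The genuine gap is the computational step. Your Bayes' rule argument is not valid under \emph{computational} DP: the guarantee $\pr{\Dc(\VA(x,y))=1}\le e^\eps\pr{\Dc(\VA(x,y'))=1}+\negl$ bounds distinguisher probabilities over events, not pointwise likelihood ratios $\pr{\VA=v\mid y}$, so you cannot conclude $\pr{y=b\mid\VA=v}\in[\tfrac{1}{e^\eps+1},\tfrac{e^\eps}{e^\eps+1}]$ ``up to $\negl$'' from it. Your proposed ``smoothing simulator'' that ``reperturbs the opposite party's input'' does not construct an ideal channel: you have not specified a distribution on views, and there is no reason the result would be statistically $(\eps,\delta)$-log-ratio leaky while remaining computationally indistinguishable from the real channel. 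The paper closes this gap by invoking the result of \citet{MironovPRV09} that for \emph{single-bit} inputs the indistinguishability and simulation notions of computational DP coincide (\cref{lem:CDPtoSimDP}): this furnishes, for each party, an $\eps$-DP (information-theoretic) functionality $\tf_\pk$ whose view is computationally indistinguishable from the real one. Running your channel construction on $\tf$ gives the required ideal channel $\tC$ (statistically $(2\eps,0)$-leaky by the IT argument you already have), and indistinguishability of $f$ from $\tf$ carries over to indistinguishability of the induced channels, yielding $(O(\eps),0)$-\CLeak as needed for \Cref{thm:Amp:C:Inf}. Without that IND-to-SIM step (or an explicit substitute), the proof is incomplete.
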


The above  improves upon \citet{Goyal2016KMPS}, who gave a positive answer if the accuracy $\alpha$ is the best possible: if $\alpha \ge c \cdot \eps$ for a constant $c$. It also improves  (in the implication) upon  \citet{HNOSS18}, who showed that   $c\cdot \eps^2$-correct $\eps$-DP  XOR  implies (infinitely-often) key agreement. Finally, our result allows $\eps$ and $\alpha$ to be function of the security parameter (and furthermore, allow $\alpha$ and $\eps$ to be polynomially small in the security parameter) whereas previous reductions \cite{Goyal2016KMPS,HNOSS18} only hold for constant values of  $\eps$ and $\alpha$. Our characterization is tight as OT does not follow from protocols with  $\alpha \in o( \eps^2)$.

\begin{theorem}[Triviality of  differentially private XOR with large leakage. Folklore, see  \cite{GoyalMPS2013}]\label{prop:triviality:DP:Inf}
	There exists a constant $c_2>0$ such that for every functions $\eps$ there exists a  \ppt protocol  for computing  XOR with information-theoretic $\eps$-DP, perfect agreement and accuracy $c_2\cdot \eps^2$.~\footnote{The protocol is  the randomized response one, and the proof is very similar to that of  \cref{thm:triviality:IT:Inf} (see  \cref{sec:ChannelsAmp}).}
\end{theorem}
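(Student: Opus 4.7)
The plan is to instantiate the randomized response protocol of \citet{W65a}, following the same route as the proof of \cref{thm:triviality:IT:Inf}. Concretely, for inputs $x,y \in \zn$ consider the functionality $f(x,y) = (\bigoplus_i x_i) \xor (\bigoplus_i y_i)$ and fix $\eps' \eqdef (e^{\eps}-1)/(e^{\eps}+1)$. The protocol I would use instructs party $\Ac$ to locally compute $a = \bigoplus_i x_i$ and to send a single bit $\tilde{a}$ drawn so that $\tilde{a}=a$ with probability $(1+\eps')/2$ and $\tilde{a}=1-a$ otherwise; symmetrically, $\Bc$ sends a noisy $\tilde{b}$ of $b = \bigoplus_i y_i$; both parties finally output $\tilde{a} \xor \tilde{b}$.

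Verifying the three required properties should then be straightforward. First, perfect agreement holds by construction, since both parties produce the same bit from the public transcript. Second, to establish information-theoretic $\eps$-DP I would observe that the only information about $x$ ever leaving $\Ac$ is the single bit $\tilde{a}$, whose conditional distribution depends on $x$ only through the parity $a$. For neighboring $x,x' \in \zn$ with $\Ham(x,x')=1$, the parities $a,a'$ differ by exactly one, so for every $c \in \zo$,
\[
\frac{\pr{\tilde{a}=c \mid x}}{\pr{\tilde{a}=c \mid x'}} \;\le\; \frac{(1+\eps')/2}{(1-\eps')/2} \;=\; e^{\eps},
\]
by the choice of $\eps'$. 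Because this single bit is the only message $\Bc^\ast$ ever sees, the bound propagates to the full view of any (even unbounded) $\Bc^\ast$ and any distinguisher $\Dc$, yielding the $\eps$-DP guarantee for $\Ac$'s input; symmetric reasoning handles $\Bc$'s input.

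For accuracy, a routine calculation will give $\pr{\tilde{a} \xor \tilde{b} = a \xor b} = \half + \eps'^2/2$, since $\tilde{a} \xor \tilde{b}$ equals the true parity iff the two independent noise bits ``agree'' (both preserve or both flip their bits). For $\eps \le 1$, $\eps' = \tanh(\eps/2) \ge \eps/3$, so the accuracy is at least $\eps^2/18$; for $\eps \ge 1$, $\Ac$ and $\Bc$ can simply exchange their inputs in the clear and obtain perfect accuracy, which trivially exceeds $c_2 \cdot \eps^2$ for any sufficiently small $c_2$. Taking $c_2 = 1/18$ therefore suffices in both regimes. No substantive obstacle arises; the only mild care needed is in bookkeeping the relation between $\eps$ and $\eps'$ and in observing that the single-bit transcript makes the DP analysis trivial even against malicious unbounded adversaries.
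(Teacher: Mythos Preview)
Your approach (randomized response) is exactly the one the paper has in mind, and your analysis of perfect agreement, $\eps$-DP, and the accuracy $\tfrac12+\eps'^2/2$ with $\eps'=\tanh(\eps/2)$ is correct.

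There is one genuine error: your handling of the case $\eps \ge 1$. Exchanging the inputs in the clear is \emph{not} $\eps$-DP for any finite $\eps$---the view of $\Bc^\ast$ then determines $a$ exactly, so for neighboring $x,x'$ with opposite parities the likelihood ratio is infinite, not $e^\eps$. So that branch of your argument fails.

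The good news is that the case-split is unnecessary. The randomized response protocol you describe is $\eps$-DP for \emph{every} $\eps>0$ (your ratio computation $\frac{1+\eps'}{1-\eps'}=e^{\eps}$ holds for all $\eps$), and its accuracy is $\tanh^2(\eps/2)/2$ for all $\eps$. The statement of the theorem is only meaningful in the regime where $c_2\eps^2 \le 1/2$ (accuracy cannot exceed $1/2$), and on any such bounded range of $\eps$ the ratio $\tanh(\eps/2)/\eps$ is bounded below by a positive constant, so a single $c_2$ suffices. Alternatively---and this is what the paper implicitly does---simply note that the interesting regime is small $\eps$, where $\eps'=\Theta(\eps)$ and hence the accuracy is $\Theta(\eps^2)$; this is all that is needed to witness the tightness of \cref{thm:DPXORInf}.
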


\paragraph{Perspective.} Most of the work in differentially private mechanisms/protocols is in the information theoretic setting (using the addition of random noise). There are, however, examples where using computational definitions of differential privacy together with cryptographic assumptions, yield significantly improved accuracy and privacy compared to those that can be achieved in the information theoretic setting (\eg  the inner product and the Hamming distance functionalities~\cite{MMPRTV11}, see more references in the related work section below).
Understanding the minimal assumptions required in this setting is a fundamental open problem. In this paper, we completely resolve this problem for the special case of the XOR function. We stress that the XOR function is the canonical example of a function $f(x,y)$ where the security guarantee given by secure function evaluation is very weak. More precisely, for $f(x,y)=x \oplus y$, the security guaranteed by secure function evaluation is meaningless, and the protocol in which both parties reveal their private inputs is considered secure. Differential privacy can be used to provide a meaningful definition of security in such cases, and we believe that the tools that we developed for the XOR function, can be useful to argue about the minimal assumptions required for other functionalities. As a first step, we provide a sufficient condition under which our approach applies to other functionalities $g:\zo^n \times \zo^n \to \zo$.

\paragraph{Extending the result to any function that is not monotone under relabeling.}
We can use our results on the XOR function to achieve OT from differentially private, and sufficiently accurate computation of a wide class of functions that are not ``monotone under relabeling''. A function $g:\zo^n \times \zo^n \to \zo$ is monotone under relabeling if there exist two bijective functions $\sigma_x,\sigma_y:[2^n]\to \zn$ such that for every $x \in \zn$ and $i\leq j \in [2^n]$:
	\begin{align*}
	g(x,\sigma_y(i))\leq g(x,\sigma_y(j)),
	\end{align*}
		and, for every $y \in \zn$ and $i\leq j \in [2^n]$:
	\begin{align*}
		g(\sigma_x(i),y)\leq g(\sigma_x(j),y).
	\end{align*}
We observe that every function $g$ that is not monotone under relabeling has an ``embedded XOR'', meaning that there exist $x_0,x_1,y_0,y_1 \in \zo^n$ such that for every $b,c \in \zo$, $g(x_b,y_c)=b \xor c$. This gives that a two-party protocol that computes $g$  can be used to give a two-party protocol that computes XOR (with some losses in privacy) and these yield OT by our earlier results.
Precise details are given in \cref{sec:DPXORtoOTl}.


\subsection{Related Work}\label{sec:intro:relatedWork}



\paragraph{Information-theoretic OT.}
Oblivious transfer protocols are also widely studies in their information theoretic forms \cite{prabhakaran2014assisted,CrepeauK88,crepeau1997efficient, nascimento2008oblivious, wolf2004zero}. In this form, and OT is simply a pair of jointly distributed random variable $(V_\Ac,V_\Bc)$ (a ``channel''). A pair of  unbounded parties $(\Ac,\Bc)$, having access to independent samples from this pair (from  each sample $(v_\Ac,v_\Bc)$, party $\Pc$ gets the value $v_\Pc$).  Interestingly, in the information theoretic form, we do have  a ``simple'' notion of weak OT, that is complete: such a pair can either be used to construct full-fledged (information theoretically secure) OT,   or is trivial---there exists a protocol that generates these views.  Unfortunately, these reductions are inherently inefficient: the parties wait till an event that might be of  arbitrary small  probability to occur, and thus, at least not in the most general form,   cannot be translated into  the computational setting.

\paragraph{Hardness amplification.}
Amplifying the security  of  weak primitives into ``fully secure'' ones is an important paradigm in cryptography as well as  other key fields in theoretical computer science. Most notable such works in cryptography  are amplification of one-way functions \cite{Yao82,GoldreichKL93,HaitnerHR11}, key-agreement protocols \cite{Holenstein06}, and interactive arguments \cite{HastadPPW10,Haitner13}. Among the above, amplification of key-agreement  protocols (KA) is the most similar   to the OT amplification we consider in this paper. In particular,  we do have  a ``simple'' (non distributional) notion  of  weak KA \cite{Holenstein06}. This is done by reduction to the information theoretic notion of key-agreement.  What enables this reduction to go through, is that unlike the case of the information theoretic  OT, the amplification of  information theoretic KA is efficient, since it only use the designated output of the (weak) KA (and not the parties' view).

\paragraph{Minimal assumptions for differentially  private symmetric computation.}
An accuracy parameter $\alpha$ is \emph{trivial}  \wrt  a given functionality $f$ and differential privacy parameter $\eps$, if a protocol computing $f$ with such accuracy and privacy exists information theoretically (\ie with no  computational assumptions). The accuracy  parameter  is called \emph{optimal},  if it matches the bound achieved in  the client-server model.
Gaps between the trivial and optimal accuracy parameters have been shown in the multiparty case for count queries~\cite{BeimelNO08, ChanSS12}  and in the two-party case for inner product and Hamming distance functionalities~\cite{MMPRTV11}. 
\cite{HaitnerOZ2016}  showed that the same holds also when a random oracle is available to the parties,   implying that non-trivial  protocols (achieving non-trivial  accuracy)  for computing  these functionalities    cannot be black-box reduced to one-way functions.  

\cite{GoyalMPS2013} initiated the study of Boolean  functions, showing  a gap  between the optimal  and trivial accuracy   for the XOR or the AND functionalities, and that non-trivial  protocols  imply  one-way functions. \cite{FairouzSV2014} showed that non-interactive randomised response is optimal among all the information theoretic protocols.  \cite{KhuranaMS2014} have shown that optimal protocols for computing the XOR or AND, cannot be black-box reduced to key agreement. 

\cite{Goyal2016KMPS} showed that an optimal protocol (with best possible parameters) computing the XOR can be viewed as a form of weak OT, which according to \citet{wullschleger2009oblivious} yields full fledged OT. Whereas for our choice of parameters the security guarantee is too weak, and it is essential that we correctly amplify the security.

Very recently, \cite{HNOSS18} showed that a non-trivial  protocol for computing XOR (\ie accuracy better than $\eps^2$) implies infinitely often key-agreement protocols. Their reduction, however, only holds for constant value of $\eps$, and is non black box.
Finally, \cite{beimel1999all,HarnikNRR2006} gave a criteria that proved the necessity of OT for computationally secure function evaluation, for a select class of functions. 

\subsection*{Paper Organization}
Due to space limitations, some of the technical details appear in the full version of this paper.
In \Cref{sec:technique} we give an overview of the main ideas used in the proof. In \Cref{sec:Preliminaries} we give some preliminaries and state some earlier work that we use. In \Cref{sec:ChannelsAmp} we give our amplification results, that convert protocols with small log-ratio leakage into OT. The proofs of our results on two-party differentially private computation of the XOR function, and on functions that are not monotone under relabeling omitted from this version.

\section{Our Technique}\label{sec:technique}

In this section we give a high level overview of our main ideas and technique.

\subsection{Usefulness of Log-Ratio Distance}\label{sec:technique:log ratio}

Recall that the  \emph{leakage} we considered is measured using \emph{log-ratio distance}, and not \emph{statistical distance}. We  survey some advantages of log-ratio distance over statistical distance.

As is common in ``hardness amplification'', our construction will apply the original channel/protocol many times (using fresh randomness). Given a distribution $X$, let $X^{\ell}$ denote the distribution of $\ell$ independent samples from $X$. A natural question is how does the distance between $X^\ell$ and $Y^\ell$ relate to the distance between $X$ and $Y$. For concreteness, assume that $\SD(X,Y)=\eps$ (where $\SD$ denotes statistical distance) and that we are interested in taking $\ell={c}/{\eps^2}$ repetitions where $c>0$ is a very small constant. Consider the following two examples (in the following we use $U_p$ to denote a coin which is one with probability~$p$):
\begin{itemize}
\item $X_1=U_0$ and $Y_1=U_{\eps}$. In this case, $\SD(X_1^{\ell},Y_1^{\ell})=1-(1-\eps)^{\ell} \approx 1-e^{-c/\eps}$ which approaches one for small $\eps$.

\item $X_2=U_{1/2}$ and $Y_2=U_{1/2+\eps}$, in this case $\SD(X_2^{\ell},Y_2^{\ell})=\eta$, where $\eta \approx \sqrt{c}$ is a small constant that is independent of $\eps$, and can be made as small as we want by decreasing $c$.
\end{itemize}

There is a large gap in the behavior of the two examples. In the first, the distance is very close to one, while in the second it is very close to zero. This means that when we estimate $\SD(X^{\ell},Y^{\ell})$ in terms of $\SD(X,Y)$, we have to take a \emph{pessimistic} bound corresponding to the first example, which is far from the truth in case our distributions behave like in the second example.

Loosely speaking, log-ratio distance provides a ``fine grained'' view that distinguishes the above two cases. Note that $X_2 \rindist_{O(\eps),0} Y_2$, whereas there is no finite $c$ for which $X_1 \rindist_{c,0} Y_1$. For $X,Y$ such that  $X \rindist_{\eps,\delta} Y$ for $\delta=0$ (or more generally, for $\delta \ll\eps$)  we get the behavior of the second example under repetitions, yielding a better control on the resulting statistical distance. More precisely, it is not hard to show that if $X \rindist_{\eps}Y$ then for  $\ell=c/\eps^2$ it holds that  $X^{\ell} \sindist_{O(\sqrt{c \cdot ln(1/c)})} Y^{\ell}$.\footnote{
Let us explain the intuition behind the above  phenomenon. The maximum value of both $L_{X||Y}(s)=\log \frac{\Pr[X=s]}{\Pr[Y=s]}$ and $L_{Y||X}(s)=\log \frac{\Pr[Y=s]}{\Pr[X=s]}$, is at most $\eps$. The relative entropy (\aka KL divergence) $D(X||Y)$ measures the expectation of $L_{X||Y}(s)$ according to $s \from X$, and is therefore smaller than $\eps$. But in fact it is easy to show that both $D(X||Y)$ and $D(Y||X)$ are bounded by $\eps \cdot (e^{\eps}-1)$ which is approximately $\eps^2$ for small $\eps$.  It follows that  $D(X^\ell || Y^\ell) = \ell \cdot D(X||Y) \approx \ell \eps^2 = c$. In other words, the expectation of $L_{X^{\ell}||Y^{\ell}}= D(X^\ell || Y^\ell) = c$. The random variable $L_{X^{\ell}||Y^{\ell}}$ can be seen as the sum of $\ell$ independent copies of $L_{X||Y}$, and we know that each of these variables lies in the interval $[-\eps,\eps]$. By a standard Hoeffding bound it follows that the probability that $L_{X||Y}$ deviates from the expectation $c$, by say some quantity $\eta$ is at most $e^{-\Omega(\frac{\eta^2}{\ell \eps^2})}=e^{-\Omega(\eta^2/c)}$ and this means that we can choose $\eta$ to be roughly $\sqrt{c \cdot \ln(1/c)}$ and obtain that the probability of deviation is bounded by $\eta$. Overall, this gives that $X^{\ell} \rindist{\eta+c,\eta} Y^{\ell}$, meaning that except for an $\eta$ fraction of the space, the ratio is bounded by $\eta+c$, and therefore, the statistical distance is also bounded by $O(\eta+c)=O(\sqrt{c \cdot \ln(1/c)})$.} A more precise statement and proof are given in \cref{thm:rep}.\footnote{This phenomenon is the rationale behind the differential privacy boosting result of \cite{DworkR2016}, and can be derived from the proof in that paper.  In our setting, however, the proof is straightforward as outlined here, and shown in the proof of \cref{thm:rep}.}

%
%

\subsection{The Amplification Protocol}

In this section we give a high level overview of the proof of \cref{thm:Amp:IT:Inf}. The starting point is a channel $C=((\VA,\OA),(\VB,\OB))$ that has $\alpha$-\Agr, and $(\eps,\delta)$-\Leak. (A good example to keep in mind is the channel from \cref{fn:example}). For simplicity of exposition, let us assume that $\delta=0$ (the same proof will go through if $\delta$ is sufficiently small). Our goal is to obtain OT if $\alpha \ge c_1 \cdot \eps^2$ for some constant $c_1$, which we will choose to be sufficiently large.

\newcommand{\Wul}{{\mathsf{Wul}}}
\citet{wullschleger2009oblivious} showed that a balanced channel with
$\alpha'$-agreement, and $(0,\eps')$-leakage (that is $\eps'$ leakage in statistical distance) implies OT if $\eps' \le c_\Wul \cdot (\alpha')^2$ for some constant $c_\Wul>0$. Thus, we are looking for a protocol, that starts with a channel that has $(\eps,0)$-leakage and $\alpha$-agreement, where $\eps$ is \emph{larger} than $\alpha$, and produces a channel with $(0,\eps')$-leakage, and $\alpha'$-agreement where $\eps'$ is \emph{smaller} than $\alpha'$. We will use the following protocol  achieving $\alpha' \ge 1/5$ and an arbitrarily small constant $\eps'>0$.\footnote{Similar protocols were used in the context of key-agreement amplification \cite{bennett1995generalized,maurer1993secret}.}

\begin{protocol}[$\Delta_\ell^C = (\tAc,\tBc)$, amplification of log-ratio leakage]~\label{proto:technique:OT:IT}
	\item[Channel:] $C=((\VA,\OA),(\VB,\OB))$.
	\item[Prameter:]  Number of samples $\ell$.
	
	\item[Operation:] Do until the protocol produces output:
		\begin{enumerate}
		\item The parties activate the channel $C$ for  $\ell$ times. Let $\oOA$ and $\oOB$ be the ($\ell$-bit) outputs.

		\item $\tAc$ sends the (unordered) set  $\cs = \{\oOA, \oOA  \xor 1^\ell\}$ to $\tBc$.

		\item $\tBc$ informs $\tAc$ whether $\oOB \in \cs$.
		
		If positive,  party $\tAc$ outputs zero if $\oOA$ is the  (lex.)  smallest element in $\cs$, and one otherwise.  Party  $\tBc$ does the same \wrt $\oOB$.  \ \ (And the protocol halts.)

	\end{enumerate}
\end{protocol}

Let $\Delta = \Delta_\ell^C$ for  $\ell=1/4\alpha$. We first observe that $\Delta$ halts  in a given iteration  iff the event $E=\set{\oOA \oplus \oOB \in \set{0^{\ell},1^{\ell}}}$ occurs. Note that $\Pr[E] \ge 2^{-\ell}$, and thus the expected running time of $\Delta$   is $O(2^{\ell})=2^{O(1/\alpha)}$ (jumping ahead, the expected running time can be improved to $\poly(1/\alpha)$, see Section \ref{subsec:Efficient}).

We also observe that the outputs of the two parties agree, iff in the final (halting) iteration it holds that $\oOA=\oOB$. Thus, the agreement of $\Delta$  is given by:

\begin{align*}
\Pr[\oOA=\oOB|E] &= \frac{(\half+\alpha)^{\ell}}{(\half+\alpha)^{\ell} + (\half-\alpha)^{\ell}} = \left(1+\left(\frac{\half-\alpha}{\half+\alpha}\right)^{\ell} \right)^{-1} \\ &\approx  \frac{1}{1+e^{-4\alpha \ell}} \ge \frac{1}{1+e^{-1}} \ge  \half + \alpha',
\end{align*}
for $\alpha' \ge 1/5$.

\newcommand{\Final}{\operatorname{final}}
In order to understand the leakage of $\Delta$, we examine the views of the parties in  the \emph{final}  iteration of $\Delta$ (it is clear that the views of the previous iteration yields no information). Let us denote these part of a view $v$ by $\Final(v)$. We are interested in understanding the log-ratio distance between $\Final(\tVA|_{\tOA =\tOB})$ and $\Final(\tVA|_{\tOA \ne\tOB})$. Observe  that  $\Final(\tVA|_{\tOA =\tOB})$ is a (deterministic) function of  $\ell$ independent samples from $\VA|_{\OA=\OB}$ (\ie  the function that appends $\{\oOA, \oOA  \xor 1^\ell\}$ to the view), and  $\Final(\tVA|_{\tOA \ne\tOB})$ is the \emph{same}  deterministic function of $\ell$ independent samples from $\VA|_{\OA \ne \OB}$. Thus, by data processing,  it suffices to  bound  the  distance of  $\ell$ independent samples from $\VA|_{\OA=\OB}$ from   $\ell$ independent samples from $\VA|_{\OA\neq\OB}$. By assumption, $C$ has $(\eps,0)$-\Leak, which means that
\[ \VA|_{\OA=\OB} \rindist_{\eps,0} \VA|_{\OA \ne \OB}. \]
In the previous section we showed that by choosing a sufficiently small constant $c>0$ and taking $\ell=c/\eps^2$ repetitions of a pair of distributions with $(\eps,0)$-log ratio distance, we obtain two distributions with statistical distance that is an arbitrary small constant $\eps'>0$. Here we consider $\ell=1/(4\alpha)=1/(4 c_1\cdot  \eps^2)$ repetitions, and therefore
 \[ \Final(\tVA|_{\tOA =\tOB})\sindist_{\eps'} \Final(\tVA|_{\tOA \ne\tOB}). \]
By picking $c_1$ to be sufficiently large, we can obtain that the leakage in $\Delta$ is $\eps' \le c_\Wul \cdot (\alpha')^2$ as required.

\subsubsection{Efficient Amplification}\label{subsec:Efficient}
The (expected) running time of $\Delta_\ell$ is $2^{O(\ell)}$ that for the above  choice of $\ell=\Theta(1/\alpha)$ equals  $2^{O(1/\alpha)}$. To be useful in a setting when the running time is  limited, \eg in the computational setting, this dependency restricts us to ``large`` values of  $\alpha$.  Fortunately, \cref{proto:technique:OT:IT} can be modified so that its (expected)  running time is only polynomial in $1/\alpha$.

Intuitively, rather than making  $\ell$ invocations of $C$ at once, and hope that the tuple of invocations  happens to be \emph{useful}: $\oOA \oplus \oOB \in \set{0^{\ell},1^{\ell}}$, the efficient protocol   combines   smaller tuples of useful invocations, \ie $\oOA \oplus \oOB \in \set{0^{\ell'},1^{\ell'}}$, for some $\ell' < \ell$, into  a useful tuple of $\ell$ invocations. The advantage is that  failing to generate the smaller useful tuples, only  ``wastes'' $\ell'$ invocations of $C$. By recursively sampling the $\ell'$ tuples via the same approach, we get a protocol whose expected running time is $O(\ell^2)$ (rather than $2^{O(\ell)}$).

The actual protocol implements the above intuition in the following way: on parameter $d$, protocol $\Lambda_d$ mimics the interaction of the inefficient protocol $\Delta_{2^d}$ (\ie the inefficient protocols with sample parameter $2^d$). It does so  by using   $\Delta_2$ to combines the outputs of two of execution of $\Lambda_{d-1}$. Effectively,  this call to $\Delta_2$ combines the two $2^{d-1}$ useful tuples produced by $\Lambda_{d-1}$, into a single $2^{d}$ useful tuple.

Let $\Lambda_0^C =C$, and recursively define $\Lambda_d$, for $d>0$, as follows:
\begin{protocol}[$\Lambda_d^C = (\hAc,\hBc)$, efficient amplification of log-ratio leakage]~\label{proto:technique:OT:IT:Eff}
	\item[Channel:] $C$.
	\item[Prameter:] log  number of sample  $d$.
	
	\item[Operation:] The parties interact in $\Delta_2^{(\Lambda_{d-1}^C)}$.~\\
\end{protocol}
By induction, the expected running time of $\Lambda_{d}^C$ is $4^d$. A more careful analysis yields that the view of  $\Lambda_{d}^C$ can be \emph{simulated} by the view of $\Delta_{2^d}^C$. Indeed, there are exactly $2^d$ useful invocations of $C$ in an execution  of $\Lambda_{d}^C$: invocations whose value was not ignored by the parties, and their distribution is exactly the same as the $2^d$ useful invocations of $C$ in  $\Delta_{2^d}^C$.   Hence, using $\Lambda_d^C$ with $d= \log 1/4\alpha$, we get a protocol whose expected running time is polynomial in $1/\alpha$  and guarantees the same  level of agreement and security as of $\Delta_{1/4\alpha}$.

\subsection{The Computational Case}

So far, we considered information theoretic security. In order to prove \Cref{thm:Amp:C:Inf} (that considers security against \ppt adversaries) we note that
\cref{def:RelLeakage:C:Inf} (of computational leakage) is carefully set up to allow the argument of the previous section to be extended to the computational setting. Using the efficient protocol above, the reduction goes through as long as $\alpha$ is a noticeable function of  the security parameter.

\subsection{Two-Party Differentially Private XOR Implies OT}

In this section we explain the main ideas that are used in the proof of \Cref{thm:DPXORInf}. Our goal is to show that a perfect completeness, $\alpha$-accurate, semi-honest $\eps$-DP protocol for computing XOR, implies OT, if $\alpha \ge c \cdot \eps^2$ for a sufficiently large constant $c$. In order to prove this, we will show that such a protocol can be used to give a two-party protocol that has $\alpha$-agreement and (computational) $(\eps,0)$-leakage. Such a protocol yields OT by our earlier results.\footnote{We believe that our results extend to the case of $(\eps,\delta)$-differential privacy, as long as $\delta=o(\eps^2)$, and then we obtain $(\eps,\delta)$-leakage, which is sufficient to yield OT. Proving this requires a careful examination of some of the previous work (which was stated for $\delta=0$) and extending it to nonzero $\delta$, as well as a more careful analysis on our part. We will not do this in this paper.}

We remark that there are two natural definitions of ``computational differential privacy'' in the literature using either \emph{computational indistinguishability} or \emph{simulation} \cite{MironovPRV09}. \Cref{def:DP:Inf} is using indistinguishability, while for our purposes, it is more natural to work with simulation (as using simulation enables us to ```switch back and forth'' between the information theoretic setting and the computational setting). In general, these two definitions are not known to be equivalent. For functionalities like XOR, where the inputs of both parties are single bits, however, the two definitions are equivalent  by the work of \cite{MironovPRV09}. This means that when considering differential privacy of the XOR function, we can imagine that we are working in an information theoretic setting, in which there is a trusted party, that upon receiving the inputs $x,y$ of the parties, provides party $\Pc$, with its output $\OP$ and view $\VP$. We will use the following protocol to obtain a ``channel'' with $\alpha$-agreement and $(\eps,0)$-leakage.

\begin{protocol}[DP-XOR to channel]~\label{intro:pro:DPXOR:IT}
		\begin{enumerate}
		\item 	 $\Ac$ samples  $X \gets\zo$ and  $\Bc$ samples  $Y \gets\zo$.
		
		\item The parties apply the differentially private protocol for computing XOR, using inputs $X$ and $Y$ respectively, and receive outputs $\OA_{DP},\OB_{DP}$ respectively.
		
		\item $\Ac$ sends   $R\gets\zo$ to $\Bc$.
		
		\item $\Ac$ outputs $\OA=X \xor R$ and $\Bc$ outputs $\OB_{DP} \xor Y \xor R$.
	\end{enumerate}
\end{protocol}

The intuition behind this protocol is that if $\OB_{DP}=X \xor Y$, then $\OB=(X \xor Y) \xor Y \xor R = X \xor R = \OA$. This means that the channel induced by this protocol inherits $\alpha$-agreement from the $\alpha$-accuracy of the original protocol.
In  \cref{sec:ChannelsAmp} we show that this channel ``inherits'' log-ratio leakage of $(\eps,0)$ from the fact that the original protocol is $\eps$-DP.

\section{Preliminaries}\label{sec:Preliminaries}
\subsection{Notation}\label{sec:prelim:notation}
We use calligraphic letters to denote sets, uppercase for random variables and functions,  lowercase for values. 
For $a,b\in \R$,  let $a\pm b$ stand for the interval $[a-b,a+b]$. For $n\in \N$, let $[n] = \set{1,\ldots,n}$ and $(n) = \set{0,\ldots,n}$. The  Hamming distance between two strings $x,y\in\zn$, is defined by  $\Ham(x,y) = \sum_{i\in [n]} x_i \neq y_i$. Let $\poly$ denote the set of all polynomials, let \ppt stand for probabilistic  polynomial time and   \pptm denote a \ppt TM (Turing machine) and let \nuppt stands for a \emph{non-uniform} \pptm.
 A function $\nu \colon \N \to [0,1]$ is \textit{negligible}, denoted $\nu(n) = \negl(n)$, if $\nu(n)<1/p(n)$ for every $p\in\poly$ and large enough $n$.

\subsection{Distributions and Random Variables}
Given a distribution, or random variable,  $D$, we write $x\gets D$ to indicate that $x$ is selected according to $D$. Given a finite set $\cs$, let $s\gets \cs$ denote that $s$ is selected according to the uniform distribution over $\cs$. The support of $D$, denoted $\Supp(D)$, be defined as $\set{u\in\Uni: D(u)>0}$.  We will use the following distance measures.

\paragraph{Statistical distance.}

\begin{definition}[statistical distance]\label{def:SD}
The {\sf statistical distance} between two distributions $P,Q$ over the same domain $\Uni$, (denote by $\SD(P,Q)$) is defined to be:
\[ SD(P,Q)=max_{\cA \subseteq \Uni} |\Pr[P \in \cA]-\Pr[Q \in \cA]| .\]
We say that $P,Q$ are {\sf $\eps$-close} (denoted by $P \sindist_{\eps} Q$) if $\SD(P,Q) \le \eps$.
\end{definition}

\newcommand{\tY}{\tilde{Y}}
\newcommand{\cS}{{\cal{S}}}
We use the following fact, proof given in the appendix.
\def\PropCondSD
{

		Let $0 <\eps <\mu< 1$, and let $(X,Y)$, $(\tX, \tY)$ be two pairs of random variables  over the same domain $\cX \times \cY$, such that $\SD((X,Y), (\tX,\tY)) \leq \eps$. Let $E_0,E_1 \subseteq \cX \times \cY$ be two sets such that for every $b \in \zo$, $\pr{(X,Y)\in E_b} \geq \mu$. Then $\SD(\tX|_{\set{(\tX,\tY) \in E_0}}, \tX|_{\set{(\tX,\tY) \in E_1}})\leq \SD(X|_{\set{(X,Y) \in E_0}}, X|_{\set{(X,Y) \in E_1}}) +4\eps/\mu.$

}

\begin{proposition}\label{Prop:CondSD}
	\PropCondSD
\end{proposition}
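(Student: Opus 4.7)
My plan is to reduce the claim to a ``conditioning is stable'' statement for each of the two events via the triangle inequality. Concretely, I will write
\begin{align*}
\SD(\tX|_{(\tX,\tY)\in E_0},\ \tX|_{(\tX,\tY)\in E_1})
\leq\ & \SD(\tX|_{(\tX,\tY)\in E_0},\ X|_{(X,Y)\in E_0}) \\
& +\ \SD(X|_{(X,Y)\in E_0},\ X|_{(X,Y)\in E_1}) \\
& +\ \SD(X|_{(X,Y)\in E_1},\ \tX|_{(\tX,\tY)\in E_1}).
\end{align*}
The middle term is already the quantity on the right-hand side of the proposition. So it suffices to show that for each $b\in\zo$, $\SD(\tX|_{(\tX,\tY)\in E_b},\ X|_{(X,Y)\in E_b}) \leq 2\eps/\mu$.

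To prove this, I would fix $b\in\zo$ and an arbitrary test set $A\subseteq \cX$, and set $P=\Pr[(X,Y)\in E_b\cap (A\times\cY)]$, $Q=\Pr[(X,Y)\in E_b]$ and $P',Q'$ analogously for $(\tX,\tY)$. The assumption $\SD((X,Y),(\tX,\tY))\leq\eps$ gives $|P-P'|\leq\eps$ and $|Q-Q'|\leq\eps$, and by hypothesis $Q\geq\mu$. The key step (and the only algebraic step that requires care) is to choose the right rewriting of $P/Q - P'/Q'$: the natural decomposition $PQ'-P'Q = P(Q'-Q)-Q(P'-P)$ yields only $2\eps/(\mu-\eps)$ in the denominator, which is not good enough. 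Instead I would use the decomposition
\begin{equation*}
PQ' - P'Q \;=\; Q'(P-P') + P'(Q'-Q),
\end{equation*}
together with $P'\leq Q'\leq 1$, to obtain $|PQ'-P'Q|\leq 2Q'\eps$, and hence
\begin{equation*}
\left|\frac{P}{Q}-\frac{P'}{Q'}\right| \;=\; \frac{|PQ'-P'Q|}{QQ'} \;\leq\; \frac{2Q'\eps}{QQ'} \;=\; \frac{2\eps}{Q} \;\leq\; \frac{2\eps}{\mu}.
\end{equation*}
Taking the supremum over $A$ gives $\SD(\tX|_{(\tX,\tY)\in E_b},\ X|_{(X,Y)\in E_b})\leq 2\eps/\mu$, and then the triangle inequality above yields the claimed bound $\SD(X|_{(X,Y)\in E_0},X|_{(X,Y)\in E_1}) + 4\eps/\mu$.

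The main obstacle (and it is a small one) is precisely this choice of decomposition of $PQ'-P'Q$: one must avoid the denominator $\min(Q,Q')$ (which could be as small as $\mu-\eps$) and instead land on a cancellation whose final denominator is $Q$, using $P'\leq Q'$ to replace the factor of $P'$ by $Q'$ and cancel against $1/Q'$. Everything else is triangle inequality and bookkeeping.
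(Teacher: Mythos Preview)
Your proposal is correct and follows essentially the same approach as the paper: reduce via the triangle inequality to showing $\SD(\tX|_{(\tX,\tY)\in E_b},\, X|_{(X,Y)\in E_b})\le 2\eps/\mu$ for each $b$, then bound this by a direct computation on the conditional probabilities. The only difference is in the algebra of that last step: the paper substitutes the worst-case values $P'\ge P-\eps$, $Q'\le Q+\eps$ and uses the identity $\tfrac{A}{B}-\tfrac{A-\eps}{B+\eps}=\tfrac{\eps(A+B)}{B(B+\eps)}$ together with $A\le B$, whereas you use the decomposition $PQ'-P'Q=Q'(P-P')+P'(Q'-Q)$ together with $P'\le Q'$; both routes land exactly on $2\eps/Q\le 2\eps/\mu$.
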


\paragraph{Log-Ratio distance.}
We will also be interested in the following natural notion of ``log-ratio distance'' which was popularized by the literature on differential privacy.

\begin{definition}[Log-Ratio distance]\label{def:Log-Ratio}
Two numbers $p_0,p_1 \ge 0$ satisfy $p_0 \rindist_{\eps,\delta} p_1$ if for both $b \in \zo$: $p_b \le e^{\eps} \cdot p_{1-b} + \delta$. Two distributions $P,Q$ over the same domain $\Uni$, are {\sf $(\eps,\delta)$-log-ratio-close} (denoted $P \rindist_{\eps,\delta} Q$) if for every $\cA \subseteq \Uni$:
\[ \Pr[P \in \cA] \rindist_{\eps,\delta} \Pr[Q \in \cA]. \]
We let $\rindist_\eps $ stands for $\rindist_{\eps,0}$.
\end{definition}

It is immediate that $D_0 \sindist_{\delta} D_1$ iff $D_0 \rindist_{0,\delta} D_1$, and that $D_0 \rindist_{\eps,\delta} D_1$ implies $D_0 \sindist_{(e^{\eps}-1)+\delta} D_1$, and note that for $\eps \in [0,1]$, $e^{\eps}-1 =O(\eps)$. It is also immediate  that the  log-ratio distance respects data processing.

\begin{fact}\label{fact:LR:DP}
Assume 	$P \rindist_{\eps,\delta} Q$, then $f(P) \rindist_{\eps,\delta} f(Q)$ for any (possibly randomized) function $f$.
\end{fact}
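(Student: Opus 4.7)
The plan is to reduce the randomized case to the deterministic case via an auxiliary independent-randomness argument, which is essentially the standard data-processing argument for differential privacy.

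First I would dispose of the deterministic case. Suppose $f$ is deterministic and fix any $\cA$ in the codomain. Setting $\cB \eqdef f^{-1}(\cA)$ we have $\Pr[f(P) \in \cA] = \Pr[P \in \cB]$ and $\Pr[f(Q) \in \cA] = \Pr[Q \in \cB]$, so the hypothesis $P \rindist_{\eps,\delta} Q$ applied to $\cB$ immediately gives
\[ \Pr[f(P) \in \cA] \le e^{\eps} \Pr[f(Q) \in \cA] + \delta, \]
and the symmetric inequality with $P$ and $Q$ swapped. By \cref{def:Log-Ratio}, this means $f(P) \rindist_{\eps,\delta} f(Q)$.

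Next, for randomized $f$ I would write $f(x) = g(x, R)$, where $g$ is a deterministic function and $R$ is a random variable (the ``coins'' of $f$) that is independent of both $P$ and $Q$. The key intermediate step is to verify that the joint distributions satisfy $(P, R) \rindist_{\eps,\delta} (Q, R)$. Indeed, for any measurable $\cB$ in the product space, writing the slice $\cB_r \eqdef \set{x : (x,r) \in \cB}$ and using independence,
\[ \Pr[(P,R) \in \cB] = \Exp_{r \gets R}\bigl[\Pr[P \in \cB_r]\bigr] \le \Exp_{r \gets R}\bigl[e^{\eps}\Pr[Q \in \cB_r] + \delta\bigr] = e^{\eps}\Pr[(Q,R) \in \cB] + \delta, \]
where the inequality applies the hypothesis $P \rindist_{\eps,\delta} Q$ to the set $\cB_r$ for each fixed $r$; the symmetric inequality is identical. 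Applying the already-established deterministic case to $g$ and the pair of input distributions $(P,R)$ and $(Q,R)$ then yields $g(P,R) \rindist_{\eps,\delta} g(Q,R)$, which is exactly $f(P) \rindist_{\eps,\delta} f(Q)$.

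There is no real obstacle here: the whole content is just unfolding the definition and using independence to lift the pointwise log-ratio bound over the randomness of $f$. The only mild care needed is to make the product-space argument rigorous, but it follows immediately from linearity of expectation and the $r$-wise application of the hypothesis.
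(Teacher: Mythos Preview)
Your proof is correct and is exactly the standard data-processing argument for this kind of statistical closeness notion. The paper itself does not give a proof: it simply states the fact as ``immediate'' (just before \cref{fact:LR:DP}), so your write-up is more detailed than what appears there.
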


%
%
%
%
%
%

\paragraph{Log-Ratio distance under independent repetitions.}

As demonstrated by the framework of differential privacy, working with this notion of ``relative distance'' is often a very convenient distance measure between distributions, as it behaves nicely when considering independent executions. Specifically, let $D^{\ell}$ denote $\ell$ independent copies from $D$, the following follows:

\begin{theorem}[Relative distance under independent repetitions]\label{thm:rep}~\\
If $D_0 \rindist_{\eps,\delta} D_1$ then for every $\ell \ge 1$, and every $\delta' \in (0,1)$
 \[D_0^{\ell} \rindist_{({\eta(\eps,\ell,\delta')},\ell \delta + \delta')} D_1^{\ell},\]
  where $\eta(\eps,\ell,\delta')=\ell \cdot \epsilon(e^\epsilon -1)+\epsilon \cdot \sqrt{2
\ell \cdot \ln(1/\delta')}$.
\end{theorem}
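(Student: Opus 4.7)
The statement is essentially the advanced composition theorem from differential privacy applied to log-ratio closeness of arbitrary distributions. My plan is in two phases: first prove the statement for $\delta=0$ by analyzing the log-likelihood ratio as a sum of iid bounded random variables and invoking Hoeffding; then reduce the general $\delta>0$ case to the $\delta=0$ case via a standard cleanup plus a union bound over the $\ell$ samples.

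\textbf{Phase 1 ($\delta = 0$).} Let $L(x) := \ln(D_0(x)/D_1(x))$, so by hypothesis $|L(x)| \le \eps$ pointwise. Under $D_0^\ell$, the log-likelihood ratio $L_\ell := \sum_{i=1}^\ell L(x_i)$ is a sum of $\ell$ iid variables in $[-\eps,\eps]$ with per-sample mean $\mu := \E_{D_0}[L] = \KL(D_0\|D_1)$. The first key step is to bound $\mu \le \eps(e^\eps-1)$: writing $r(x) := D_0(x)/D_1(x) \in [e^{-\eps},e^\eps]$, one has $\mu = \E_{D_1}[r\ln r - (r-1)]$ (using $\E_{D_1}[r-1]=0$), and the integrand is maximized on $[e^{-\eps},e^\eps]$ at $r=e^\eps$ with value at most $\eps(e^\eps-1)$. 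Hoeffding then yields $\Pr_{D_0^\ell}[L_\ell > \ell\mu + \eps\sqrt{2\ell\ln(1/\delta')}] \le \delta'$. Denoting the complementary ``good'' set by $S$, on $S$ we have $D_0^\ell(x)/D_1^\ell(x) = e^{L_\ell(x)} \le e^\eta$ pointwise, so for any event $A$,
\[ \Pr_{D_0^\ell}[A] \le \Pr_{D_0^\ell}[A\cap S] + \delta' \le e^\eta \Pr_{D_1^\ell}[A\cap S] + \delta' \le e^\eta \Pr_{D_1^\ell}[A] + \delta'. \]
A symmetric argument using $\E_{D_1}[-L] = \KL(D_1\|D_0) \le \eps(e^\eps-1)$ handles the opposite direction, establishing $D_0^\ell \rindist_{\eta,\delta'} D_1^\ell$.

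\textbf{Phase 2 (general $\delta$) and main obstacle.} I invoke the standard cleanup characterization of $(\eps,\delta)$-log-ratio closeness: there exist ``bad'' events $B_0 \subseteq \Supp(D_0)$ and $B_1 \subseteq \Supp(D_1)$, each of mass at most $\delta$ under the respective distribution, such that outside $B_0 \cup B_1$ the log-ratio satisfies $|L(x)| \le \eps$ pointwise. For $x \sim D_0^\ell$, a union bound shows that the probability \emph{some} $x_i$ lands in $B_0$ is at most $\ell\delta$; conditioned on this failing, Phase 1 applies verbatim to the conditional distributions and contributes the additional $\delta'$ slack. Combining yields $D_0^\ell \rindist_{\eta,\,\ell\delta+\delta'} D_1^\ell$. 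The main obstacle I anticipate is formalizing the cleanup step cleanly: the precise two-sided form of the characterization (a minor strengthening of, \eg, Dwork--Roth's Lemma 3.17) must be stated carefully, and one has to verify that conditioning on the complement of the bad events really preserves the pointwise $(\eps,0)$ ratio bound that Phase 1 consumes, with total bad-event mass exactly $\delta$. Once that is in hand, the Hoeffding concentration and the conversion from concentration of $L_\ell$ to $(\eta,\delta')$-closeness are routine.
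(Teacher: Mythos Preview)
Phase 1 is correct and essentially identical to the paper's argument; your $r\ln r-(r-1)$ route to $\KL(D_0\|D_1)\le\eps(e^\eps-1)$ is a clean variant of the paper's \cref{lemma:avgLog}.

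The gap is in Phase 2: the bad-set characterization you invoke is false. Take $D_0=(\tfrac12,\tfrac12)$, $D_1=(0.9,0.1)$ on a two-point space with $\eps=0.1$; one checks $D_0\rindist_{0.1,\,0.4}D_1$, yet both points violate $|L|\le 0.1$, so $B_0\cup B_1$ must be the whole space and no choice of $B_0,B_1$ meets both mass bounds with $\delta=0.4$. Even granting the characterization, conditioning $D_0^\ell$ only on ``no $x_i\in B_0$'' does not put you in position to run Phase~1, since Hoeffding needs the two-sided bound $L(x_i)\in[-\eps,\eps]$, which fails on $B_1\setminus B_0$, and your hypothesis gives no control of $D_0(B_1)$. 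The paper's cleanup (\cref{lemma:closeToEpsZero}) avoids both issues by working at the level of distributions rather than sets: for each $b$ separately it builds $D_b'$ with $D_b'\sindist_\delta D_b$ \emph{and} the full two-sided relation $D_b'\rindist_{\eps}D_{1-b}$ (the construction both lowers mass where $D_b>e^\eps D_{1-b}$ and raises it where $D_b<e^{-\eps}D_{1-b}$), then applies Phase~1 to the pair $(D_b',D_{1-b})$ and passes back to $D_b^\ell$ via $\SD(D_b^\ell,(D_b')^\ell)\le\ell\delta$.
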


We remark that Theorem \ref{thm:rep} can also be derived by the (much more complex) result on ``boosting differential privacy'' \cite{dwork2010boosting}. However, it can be easily derived directly by a Hoeffding bound, as is done in the next two lemmata.

\begin{lemma}\label{lemma:avgLog}
	If $D_0 \rindist_{\eps} D_1$ then for each $b\in \zo$, $\ex{x \from D_b}{\log\frac{\ppr{D_b}{x}}{\ppr{D_{1-b}}{x}}}\leq \eps(e^\eps-1)$.
\end{lemma}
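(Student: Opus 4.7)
The left-hand side is the Kullback--Leibler divergence $D(D_b \| D_{1-b})$, so the goal is to show that the log-ratio condition $D_0 \rindist_\eps D_1$ upgrades the pointwise bound $|\log(D_b(x)/D_{1-b}(x))| \le \eps$ into a quadratic bound $\eps(e^\eps-1)$ on the average log-ratio. The strategy is to bound $D(D_b \| D_{1-b}) + D(D_{1-b}\| D_b)$, since this symmetric quantity has a very convenient expression, and then use nonnegativity of each KL term to deduce the one-sided bound.

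The key identity I would write down is
\begin{equation*}
D(D_b\|D_{1-b}) + D(D_{1-b}\|D_b) = \sum_x \bigl(D_b(x) - D_{1-b}(x)\bigr)\cdot \log\frac{D_b(x)}{D_{1-b}(x)},
\end{equation*}
obtained by adding the two sums (the ``mixed'' terms telescope because $\sum_x D_b(x)\log(D_{1-b}(x)/D_b(x))$ appears with opposite signs). Setting $L(x) := \log(D_b(x)/D_{1-b}(x))$, the hypothesis gives $L(x) \in [-\eps,\eps]$, and factoring $D_b(x) = D_{1-b}(x)\, e^{L(x)}$ rewrites the right-hand side as
\begin{equation*}
\sum_x D_{1-b}(x)\cdot L(x)\bigl(e^{L(x)}-1\bigr).
\end{equation*}

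Next I would bound the scalar function $g(t) := t(e^t-1)$ on $[-\eps,\eps]$. Note $g(t) \ge 0$ throughout (the factors have the same sign), and $g''(t) = (2+t)e^t > 0$ on this interval, so $g$ is convex and attains its maximum at an endpoint. Since $g(\eps) = \eps(e^\eps-1)$ and $g(-\eps) = \eps(1-e^{-\eps}) \le \eps(e^\eps-1)$, we get $g(L(x)) \le \eps(e^\eps-1)$ pointwise. Plugging this bound back in and using $\sum_x D_{1-b}(x) = 1$ yields
\begin{equation*}
D(D_b\|D_{1-b}) + D(D_{1-b}\|D_b) \le \eps(e^\eps-1).
\end{equation*}
Since both KL divergences are nonnegative, each is bounded by $\eps(e^\eps-1)$, giving the claim for both values of $b$.

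The only mildly subtle step is the clean convexity/endpoint argument for $g$; the rest is a direct manipulation of the KL definition and the pointwise log-ratio bound. I do not anticipate a real obstacle here, as the lemma is exactly the standard ``differential privacy gives quadratic KL'' fact, and the above two-line derivation makes it self-contained.
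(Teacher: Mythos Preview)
Your proof is correct and follows essentially the same approach as the paper: both add the two KL divergences to obtain the symmetric expression $\sum_x (D_b(x)-D_{1-b}(x))\log\frac{D_b(x)}{D_{1-b}(x)}$ and then bound it by $\eps(e^\eps-1)$. The only cosmetic difference is that the paper bounds $|\log(\cdot)|\le\eps$ and $|D_b(x)-D_{1-b}(x)|\le(e^\eps-1)\min(D_b(x),D_{1-b}(x))$ separately, whereas you substitute $D_b(x)=D_{1-b}(x)e^{L(x)}$ and bound the scalar function $t(e^t-1)$ directly; your convexity argument assumes $\eps<2$, but a direct check (splitting into $t\ge 0$ and $t<0$) gives the same bound for all $\eps\ge 0$.
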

The term $\ex{x \from D_b}{\log\frac{\ppr{D_b}{x}}{\ppr{D_{1-b}}{x}}}$ is also known as the KL-divergence between $D_b$ and $D_{1-b}$, which is known to be non-negative for every two distribution $D_0, D_1$.
\begin{proof}
\begin{align*}
&\ex{x \from D_b}{\log\frac{\ppr{D_b}{x}}{\ppr{D_{1-b}}{x}}}  \leq \ex{x \from D_b}{\log\frac{\ppr{D_b}{x}}{\ppr{D_{1-b}}{x}}} + \ex{x \from D_{1-b}}{\log\frac{\ppr{D_{1-b}}{x}}{\ppr{D_{b}}{x}}}\\
&= \sum_{x \in \Uni}\ppr{D_b}{x} (\log\frac{\ppr{D_b}{x}}{\ppr{D_{1-b}}{x}} + \log\frac{\ppr{D_{1-b}}{x}}{\ppr{D_{b}}{x}}) + \sum_{x \in \Uni}(\ppr{D_{1-b}}{x}-\ppr{D_{b}}{x}) \log\frac{\ppr{D_{1-b}}{x}}{\ppr{D_{b}}{x}}\\
&=\sum_{x \in \Uni}(\ppr{D_{1-b}}{x}-\ppr{D_{b}}{x}) \log\frac{\ppr{D_{1-b}}{x}}{\ppr{D_{b}}{x}}\\
&\leq \eps \cdot \sum_{x \in \Uni}\size{\ppr{D_{1-b}}{x}-\ppr{D_{b}}{x}}\\
&\leq \eps \cdot \sum_{x \in \Uni}(e^\eps-1)\min(\ppr{D_{1-b}}{x},\ppr{D_{b}}{x})\\
&= \eps \cdot (e^\eps - 1) \sum_{x \in \Uni}\min(\ppr{D_{1-b}}{x},\ppr{D_{b}}{x}) \leq \eps \cdot (e^\eps - 1).\\
\end{align*}

Where the first inequality holds since KL-divergence is non-negative, and the second and third inequalities holds from the definition of Log-Ratio distance.
\end{proof}

\remove{
\begin{lemma}\label{lemma:avgLog}
	If for a fixed $b\in \zo$ and every $x\in \Uni$, $\ppr{D_{1-b}}{x} \leq e^\eps\cdot \ppr{D_b}{x}$ then $\ex{x \from D_b}{\log\frac{\ppr{D_b}{x}}{\ppr{D_{1-b}}{x}}}\leq 2\eps(e^\eps-1)$.
\end{lemma}	

The size $\ex{x \from D_b}{\log\frac{\ppr{D_b}{x}}{\ppr{D_{1-b}}{x}}}$ is also known as the KL-divergence between $D_b$ and $D_{1-b}$, which is know to be non-negative for every two distribution $D_0, D_1$.
\begin{proof}
	\begin{align*}
	&\ex{x \from D_b}{\log\frac{\ppr{D_b}{x}}{\ppr{D_{1-b}}{x}}}  \leq \ex{x \from D_b}{\log\frac{\ppr{D_b}{x}}{\ppr{D_{1-b}}{x}}} + \ex{x \from D_b}{\log\frac{\ppr{D_{1-b}}{x}}{\ppr{D_{b}}{x}}}\\
	&= \sum_{x \in \Uni}\ppr{D_b}{x} (\log\frac{\ppr{D_b}{x}}{\ppr{D_{1-b}}{x}} + \log\frac{\ppr{D_{1-b}}{x}}{\ppr{D_{b}}{x}}) + \sum_{x \in \Uni}(\ppr{D_{1-b}}{x}-\ppr{D_{b}}{x}) \log\frac{\ppr{D_{1-b}}{x}}{\ppr{D_{b}}{x}}\\
	&=\sum_{x \in \Uni}(\ppr{D_{1-b}}{x}-\ppr{D_{b}}{x}) \log\frac{\ppr{D_{1-b}}{x}}{\ppr{D_{b}}{x}}\\
	&\leq 2\eps \cdot \sum_{\ppr{D_{1-b}}{x}\geq \ppr{D_{b}}{x}}\ppr{D_{1-b}}{x}-\ppr{D_{b}}{x}\\
	&\leq 2\eps \cdot \sum_{x \in \Uni}(e^\eps-1)\ppr{D_{b}}{x}\\
	&= 2\eps \cdot (e^\eps - 1) \sum_{x \in \Uni}\ppr{D_{b}}{x} = 2\eps \cdot (e^\eps - 1).\\
	\end{align*}	

Where the first inequality holds since KL-divergence is non-negative, and the second and third inequalities holds from the definition of Log-Ratio distance.
\end{proof}
}

\begin{lemma}\label{lemma:closeToEpsZero}
	If $D_0 \rindist_{\eps,\delta} D_1$, then for each $b \in \zo$ there exist distributions $D'_b$ such that $D'_b \rindist_{\eps}D_{1-b}$, and,  $D'_b\sindist_{\delta} D_b$.
\end{lemma}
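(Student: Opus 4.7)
The plan is to construct $D'_b$ explicitly by clipping $D_b$ into the multiplicative ``tube'' around $D_{1-b}$; by symmetry assume $b=0$, and write $\mathcal{T}$ for the set of nonnegative measures $\mu$ on $\Omega$ satisfying $e^{-\eps} D_1(x) \leq \mu(x) \leq e^\eps D_1(x)$ for every $x$. Any probability distribution $\nu \in \mathcal{T}$ automatically satisfies $\nu \rindist_\eps D_1$ (summing the pointwise bounds over an arbitrary $\cA \subseteq \Omega$), so it suffices to exhibit a distribution in $\mathcal{T}$ within statistical distance $\delta$ of $D_0$.

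I would partition $\Omega = A \cup B \cup G$ with $A = \{x : D_0(x) > e^\eps D_1(x)\}$, $B = \{x : D_1(x) > e^\eps D_0(x)\}$, and $G$ the rest, and apply the hypothesis $D_0 \rindist_{\eps,\delta} D_1$ to $A$ and $B$ to obtain $\gamma_A := D_0(A) - e^\eps D_1(A) \leq \delta$ and $\gamma_B := e^{-\eps} D_1(B) - D_0(B) \leq e^{-\eps}\delta \leq \delta$; both quantities are nonnegative by the definitions of $A$ and $B$. I would then define a preliminary measure $\mu$ by capping $D_0$ to $e^\eps D_1$ on $A$, lifting $D_0$ to $e^{-\eps} D_1$ on $B$, and leaving $\mu = D_0$ on $G$; by construction $\mu$ lies pointwise in $\mathcal{T}$, and a direct computation gives $\sum_x \mu(x) = 1 + \gamma_B - \gamma_A$.

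The final step is to correct $\mu$ by a signed mass of $\gamma_A - \gamma_B$, whose magnitude is at most $\delta$, while staying in $\mathcal{T}$: if $\gamma_A \geq \gamma_B$ one distributes the extra positive mass within the upper slack $e^\eps D_1 - \mu$ on $G \cup B$, and otherwise one removes mass from the lower slack $\mu - e^{-\eps} D_1$ on $G \cup A$. A short telescoping identity shows the total available slack in each direction equals $e^\eps - 1 + (\gamma_A - \gamma_B)$ and $1 - e^{-\eps} + (\gamma_B - \gamma_A)$ respectively, each of which dominates the amount that must be moved. The corrected measure $D'_0$ lies in $\mathcal{T}$ and hence is $\eps$-log-ratio-close to $D_1$; accounting for the positive and negative parts of $D'_0 - D_0$ gives $\SD(D_0, D'_0) = \max(\gamma_A, \gamma_B) \leq \delta$. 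The main subtlety will be the case when $G$ is small or empty, so that the correction must reuse slack on $A$ or $B$ themselves (going strictly below $e^\eps D_1$ on $A$, or strictly above $e^{-\eps} D_1$ on $B$); the same telescoping identities show this is always feasible.
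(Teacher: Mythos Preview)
Your proposal is correct and follows essentially the same approach as the paper: both proofs clip $D_b$ into the multiplicative tube $[e^{-\eps}D_{1-b},\,e^{\eps}D_{1-b}]$ (your sets $A,B,G$ are exactly the paper's $\cS^+,\cS^-$ and their complement), and then redistribute the resulting mass defect to renormalize while staying inside the tube. The only difference is bookkeeping: the paper absorbs the excess on the concrete set $\{x:D_{1-b}(x)>D_b(x)\}$ and argues it has enough room, whereas you compute the total available slack $e^{\eps}-1+(\gamma_A-\gamma_B)$ (resp.\ $1-e^{-\eps}+(\gamma_B-\gamma_A)$) directly by telescoping $\sum_x(e^{\eps}D_1-\mu)$ over $\Omega$; your treatment of the two cases $\gamma_A\gtrless\gamma_B$ symmetrically is arguably cleaner than the paper's ``WLOG'' reduction.
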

\remove{
\begin{proof}
For each $b \in \zo$, 	let $\cS_b\subseteq \Uni$ be the set of all $x$ such that $\ppr{D_{b}}{x} > e^\eps\cdot \ppr{D_{1-b}}{x}$.  Let $\Delta_b = \ppr{D_{b}}{\cS_b}-e^\eps\cdot \ppr{D_{1-b}}{\cS_b}$. We first show that $\Delta_b \leq \delta$. Indeed, by the log-ration distance between $D_0$ and $D_1$, we get that $\ppr{D_{b}}{\cS_b}\leq e^\eps\cdot \ppr{D_{1-b}}{\cS_b} +\delta$.

For each $b$, let us define $D'_b$ as following:
\begin{equation}
\ppr{D'_{b}}{x}=
\begin{cases}
e^\eps\cdot \ppr{D_{1-b}}{x}/(1-\Delta_b), &\text{for } x \in \cS_b\\
\ppr{D_{b}}{x}/(1-\Delta_b), &\text{for } x \notin \cS_b
\end{cases}
\end{equation}

To show that $D'_0 \rindist_{\eps+2\delta}D'_1$, recall that $\log (1/(1-\delta))\leq \log (1+2\delta) \leq 2\delta$. Therefore it is enough to show that for every $b \in 'zo$ and $x \in \Uni$, it holds that $\ppr{D'_{b}}{x} \leq e^\epsilon/(1-\delta)\cdot \ppr{D'_{1-b}}{x}$. Indeed, if $x \in \cS_b$, it holds that:
\begin{align}
\ppr{D'_{b}}{x} = e^\eps\cdot \ppr{D_{1-b}}{x}/(1-\Delta_b) = e^\eps\cdot \ppr{D'_{1-b}}{x}\cdot (1-\Delta_{1-b})/(1-\Delta_b)\leq e^\eps/(1-\delta)\cdot \ppr{D'_{1-b}}{x},
\end{align}
where the last equality holds since $\cS_0 \cap \cS_1 = \emptyset$.
Similarly, assuming that $x \in \cS_{1-b}$, we get that:
\begin{align}
\ppr{D'_{b}}{x} = \ppr{D_{b}}{x}/(1-\Delta_b) = e^{-\eps}\cdot \ppr{D'_{1-b}}{x}(1-\Delta_{1-b})/(1-\Delta_b)\leq 1/(1-\delta)\cdot \ppr{D'_{1-b}}{x}.
\end{align}
And lastly, if $x \notin \cS_b \cup \cS_{1-b}$:
\begin{align}
\ppr{D'_{b}}{x} = \ppr{D_{b}}{x}/(1-\Delta_b) \leq e^{\eps}\cdot \ppr{D_{1-b}}{x}(1-\Delta_{1-b})/(1-\Delta_b) \leq e^\eps/(1-\delta)\cdot \ppr{D'_{1-b}}{x}.
\end{align}

We are left to show that for each $b \in \zo$, $D'_b\sindist_{\delta} D_b$. Indeed,
\begin{align}
\SD(D_b, D'_b) &= \sum_{x \in \Uni, \ppr{D_{b}}{x} \geq \ppr{D'_{b}}{x}}\ppr{D_{b}}{x} - \ppr{D'_{b}}{x}\\\nonumber
&\leq \sum_{x \in \Uni, \ppr{D_{b}}{x} \geq \ppr{D'_{b}}{x}}\ppr{D_{b}}{x} - \ppr{D'_{b}}{x}\cdot(1-\Delta_b)\\\nonumber
&=\sum_{x \in S_b} \ppr{D_{b}}{x} - e^\eps \cdot \ppr{D_{1-b}}{x}\\\nonumber
&=\Delta_b \leq \delta.
\end{align}
\end{proof}

}
\begin{proof}
Fix $b \in \zo$, and	let $\cS^+\subseteq \Uni$ be the set of all $x$ such that $\ppr{D_{b}}{x} > e^\eps\cdot \ppr{D_{1-b}}{x}$, and $\cS^-$ be the set of all $x$ with $\ppr{D_{b}}{x} < e^{-\eps}\cdot \ppr{D_{1-b}}{x}$. First, notice that
\begin{align}\label{eq:sdMax}
\ppr{D_{b}}{\cS^+}-e^\eps\cdot \ppr{D_{1-b}}{\cS^+} \leq \delta, &&e^{-\eps}\cdot \ppr{D_{1-b}}{\cS^-}-\ppr{D_{b}}{\cS^-} \leq \delta
\end{align}
	 Indeed, by the log-ration distance between $D_0$ and $D_1$, we get that $\ppr{D_{b}}{\cS^+}\leq e^\eps\cdot \ppr{D_{1-b}}{\cS^+} +\delta$, and $\ppr{D_{1-b}}{\cS^-}\leq e^\eps\cdot \ppr{D_{b}}{\cS^-} +\delta$.
	
	 In the following we assume for simplicity that $\ppr{D_{b}}{\cS^+}-e^\eps\cdot \ppr{D_{1-b}}{\cS^+}  \geq e^{-\eps}\cdot \ppr{D_{1-b}}{\cS^-}-\ppr{D_{b}}{\cS^-}$, as the other case symmetrically follows.
	
	 In the following, it is shown how to modify $D_b$ to construct $D'_b$. This is done by reducing the probability of every $x \in \cS^+$, and increasing the probability of every $x \in \cS^-$, to keep it inside the range $[e^{-\eps}\ppr{D_{1-b}}{x}, e^{\eps}\ppr{D_{1-b}}{x}]$. To make sure that the resulting $D'_b$ is a probability distribution $(\sum_x \ppr{D'_{b}}{x} =1)$, the probability of other elements may have to be changed. For this purpose, consider the set $\cA = \set{x : \ppr{D_{1-b}}{x} > \ppr{D_{b}}{x}}$. Notice that $\cS^- \subseteq \cA$, and it holds that
\begin{align}
\ppr{D_{1-b}}{\cA} -\ppr{D_{b}}{\cA} = &\sum_{x \in \cA}\ppr{D_{1-b}}{x} - \ppr{D_{b}}{x}\\\nonumber
& = \sum_{x \notin \cA} \ppr{D_{b}}{x} - \ppr{D_{1-b}}{x} \geq \ppr{D_{b}}{\cS^+}-e^\eps\cdot \ppr{D_{1-b}}{\cS^+}
\end{align}
Where the second equality holds since $\sum_x \ppr{D_{b}}{x}=\sum_x \ppr{D_{1-b}}{x} =1$.

We get that
\begin{equation}
\ppr{D_{1-b}}{\cA} -\ppr{D_{b}}{\cA}\geq \ppr{D_{b}}{\cS^+}-e^\eps\cdot \ppr{D_{1-b}}{\cS^+}  \geq e^{-\eps}\cdot \ppr{D_{1-b}}{\cS^-}-\ppr{D_{b}}{\cS^-}
\end{equation}
Thus, we can define $D'_b$ as following:
\begin{itemize}
	\item For every $x \in \cS^+$, $\ppr{D'_b}{x} = e^\epsilon \cdot \ppr{D_{1-b}}{x}$.
	\item For every $x \notin \cA \cup \cS^+$, $\ppr{D'_b}{x} =  \ppr{D_{b}}{x}$.
	\item For every $x \in \cA$, $\max(\ppr{D_b}{x}, e^{-\epsilon} \cdot \ppr{D_{1-b}}{x}) \leq \ppr{D'_b}{x} \leq \ppr{D_{1-b}}{x}$, such that $\sum_{x \in \Uni}\ppr{D'_b}{x} = 1$.
\end{itemize}
It is clear from \cref{eq:sdMax} that $D'_b\sindist_{\delta} D_b$. Also, from definition it holds that for every $x$, $\ppr{D'_b}{x} \rindist_\eps \ppr{D_{1-b}}{x}$.
\end{proof}

\begin{theorem}[Hoeffding bound \cite{Hoeffding63}]\label{fact:Hoeffding}
	Let $A_1, ..., A_\ell$ be independent random variables s.t. $A_i \in [-c, c]$ and let $\widehat{A} = \Sigma_{i=1}^\ell A_i$.
	It holds that:
	\begin{align*}
	\pr{\widehat{A}-\ex{}{\widehat{A}}\geq  t}\leq e^{-t^2/2\ell c^2}.
	\end{align*}
\end{theorem}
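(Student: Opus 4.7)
The plan is to use the standard Chernoff-style argument based on bounding the moment generating function (MGF) of each summand. First I would reduce to the zero-mean case: set $B_i = A_i - \Ex[A_i]$, so that $\widehat{A} - \Ex[\widehat{A}] = \sum_i B_i$ and each $B_i$ has mean zero and lies in an interval of length at most $2c$. The event of interest then becomes $\Pr[\sum_i B_i \ge t]$.

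Next I would apply the exponential Markov inequality: for any $s > 0$,
\begin{align*}
\Pr\!\left[\sum_i B_i \ge t\right] = \Pr\!\left[e^{s \sum_i B_i} \ge e^{st}\right] \le e^{-st}\, \Ex\!\left[e^{s \sum_i B_i}\right] = e^{-st}\, \prod_i \Ex\!\left[e^{s B_i}\right],
\end{align*}
using independence of the $B_i$ in the last equality. The crux of the argument is then a per-variable MGF bound (Hoeffding's lemma): for any zero-mean $B$ supported in an interval $[a,b]$, one has $\Ex[e^{sB}] \le e^{s^2 (b-a)^2/8}$. Applied to each $B_i$ (with $b - a \le 2c$), this gives $\Ex[e^{sB_i}] \le e^{s^2 c^2 / 2}$, and therefore
\begin{align*}
\Pr\!\left[\sum_i B_i \ge t\right] \le e^{-st + s^2 \ell c^2 / 2}.
\end{align*}
Finally I would optimize over $s > 0$; the minimizer is $s = t/(\ell c^2)$, yielding the claimed bound $e^{-t^2 / (2\ell c^2)}$.

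The only nontrivial step is Hoeffding's lemma itself. My approach there would be to use convexity of $x \mapsto e^{sx}$ on $[a,b]$ to write $e^{sx} \le \frac{b-x}{b-a} e^{sa} + \frac{x-a}{b-a} e^{sb}$, take expectations using $\Ex[B] = 0$ to get a function of $s$ depending only on $a,b$, and then bound its logarithm via a Taylor expansion (or equivalently by showing its second derivative is at most $(b-a)^2/4$). Everything else is a routine Chernoff optimization, so I expect this convexity-and-Taylor step to be the main (and only real) technical obstacle; the rest is mechanical.
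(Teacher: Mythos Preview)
Your proof is correct and is the standard Chernoff--Hoeffding argument. Note, however, that the paper does not actually prove this statement: it is stated as a cited fact from \cite{Hoeffding63} and used as a black box (in the proof of \cref{thm:rep}), so there is no paper proof to compare against.
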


\begin{proof}[Proof of \cref{thm:rep}]
	First, we show the proof for the case that $\delta= 0$. Later it is shown how to use \cref{lemma:closeToEpsZero} in order to reduce the general case to this one.
	
	For $\delta =0$, fix $b \in \zo$, and let $\cA \subseteq \Uni^\ell$ be some set. It suffices to show that $\pr{D^\ell_b \in \cA} \leq e^{\eta(\eps,\ell,\delta')}\cdot\pr{D^\ell_{1-b} \in \cA}+\delta'$.
	
	Consider the set $\cS:= \set{y  \mid \log\frac{D^\ell_b(y)}{D^\ell_{1-b}(y)} \geq \eta(\eps,\ell,\delta')}$. It holds that: \begin{align}
	\pr{D^\ell_b \in \cA} &=  \pr{D^\ell_b \in \cA \setminus \cS}+ \pr{D^\ell_b \in \cA \cap \cS}\\\nonumber
	& \leq e^{\eta(\eps,\ell,\delta')}\cdot\pr{D^\ell_{1-b} \in \cA \setminus \cS} +  \pr{D^\ell_b \in \cA \cap \cS}  \\\nonumber
	&\leq e^{\eta(\eps,\ell,\delta')} \cdot\pr{D^\ell_{1-b} \in \cA} +  \pr{D^\ell_b \in \cS}.
	\end{align}
	It therefore enough to show that $\pr{D^\ell_b \in \cS} \leq \delta'$.
	 For this goal, consider the random variable $\widehat{A} = \log \frac{\ppr{D^\ell_b}{X_1,...,X_\ell}}{\ppr{D^\ell_{1-b}}{X_1,...,X_\ell}}$, where $X_1,\dots, X_\ell$ are independent samples from $D_b$. Let $A_i := \log \frac{\ppr{D_b}{X_i}}{\ppr{D_{1-b}}{X_i}}$. Then it holds that $\widehat{A} = \sum_{i=1}^\ell A_i$, where for every $i$, $A_i \in [-\eps, \eps]$. By \cref{lemma:avgLog}, it holds that for every $i$, $E[A_i]\leq \eps \cdot (e^\eps - 1)$. Therefore, by the Hoeffding bound, 
	 \begin{align}
	 \pr{D^\ell_b \in \cS} \leq \pr{\widehat{A}\geq \ell \cdot \epsilon(e^\epsilon -1)+\epsilon \cdot \sqrt{2
	 		\ell \cdot \ln(1/\delta')}}\leq e^{-(\epsilon \cdot \sqrt{2
	 		\ell \cdot \ln(1/\delta')})^2/2\ell \eps^2}=\delta'.
	 \end{align}
	
	 In the general case, for $\delta > 0$, let $D'_b$ be the distribution promised in \cref{lemma:closeToEpsZero}. By applying the above on $D'_b,D_{1-b}$, we get that for every set $\cA \subseteq \Uni^\ell$, it holds that
	 \begin{align}
	 \ppr{D'^\ell_b}{\cA} \leq e^{\eta(\eps,\ell,\delta')}\ppr{D'^\ell_{1-b}}{\cA}+\delta'.
	 \end{align}
	 Using the triangle inequality for statistical distance, it follows that:
		 \begin{align}
	\ppr{D^\ell_b}{\cA} \leq \ppr{D'^\ell_b}{\cA} + \ell \delta \leq e^{\eta(\eps,\ell,\delta')}\ppr{D^\ell_{1-b}}{\cA}+\delta'+ \ell \delta.
	\end{align}
\end{proof}

\paragraph{Computational indistinguishability.}


\begin{definition}[Computational indistinguishability]
Two distribution ensembles $X=\set{X_\pk}_{\pk \in \N}$, $Y=\set{Y_\pk}_{\pk \in \N}$ are {\sf [\resp non-uniformly] computationally indistinguishable}, denoted $X \ucindist Y$ [\resp $X \cindist Y$] if for every  \ppt [\resp \nuppt]  $\Dc$:
\[ |\Pr[\Dc(1^\pk,X_\pk)=1]-\Pr[\Dc(1^\pk,Y_\pk)=1]| \le \negl(\pk) .\]


\end{definition}

\subsection{Protocols}

Let $\pi=(\Ac,\Bc)$ be a two-party protocol. Protocol  $\pi$ is \ppt if both $\Ac$ and $\Bc$ running time is polynomial in their input length. We denote by $(\Ac(x_\Ac),\Bc(x_\Bc))(z)$ a random execution of $\pi$ with private inputs $(x_\Ac,y_\Ac)$, and common input $z$. At the end of such an execution, party $\Pc \in \set{\Ac,\Bc}$ obtains his view $V^\Pc(x_\Ac,x_\Bc,z)$, which may also contain a ``designated output'' $O^\Pc(x_\Ac,x_\Bc,z)$ (if the protocol specifies such an output).  A protocol has Boolean output, if each party outputs a bit.

\subsection{Two-Output Functionalities and  Channels}
A two-output \emph{functionality} is just  a random function  that outputs  a tuple of two values  in a predefined domain.  In the following we  omit the   two-output term from the notation.

\paragraph{Channels.}
A \emph{channel} is simply a no-input  functionality  with designated output bits.  We naturally identify channels with the random variable characterizes their output.

\begin{definition}[Channels]\label{def:channel} A {\sf channel} is a no-input Boolean  functionality whose output pair is of the from  $((\VA,\OA),(\VB,\OB))$ and for both $\Pc\in \set{\Ac,\Bc}$,  $\OP$ is Boolean and determined by $\VP$.  A channel has  {\sf agreement $\alpha$}  if $\pr{\OA = \OB} = \half + \alpha$. A channel ensemble  $\set{C_\pk}_{\pk \in \N}$  has agreement $\alpha$ if $C_\pk$ has agreement $\alpha(\pk)$ for every $\pk$.
\end{definition}
It is  convenient  to view a  channel as the experiment in which there are two parties $\Ac$ and $\Bc$.  Party $\Ac$ receives ``output'' $\OA$ and ``view'' $\VA$, and party $\Bc$ receives ``output'' $\OB$ and ``view'' $\VB$.

We  identify a no-input Boolean  output protocol with the channel ``induced'' by its semi-honest execution.
\begin{definition}[The protocol's channel]\label{def:PrtoChannel}
For a no-input Boolean  output protocol $\pi$, we define the channel $\CHAN(\pi)$ by    $\CHAN(\pi) = ((\VA,\OA),(\VB,\OB))$, for $\VP$ and $\OP$ being the view and output of party $\Pc$ in a random execution of $\pi$. Similarly, for protocol $\pi$ whose only input is a security parameter, let $\CHAN(\pi)  = \set{\CHAN(\pi)_\pk =\CHAN({\pi(1^\pk)})}_{\pk \in \N}$.
\end{definition}

All  protocols we construct in this work are \emph{oblivious}, in the sense that  given oracle access to a channel, the parties  only make use of the channel output (though the channel's view becomes part of the party view).\footnote{This is in accordance with definition of channels in the literature  in which the view component of the channel is only accessible to the eavesdropper (and not to the honest parties using the channel).}




\subsection{Secure Computation}
We use the standard notion of securely computing a functionality, \cf  \cite{Goldreich04}.
\begin{definition}[Secure computation]\label{def:SFE}
	A two-party protocol {\sf securely computes a functionality $f$}, if it does so according to the real/ideal paradigm.   We add the term perfectly/statistically/computationally/non-uniform computationally, if the the simulator output is    perfect/statistical/computationally indistinguishable/  non-uniformly indistinguishable from  the real distribution.  The protocol have the above notions of security {\sf against semi-honest  adversaries}, if its security only  guaranteed to holds against an adversary that follows the prescribed protocol.   Finally, for the case of perfectly secure computation, we naturally apply the above notion also to the non-asymptotic case: the protocol with no security parameter perfectly  compute a functionality $f$.
	
	A two-party protocol {\sf securely computes a functionality ensemble $f$ in the $g$-hybrid model}, if it does so according to the above definition when the parties have access to a trusted party computing $g$. All the above adjectives naturally extend to this setting.

\end{definition}

\subsection{Oblivious Transfer}
The (one-out-of-two) oblivious transfer functionality is defined as follows.
\begin{definition}[oblivious transfer functionality $f_\OT$]\label{def:OTfunc}
	The oblivious transfer functionality over $\zo \times (\zs)^2$ is defined by  $f_\OT (i,(\sigma_0,\sigma_1)) = (\perp,\sigma_i)$.
\end{definition}
A protocol is $\ast$ secure OT,   for \\$\ast\in \set{\text{semi-honest statistically/computationally/computationally non-uniform}}$, if it  compute the $f_\OT$  functionality with $\ast$ security.

\subsection{Two-Party  Differential Privacy}\label{sec:DP}

We consider differential privacy in the 2-party setting.

\begin{definition}[Differentially private functionality]\label{def:DP:IT}
	A functionality $f$ over input domain $\zn\times \zn$ is $\eps$-\DP, if the following holds: let $(\VA_{x,y},\VB_{x,y}) = f(x,y)$, then  for every $x,x'$ with $\Ham(x,x') =1$, $y\in \zn$ and $v\in \Supp(\VB_{x,y})$:
	
	$$\pr{\VB_{x,y}= v}  \leq   e^ \eps \cdot \pr{\VB_{x',y} = v},$$
	and the for  every $y,y'$ with $\Ham(y,y') =1$, $x\in \zn$ and $v\in \Supp(\VA_{x,y})$:
	$$\pr{\VA_{x,y}=v}  \leq   e^ \eps \cdot \pr{\VA_{x,y'} = v}.$$
\end{definition}
Note that the above definition is equivalence to  asking that $\VB_{x,y} \rindist_\eps \VB_{x',y}$ for any $x,x'$ with $\Ham(x,x') =1$ and $y$, and analogously for the view of $\Ac$, for $\rindist_{\eps}$ being the log-ratio according to \cref{def:Log-Ratio}.

We also remark that a more general definition allows also an additive error $\delta$   in the above, making the functionality $(\eps,\delta)$-\DP. However, for the sake simplicity, we focus on the simpler notion of $\eps$-\DP stated above.

\begin{definition}[Differentially private computation]\label{def:DP:C}
	A \ppt two-output protocol $\pi = (\Ac,\Bc)$ over input domain $\zn\times \zn$ is {$\eps$-\CDP}  if the following holds for every  $\nuppt$  $\Bc^\ast$, $\Dc$ and  $x,x' \in \zn$ with $\Ham(x,x') =1$:   let ${\VB}^\ast_{x}$ be the view of $\Bc^\ast$ in a random execution of $ (\Ac(x),\Bc^\ast)(1^\pk)$, then
	$$\pr{\Dc({\VB}^\ast_{x}) = 1}  \leq   e^{\eps(\pk)} \cdot \pr{\Dc({\VB}^\ast_{x'}) = 1} + \negl(\pk),$$
	and the same hold for the secrecy of $\Bc$.
	
	Such a protocol is {\sf semi-honest $\eps$-\CDP}, if the above is only guaranteed to hold for semi-honest adversaries (\ie for $\Bc^\ast = \Bc$).
\end{definition}

\subsection{Passive Weak Binary Symmetric Channels}
We rely on the work of  \citet{wullschleger2009oblivious} that shows that certain channels imply oblivious transfer. The following notion, adjusted to our formulation,  of a ``Passive weak binary symmetric channel'' was studied in \cite{wullschleger2009oblivious}.

\begin{definition}[Passive weak binary symmetric channels, \WBSC, \cite{wullschleger2009oblivious}]\label{dfn:WSBC}
	An $(\mu,\epsilon_0,\epsilon_1,p,q)$-\WBSC is a channel
	$C = ((\VA,\OA),(\VB,\OB))$ such that the following holds:
	\begin{itemize}
		\item Correctness: $\pr{\OA=0} \in [\half -\mu/2, \half +\mu/2]$
		
		and for every $b_\Ac \in \zo$, $\pr{\OB\neq\OA \mid \OA=b_\Ac}\in [\epsilon_0,\epsilon_1]$.

		\item Receiver security: $(\VA,\OA)|_{\OB=\OA} \sindist_p (\VA,\OA)|_{\OB\neq\OA}$.\footnote{In the requirement above, one can replace $(\VA,\OA)$ with $\VA$ (as by our conventions the latter determines the former). We remark that \cite{wullschleger2009oblivious} does not use this convention, and this is why we explicitly include the random variable $\OA$.}
		
\remove{		
		\item Receiver security: $\VA|_{\OB=\OA} \sindist_p \VA|_{\OB\neq\OA}$.

\Rnote{Why are there two different statements of Receiver security?}

}		
		\item Sender security: for every $b_\Bc \in \set{0,1}$, $\VB|_{\OB=b_\Bc, \OA=0} \sindist_q \VB|_{\OB=b_\Bc,\OA=1}$.
	\end{itemize}
\end{definition}

The following was proven in \cite{wullschleger2009oblivious}.

\begin{theorem}[WBSC implies oblivious transfer]
	\label{thm:wullschleger}
	There exist a protocol $\Delta$ such that the following holds. Let $\eps,\eps_0 \in (0,1/2), p \in(0,1)$ be  such that $150(1-(1-p)^2)<(1-\frac{2\eps^2}{\eps^2+(1-\eps)^2})^2$, and $\eps_0 \leq \eps$. Let $C$ be a $(0,\epsilon_0,\epsilon_0,p,p)$-\WBSC. Then  $\Delta(1^\pk,\eps)$  is a semi-honest   statistically secure OT in the $C$-hybrid model, and its running time is polynomial in $\pk$, $1/\eps$  and  $1/(1-2\eps)$. Furthermore, the parties in $\Delta$ only makes use of the output bits of the channel.

\end{theorem}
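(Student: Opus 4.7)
The plan is to follow the standard independent-repetition, information-reconciliation, and privacy-amplification pipeline for reducing noisy channels to OT (in the spirit of Crépeau--Kilian, extended to imperfect channels by the noisy-storage/erasure-channel line of work, and tailored to the WBSC in \cite{wullschleger2009oblivious}). First I would have the parties invoke $C$ a total of $n$ times, for some $n$ polynomial in $\pk$, $1/\eps$, and $1/(1-2\eps)$, obtaining $n$-bit strings $X^\Ac$ and $X^\Bc$ at Alice and Bob respectively. Since each invocation is independent and the per-coordinate crossover probability is $\eps_0 \le \eps$, a Chernoff bound gives that $\Ham(X^\Ac,X^\Bc)$ concentrates tightly around $\eps_0 n$ except with probability negligible in $\pk$; this is the ``raw material'' from which OT will be built.

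For the OT itself I would use the subset-selection paradigm. The receiver (holding choice bit $b$) picks a uniformly random balanced partition $[n] = I_0 \sqcup I_1$ and sends $(I_0, I_1)$ to the sender, using $I_b$ as her ``good'' set and $I_{1-b}$ as her ``bad'' set. By the receiver-security parameter $p$ of $C$, on each coordinate the sender's view conditioned on $\OA=\OB$ is $p$-close to his view conditioned on $\OA\ne\OB$, and a hybrid over coordinates gives that the sender cannot tell $I_0$ from $I_1$, hence cannot learn $b$. The sender then picks two independent two-universal hash functions $h_0,h_1$ and transmits $(\sigma_0 \oplus h_0(X^\Bc|_{I_0}), \sigma_1 \oplus h_1(X^\Bc|_{I_1}))$ together with a syndrome of a good linear code (information reconciliation) that allows the receiver to repair the $\approx\eps_0|I_b|$ errors inside $X^\Ac|_{I_b}$ relative to $X^\Bc|_{I_b}$. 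After correcting, the receiver evaluates $h_b$ and unmasks $\sigma_b$, while privacy amplification against the sender-security parameter $q=p$ should leave her view of $\sigma_{1-b}$ statistically hidden.

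The main obstacle, and the place where the quantitative condition in the statement gets used, is balancing three competing budgets in the min-entropy accounting. On one side, the sender security of the WBSC implies that per invocation, the conditional min-entropy of $\OA_i$ given the sender's full view is at least (roughly) $1 - \frac{2\eps^2}{\eps^2+(1-\eps)^2}$ bits, so the leftover-hash lemma will extract $\Omega\!\left(n\cdot\left(1-\frac{2\eps^2}{\eps^2+(1-\eps)^2}\right)\right)$ uniform bits from $X^\Bc|_{I_{1-b}}$. On the other side, reconciliation costs $O(n\cdot H(\eps_0))$ bits of entropy and the $n$-fold hybrid for receiver security contributes a leakage scaling with $1-(1-p)^2$. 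The inequality $150(1-(1-p)^2) < \left(1-\frac{2\eps^2}{\eps^2+(1-\eps)^2}\right)^2$ is precisely the bookkeeping that forces the extractable entropy to strictly exceed the entropy spent on reconciliation and on the sender distinguishing $b$, with the factor $150$ absorbing the constants from Chernoff, privacy amplification, and the hybrid. Once this budget check goes through, the semi-honest statistical simulators for both parties can be read off directly from the WBSC's receiver- and sender-security guarantees, and since every step above uses $\poly(\pk,1/\eps,1/(1-2\eps))$ channel invocations and polynomial-time local computation and depends only on the channel output bits (the parties never touch $\VA,\VB$ beyond what the WBSC already exposes), the overall protocol $\Delta$ satisfies all the properties claimed.
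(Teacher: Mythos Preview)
First, a framing note: the paper does not prove this theorem. It is quoted from \cite{wullschleger2009oblivious} and used as a black box, so there is no proof in the paper to compare your attempt against.

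That said, your construction has a genuine gap. With a \emph{uniformly random} balanced partition $I_0\sqcup I_1$, the two halves are statistically interchangeable from every party's point of view: each $X^{\Ac}|_{I_j}$ differs from $X^{\Bc}|_{I_j}$ in about $\eps_0|I_j|$ positions, and the receiver has no more information about the bits in $I_b$ than about those in $I_{1-b}$. Consequently, any syndrome the sender publishes that lets the receiver correct the errors on $I_b$ equally lets her correct the errors on $I_{1-b}$; she then evaluates $h_{1-b}$ and unmasks $\sigma_{1-b}$ as well, and sender security collapses. The subset-selection paradigm you invoke works for \emph{erasure}-type channels precisely because the receiver knows which positions are erasures and can load them into $I_{1-b}$; a symmetric noisy channel gives her no such side information, so a random split creates no asymmetry whatsoever. (Relatedly, your receiver-security ``hybrid'' is vacuous: since the partition is uniform and independent of the transcript, the sender trivially cannot distinguish $I_0$ from $I_1$; this says nothing useful about hiding $b$.)

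The way \cite{wullschleger2009oblivious} manufactures the missing asymmetry is the two-use trick: send each random bit through the channel twice and have the receiver \emph{accept} a position only when her two received copies agree. On accepted positions the residual error probability is $\eps^2/(\eps^2+(1-\eps)^2)$, so her per-position advantage is exactly $1-\tfrac{2\eps^2}{\eps^2+(1-\eps)^2}$; on rejected positions she has none. Now she can place accepted positions in $I_b$ and rejected ones in $I_{1-b}$, and the reconciliation and privacy-amplification budgets separate. The quantity $1-(1-p)^2$ is the leakage accumulated over a pair of channel uses, and the hypothesis $150\,(1-(1-p)^2)<\bigl(1-\tfrac{2\eps^2}{\eps^2+(1-\eps)^2}\bigr)^2$ is precisely the comparison of that leakage against the (squared) advantage created by the pair trick. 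Your final paragraph's entropy bookkeeping is in the right spirit, but it must be applied to this corrected protocol, not to the random-partition one. (Minor: the sender hashes $X^{\Ac}|_{I_j}$, not $X^{\Bc}|_{I_j}$; she does not know $X^{\Bc}$.)
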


\cref{thm:wullschleger} considers channels with $\mu=0$, and $\eps_0=\eps_1$. This is equivalent to saying that the channel is balanced (\ie each of the output bits is uniform) and has $\alpha$-agreement, for $\alpha=\half-\eps_0$.
When stated in this form, \cref{thm:wullschleger} says that such a channel implies OT if $p=O(\alpha^2)$, and in particular, it is required that $p<\alpha$.

\subsubsection{Specialized  Passive Weak Binary Symmetric Channels}

We will be interested in a specific choice of parameters for passive WBSC's, and for this choice, it will be more convenient to work with the following stronger notion of a channel (that is easier to state and argue about, as security is defined in the same terms for both parties).

\begin{definition}[Specialized passive weak binary symmetric channels] \label{def:SWBSC}
	An $(\epsilon_0,p)$-\SWBSC is a channel $C = ((\VA,\OA),(\VB,\OB))$ such that the following holds:
	\begin{itemize}
		\item Correctness: $\pr{\OA=0}=\half$, and for every $b_\Ac \in \set{0,1}$, \\$\pr{\OB\neq\OA \mid \OA=b_\Ac} = \eps_0$.
		
		\item Receiver security: $\VA|_{\OA=\OB} \sindist_p \VA|_{\OA\neq \OB}$.
		\item Sender security:  $\VB|_{\OB=\OA} \sindist_p \VB|_{\OB\neq \OA}$.
	\end{itemize}
\end{definition}

\def\PropWBSC
{	
	An $(\eps_0,p)$-\SWBSC is  a $(0,\eps_0,\eps_0,2p,2p)$-\WBSC.	
}

\begin{proposition}\label{clm:spcl->WBSC}~
	\PropWBSC
\end{proposition}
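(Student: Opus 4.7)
The plan is to verify each of the three conditions of a $(0,\eps_0,\eps_0,2p,2p)$-\WBSC directly from the three conditions of an $(\eps_0,p)$-\SWBSC. The correctness condition is immediate: the \SWBSC\ requirement $\Pr[\OA=0]=\tfrac12$ gives $\mu=0$, and the symmetric crossover probability $\eps_0$ yields $\eps_0=\eps_1$ in the \WBSC\ sense.

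Receiver security is also essentially free. By \cref{def:channel}, $\OA$ is determined by $\VA$, so the SWBSC guarantee $\VA|_{\OA=\OB}\sindist_p \VA|_{\OA\neq\OB}$ is equivalent to $(\VA,\OA)|_{\OB=\OA}\sindist_p (\VA,\OA)|_{\OB\neq\OA}$, which is strictly stronger than the $2p$ bound demanded by the \WBSC.

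The only step that requires a small argument is sender security. Here the \SWBSC\ gives $\VB|_{E_0}\sindist_p \VB|_{E_1}$ for $E_0\eqdef\{\OB=\OA\}$, $E_1\eqdef\{\OB\neq\OA\}$, whereas the \WBSC\ demands $\VB|_{\OB=b_\Bc,\OA=0}\sindist_{2p} \VB|_{\OB=b_\Bc,\OA=1}$ for each $b_\Bc\in\zo$. The key observation is that by the correctness conditions we have
\[
\pr{\OB=0,\OA=0}=\pr{\OB=1,\OA=1}=\tfrac{1-\eps_0}{2},\qquad \pr{\OB=1,\OA=0}=\pr{\OB=0,\OA=1}=\tfrac{\eps_0}{2},
\]
so $\OB$ is uniform conditioned on $E_0$ and also on $E_1$. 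Partitioning $\VB$ according to the (deterministic) value of $\OB$ it carries, and choosing the optimal test set independently on the two halves, gives
\begin{align*}
\SD(\VB|_{E_0},\VB|_{E_1}) \;=\; \tfrac12\SD\bigl(\VB|_{\OB=0,\OA=0},\VB|_{\OB=0,\OA=1}\bigr)
 \;+\; \tfrac12\SD\bigl(\VB|_{\OB=1,\OA=1},\VB|_{\OB=1,\OA=0}\bigr).
\end{align*}
Since the left-hand side is at most $p$ and both summands on the right are non-negative, each summand is at most $2p$, which is exactly the \WBSC\ sender-security bound for both choices of $b_\Bc$.

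There is no real obstacle; the factor of two simply accounts for undoing the uniform mixture over $\OB$ on each side of the conditioning. Collecting the three verified conditions finishes the proof.
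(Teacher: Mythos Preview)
Your proof is correct and rests on the same key observation as the paper: $\OB$ is uniform conditioned on each of $\{\OA=\OB\}$ and $\{\OA\neq\OB\}$, so the \SWBSC\ sender bound is a $\tfrac12$--$\tfrac12$ average of the two conditional distances the \WBSC\ needs. The paper phrases this as a contradiction argument (combining two distinguishers $\Dc_0,\Dc_1$ into a single $\Dc'$ that dispatches on $\OB$), whereas you compute the statistical-distance decomposition directly; these are dual formulations of the same step, and your version is slightly cleaner.
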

The proof for \cref{clm:spcl->WBSC} appears in \cref{sec:appendix}.

\subsection{Additional Inequalities}

The following fact is proven in \cref{sec:appendix}.
\def\PropSimple
{
	The following holds for  every  $b\in(0,1/2)$ and $\ell \in \N$ such that $b\ell < 1/4$. 	$$\frac{(1/2+b)^\ell}{(1/2+b)^\ell+(1/2-b)^\ell} \in [\half(1+b\ell),\half(1+3b\ell) ].$$
}

\begin{proposition}\label{Prop:simple}
	\PropSimple
\end{proposition}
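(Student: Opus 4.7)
The plan is to rewrite the quantity as $\tfrac{1}{2}(1+f)$, where
\[
f \;:=\; \frac{(1/2+b)^\ell-(1/2-b)^\ell}{(1/2+b)^\ell+(1/2-b)^\ell},
\]
thereby reducing the proposition to showing $f\in[b\ell,\,3b\ell]$. Dividing numerator and denominator by $(1/2-b)^\ell$ and setting $v:=\bigl(\tfrac{1+2b}{1-2b}\bigr)^\ell$, one has $f=(v-1)/(v+1)=\tanh(L)$ where $L:=\tfrac{\ell}{2}\ln\tfrac{1+2b}{1-2b}=\ell\cdot\operatorname{arctanh}(2b)$. From here the argument becomes a calibrated estimate of $\tanh(L)$.

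Next I would sandwich $L$. The lower bound $L\ge 2b\ell$ is immediate from the power series $\operatorname{arctanh}(x)=x+x^3/3+\cdots\ge x$ on $(0,1)$. For the upper bound, since $\operatorname{arctanh}'(t)=1/(1-t^2)$ is positive and increasing on $(0,1)$, the trivial bound $\operatorname{arctanh}(2b)=\int_0^{2b}\tfrac{dt}{1-t^2}\le \tfrac{2b}{1-4b^2}$ applies. The hypothesis $b\ell<1/4$ together with $\ell\ge 1$ forces $b<1/4$, hence $1-4b^2\ge 3/4$, and therefore $L\le 8b\ell/3$.

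Finally I would feed these into standard $\tanh$ estimates. Concavity of $\tanh$ on $[0,\infty)$ gives $\tanh(L)\le L\le 8b\ell/3<3b\ell$, which is the upper bound. For the lower bound, I would use the inequality $\tanh(y)\ge y-y^3/3$ on $[0,\infty)$, proved in one line by setting $h(y):=\tanh(y)-y+y^3/3$ and noting $h(0)=0$ and $h'(y)=y^2-\tanh^2(y)\ge 0$. Since $2b\ell<1/2$, this yields
\[
\tanh(L)\;\ge\;\tanh(2b\ell)\;\ge\;2b\ell\Bigl(1-\tfrac{(2b\ell)^2}{3}\Bigr)\;\ge\;2b\ell\cdot\tfrac{11}{12}\;>\;b\ell,
\]
completing the proof.

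The only real obstacle is constant bookkeeping: the asymmetric interval $[b\ell,\,3b\ell]$ is not symmetric around $2b\ell$ (the ``linearization'' value coming from $\operatorname{arctanh}(2b)\approx 2b$ and $\tanh(y)\approx y$), and the hypothesis $b\ell<1/4$ is calibrated precisely so that both $(1-4b^2)^{-1}\le 4/3$ and $(2b\ell)^2/3\le 1/12$. Any weakening of the hypothesis would force a tightening of these constants, while a slightly stronger hypothesis would let one sharpen the conclusion to $[2b\ell(1-o(1)),\,2b\ell(1+o(1))]$.
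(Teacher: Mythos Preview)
Your proof is correct and takes a genuinely different route from the paper's. The paper proceeds combinatorially: it expands $(1+2b)^\ell$ and $(1-2b)^\ell$ by the binomial theorem, separates even and odd powers of $2b$, and bounds the resulting sums using $\binom{\ell}{k}(2b)^k\le(2b\ell)^k$ together with $2b\ell<1/2$. You instead recognise the expression as $\tfrac12(1+\tanh L)$ with $L=\ell\cdot\operatorname{arctanh}(2b)$, sandwich $L\in[2b\ell,\,8b\ell/3]$ via the Taylor series and integral bound for $\operatorname{arctanh}$, and finish with the standard inequalities $y-y^3/3\le\tanh y\le y$.

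Your approach is cleaner and makes the structure transparent: the ratio is literally a hyperbolic tangent, so the interval $[b\ell,3b\ell]$ around the linearisation $2b\ell$ is immediate once $L$ and $\tanh$ are controlled. It also makes clear exactly where the hypothesis $b\ell<1/4$ enters (once to bound $1-4b^2$, once to bound $(2b\ell)^2/3$), and as you note it would sharpen painlessly under a stronger hypothesis. The paper's argument, by contrast, is more elementary---it uses nothing beyond the binomial theorem and geometric series---which may be preferable for readers unfamiliar with hyperbolic functions, at the cost of longer and less illuminating computations.
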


\newcommand{\amin}{\alpha_{\min}}
\newcommand{\amax}{\alpha_{\max}}
\newcommand{\UsedC}{\myOptName{Mng}}
\newcommand{\teps}{\widetilde{\eps}}
\newcommand{\tdelta}{\widetilde{\delta}}
\newcommand{\talph}{\widetilde{\alpha}}
\newcommand{\talpho}{\widetilde{\alpha}}

\section{Amplification of Channels with Small Log-Ratio  Leakage}\label{sec:ChannelsAmp}	

In this section we formally define log-ratio leakage and prove our amplification results. We start in \Cref{sec:ChannelsAmp:IT} with the information theoretic setting, in which we restate and prove \cref{thm:Amp:IT:Inf} and \Cref{thm:triviality:IT:Inf}. In the full version of this paper we extend our result to the computational setting, restating and proving \cref{thm:Amp:C:Inf}.

\subsection{The Information Theoretic Setting}\label{sec:ChannelsAmp:IT}

We start with a definition of log-ratio leakage (restating \cref{def:RelLeakage:IT:Inf} with more formal notation).

\begin{definition}[Log-ratio  leakage]\label{def:LRLeakage:IT}
A channel $((\OA,\VA),(\OB,\VB))$ has {\sf $(\eps,\delta)$-\Leak} if
\begin{itemize}
	\item Receiver security:  $\VA|_{\OA=\OB} \rindist_{\eps,\delta} \VA|_{\OA \ne \OB}$.
	\item Sender security:  $\VB|_{\OA=\OB} \rindist_{\eps,\delta} \VB|_{\OA \ne \OB}$.
\end{itemize}
\end{definition}

The following theorem is a formal restatement of \cref{thm:Amp:IT:Inf}

\begin{theorem}[Small log-ratio  leakage implies OT]\label{thm:main:IT}
	There exists an (oblivious)  \ppt protocol $\Delta$ and constant $c_1>0$ such that the following holds. Let   $\eps,\delta\in [0,1]$  be such that   $\delta \le\eps^2$,  and let $\alpha \le \amax <1/8$ be such that    $\alpha \ge \max\set{c_1 \cdot \eps^2,\amax/2}$. Then for any channel  $C$ with  $(\eps,\delta)$-\Leak and $\alpha$-agreement,  protocol  $\Delta^C(1^\pk,1^{\floor {1/\amax}})$ is a  semi-honest   statistically secure OT in the $C$-hybrid model. 
\end{theorem}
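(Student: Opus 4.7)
The plan is to build an oblivious amplification protocol that transforms any $\alpha$-agreement, $(\eps,\delta)$-\Leak channel $C$ into a specialized passive weak binary symmetric channel whose parameters satisfy the hypothesis of \cref{thm:wullschleger}, and then to invoke that theorem to obtain OT. The amplification will follow exactly the template from \cref{sec:technique}: first the (inefficient) repetition protocol $\Delta_\ell^C$, then its efficient recursive implementation $\Lambda_d^C$ that has the same output distribution but runs in time $\poly(\ell)$ with $\ell=2^d$. Throughout we fix $\ell$ to be roughly $\lfloor 1/\amax\rfloor$, so that $\ell = \Theta(1/\alpha)$ by the assumption $\amax/2\le \alpha\le \amax$.

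\textbf{Step 1 (Balancing).} Observe that \cref{def:LRLeakage:IT} does not require $C$ to be balanced, while the \SWBSC notion does. I would first prepend a one-bit symmetrization phase: party $\Ac$ samples and sends $r\gets\zo$, and both parties XOR $r$ into their channel outputs. This is oblivious, preserves the agreement parameter, and clearly preserves $(\eps,\delta)$-\Leak for both parties (the extra random bit is independent of the event $\set{\OA=\OB}$). So WLOG the channel is balanced.

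\textbf{Step 2 (Amplification and agreement).} Define $\Delta_\ell^C$ and its efficient recursive variant $\Lambda_d^C$ as in \cref{sec:technique}. By induction on $d$, the joint distribution of the parties' outputs in $\Lambda_d^C$ is identical to that in $\Delta_{2^d}^C$, and the view in $\Lambda_d^C$ can be simulated (in both directions, since the protocol is oblivious) from the $2^d$ ``useful'' invocations of $C$ plus public messages. Thus it suffices to analyze $\Delta_\ell^C$. The protocol halts in a given iteration iff $\oOA\oplus\oOB\in\set{0^\ell,1^\ell}$, and conditioned on halting, $\Pr[\oOA=\oOB]=(1/2+\alpha)^\ell/((1/2+\alpha)^\ell+(1/2-\alpha)^\ell)$. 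Applying \cref{Prop:simple} with $b=\alpha$ and $\ell=\lfloor 1/\amax\rfloor$ (so $\alpha\ell\in[\tfrac14\cdot\tfrac12,\tfrac14]$) yields a constant $\alpha'\ge c$ for some absolute $c>0$ (e.g.\ $c=1/20$). The expected running time of $\Lambda_d^C$ is $\poly(\ell)=\poly(1/\amax)$.

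\textbf{Step 3 (Leakage via repetition).} For the leakage, condition on the event that the protocol halts in iteration $t$: since previous iterations are independent and do not depend on the final outcome, they can be erased by data processing. The remaining view of (say) party $\Ac$ is a fixed, output-insensitive function of the $\ell$ samples from $\VA|_{\OA=\OB}$ or $\VA|_{\OA\ne\OB}$ (plus the public set $\cs$, which is a function of those samples). By \cref{fact:LR:DP} and \cref{thm:rep} applied to $D_0=\VA|_{\OA=\OB}$ and $D_1=\VA|_{\OA\ne\OB}$, the conditional views are $(\eta,\ell\delta+\delta')$-log-ratio close with
\[
\eta=\ell\eps(e^\eps-1)+\eps\sqrt{2\ell\ln(1/\delta')}.
\]
Using $\delta\le\eps^2$, $\ell\le 1/(4c_1\eps^2)$ (from $\alpha\ge c_1\eps^2$), and choosing $\delta'$ to be a sufficiently small absolute constant, both $\eta$ and $\ell\delta+\delta'$ become an arbitrarily small constant $p_0$ by taking $c_1$ large. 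Converting log-ratio closeness to statistical distance (using $D_0\rindist_{\eta,\mu}D_1\Rightarrow D_0\sindist_{(e^\eta-1)+\mu}D_1$) gives conditional statistical distance at most some $p\le p_0'$, again arbitrarily small for $c_1$ large. The same analysis applies symmetrically to $\VB$.

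\textbf{Step 4 (Reduction to OT).} Steps 2--3 show that $\CHAN(\Lambda_d^{C})$ is (after the balancing phase) an $(\eps_0,p)$-\SWBSC in the sense of \cref{def:SWBSC}, with $\eps_0=\tfrac12-\alpha'\le \tfrac12-c$ bounded away from $1/2$ and $p$ arbitrarily small. By \cref{clm:spcl->WBSC} this is a $(0,\eps_0,\eps_0,2p,2p)$-\WBSC. Now pick $\eps\in(\eps_0,1/2)$ (still bounded away from $1/2$); for $c_1$ large enough, $p$ is small enough that Wullschleger's condition $150(1-(1-2p)^2)<(1-2\eps^2/(\eps^2+(1-\eps)^2))^2$ is satisfied, so \cref{thm:wullschleger} yields a semi-honest statistically secure OT in the $\CHAN(\Lambda_d^C)$-hybrid model, running in time $\poly(\pk)$. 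Composing with the oblivious simulation of $\Lambda_d^C$ by $C$ gives the required OT protocol $\Delta^C(1^\pk,1^{\lfloor 1/\amax\rfloor})$ in the $C$-hybrid model.

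\textbf{Main obstacle.} The delicate point is the joint parameter juggling in Step~3: one must simultaneously keep the number of repetitions $\ell=\Theta(1/\alpha)$ small enough that $\eta$ and $\ell\delta+\delta'$ are tiny (which forces $c_1$ large), while keeping $\alpha\ell=\Theta(1)$ large enough that \cref{Prop:simple} gives constant $\alpha'$ (which limits how large $\ell$ can be). The constants have to be chosen in a fixed order --- first $\alpha'$ and the target statistical-leakage $p$ to satisfy Wullschleger's inequality, then $\delta'$ and $c_1$ to realize those targets --- and the assumption $\delta\le\eps^2$ is exactly what makes the ``$\ell\delta$'' term controllable, which is the reason the log-ratio measure is essential and statistical-distance leakage cannot be amplified in this regime.
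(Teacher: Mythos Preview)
Your plan is correct and matches the paper's proof almost step for step: the same repetition/recursion protocol, the same agreement computation via \cref{Prop:simple}, the same leakage bound via \cref{thm:rep} plus data processing, and the same endgame through \cref{clm:spcl->WBSC} and \cref{thm:wullschleger}. Two small points the paper handles explicitly that you skate over. First, your choice $\ell=\lfloor 1/\amax\rfloor$ is inconsistent with the range $\alpha\ell\in[1/8,1/4]$ you then assert (it would give $\alpha\ell\approx 1$, and \cref{Prop:simple} needs $b\ell<1/4$); the paper takes $\ell=2^{\lfloor\log(1/\amax)\rfloor-2}\approx 1/(4\amax)$ for exactly this reason. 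Second, the recursive $\Lambda_d^C$ has only \emph{expected} polynomial running time, whereas the theorem promises a strictly \ppt $\Delta$; the paper fixes this by truncating after a fixed polynomial number of steps (the $\bound_{n\cdot t}$ construction), using Markov to bound the statistical perturbation, and invoking \cref{Prop:CondSD} to show the truncated channel is still an $(\eps_0,p')$-\SWBSC with $p'$ small enough for Wullschleger. Neither point changes your architecture, but both are needed to actually close the argument.
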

%
%

Before proving  \cref{thm:main:IT}, we first show that it is tight. The proof of the following theorem is given in the full paper.

\begin{theorem}[Triviality of channels with large leakage]\label{thm:triviality:IT}
There exists a constant $c_2>0$, such that for every $\eps>0$ there is a two-party protocol (with no inputs) where at the end of the protocol, every party $\Pc \in \ABc$ has output $\OP$ and view $\VP$. Moreover, the induced channel $C=((\VA,\OA),(\VB,\OB))$ has $\alpha$-agreement,and $(\eps,0)$-leakage, for $\alpha \ge c_2 \cdot \eps^2$.
\end{theorem}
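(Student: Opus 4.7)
The plan is to analyze the symmetric randomized-response protocol, which is the natural candidate for making the agreement--leakage trade-off from \cref{thm:main:IT} tight. Fix a bias parameter $\gamma \in (0, 1/2)$, to be chosen proportional to $\eps$, and define the no-input protocol as follows: party $\Ac$ samples $\OA \getsr \zo$ uniformly and an independent noise bit $N_\Ac \in \zo$ with $\Pr[N_\Ac = 1] = 1/2 - \gamma$, then sends $M = \OA \oplus N_\Ac$ to $\Bc$. Party $\Bc$ samples an independent noise bit $N_\Bc$ with the same bias and outputs $\OB = M \oplus N_\Bc$. Up to information-preserving equivalence, $\VA$ carries $(\OA, N_\Ac)$ and $\VB$ carries $(M, N_\Bc)$, and the key identity $\OA \oplus \OB = N_\Ac \oplus N_\Bc$ holds. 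The agreement is immediate:
\[
\Pr[\OA = \OB] \;=\; \Pr[N_\Ac = N_\Bc] \;=\; (1/2 + \gamma)^2 + (1/2 - \gamma)^2 \;=\; 1/2 + 2\gamma^2,
\]
so the induced channel has $\alpha$-agreement with $\alpha = 2\gamma^2$.

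For the leakage of $\Ac$, I would compare $\VA|_{\OA = \OB}$ to $\VA|_{\OA \ne \OB}$. Since $\OA$ is uniformly distributed and independent of $(N_\Ac, N_\Bc)$, and hence of the event $\{N_\Ac = N_\Bc\} = \{\OA = \OB\}$, its marginal is identical in the two conditional worlds and contributes the factor $1$ to every probability ratio; it therefore suffices to compare the conditional distributions of $N_\Ac$ alone. A direct application of Bayes' rule yields
\[
\Pr[N_\Ac = 0 \mid \OA = \OB] = \frac{(1/2+\gamma)^2}{1/2+2\gamma^2}, \qquad \Pr[N_\Ac = 0 \mid \OA \ne \OB] = \frac{(1/2+\gamma)(1/2-\gamma)}{1/2-2\gamma^2},
\]
whose ratio simplifies (using $1/2 - 2\gamma^2 = (1-2\gamma)(1+2\gamma)/2$) to $\frac{(1+2\gamma)^2}{1+4\gamma^2}$; the atom $N_\Ac = 1$ gives the reciprocal of this expression, which is at most $1$, so both directions of the log-ratio condition are controlled by the single quantity above. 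By the symmetric roles of the two parties in the protocol (swap the two independent noise bits and use $\OB \oplus N_\Bc$ in place of $\OA \oplus N_\Ac$), the same worst-case ratio governs the leakage of $\VB$.

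Applying $\log \frac{(1+2\gamma)^2}{1+4\gamma^2} = 2\log(1+2\gamma) - \log(1+4\gamma^2) \le 4\gamma$ (from $\log(1+x) \le x$ together with the non-negativity of $\log(1+4\gamma^2)$), the choice $\gamma = \eps/4$ delivers the required $(\eps,0)$-log-ratio leakage on both sides and simultaneously $\alpha = 2\gamma^2 = \eps^2/8$. Hence the theorem holds with $c_2 = 1/8$. For the regime where $\eps$ is so large that $\gamma = \eps/4$ exceeds $1/2$, the statement is handled trivially by capping $\gamma$ at a constant just below $1/2$ and shrinking $c_2$ if necessary, since in that regime the agreement is a positive constant that dominates $c_2 \eps^2$.

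The main obstacle, to the extent there is one, is identifying the correct worst-case atom and verifying the inequality cleanly enough to pin down an explicit positive $c_2$. This reduces to a finite case analysis over the two values of $N_\Ac$ (after factoring out the independent uniform $\OA$) together with the symmetric check for $\Bc$. The rest of the argument is routine, and closely mirrors the textbook analysis of Warner's randomized response scheme.
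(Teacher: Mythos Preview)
Your approach is the randomized-response construction the paper alludes to (the paper defers the actual proof to the full version, but names randomized response explicitly in the introduction). The agreement computation, the factoring out of the independent uniform bit in $\VA$ and of $M$ in $\VB$, and the $\Ac/\Bc$ symmetry argument are all correct. There is, however, one computational slip in the leakage bound that you should fix.

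You write that ``the atom $N_\Ac = 1$ gives the reciprocal of this expression, which is at most $1$, so both directions of the log-ratio condition are controlled by the single quantity above.'' This is not right. Since the event $\{N_\Ac \ne N_\Bc\}$ is symmetric in the two values, one has $\Pr[N_\Ac = 0 \mid \OA \ne \OB] = \Pr[N_\Ac = 1 \mid \OA \ne \OB] = 1/2$, and hence the forward ratio at $N_\Ac = 1$ equals $\frac{(1-2\gamma)^2}{1+4\gamma^2}$, which is \emph{not} the reciprocal of $\frac{(1+2\gamma)^2}{1+4\gamma^2}$. What actually controls the other direction of the log-ratio condition is the backward ratio at $N_\Ac = 1$, namely
\[
\frac{\Pr[N_\Ac = 1 \mid \OA \ne \OB]}{\Pr[N_\Ac = 1 \mid \OA = \OB]} \;=\; \frac{1+4\gamma^2}{(1-2\gamma)^2},
\]
and this is strictly larger than $\frac{(1+2\gamma)^2}{1+4\gamma^2}$ (their product is $\frac{(1+4\gamma^2)^2}{(1-4\gamma^2)^2} > 1$). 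So your bound $\le e^{4\gamma}$ covers only one of the two required inequalities.

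The repair is immediate and does not change the argument's shape: for $\gamma \in (0,1/4]$, using $-\log(1-x) \le x/(1-x) \le 2x$ on $[0,1/2]$,
\[
\log\frac{1+4\gamma^2}{(1-2\gamma)^2} \;=\; \log(1+4\gamma^2) - 2\log(1-2\gamma) \;\le\; 4\gamma^2 + 8\gamma \;\le\; 9\gamma.
\]
Taking $\gamma = \eps/9$ in place of $\eps/4$ then yields $(\eps,0)$-leakage in both directions together with $\alpha = 2\gamma^2 = 2\eps^2/81$, so the theorem holds with $c_2 = 2/81$. Your handling of the large-$\eps$ regime by capping $\gamma$ is fine.
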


Together, the two theorems show that if $\alpha \ge c_1 \cdot \eps^2$ then the channel yields OT, and if $\alpha \le c_2 \cdot \eps^2$ then such a channel can be simulated by a two-party protocol with no inputs (and thus cannot yield OT with information theoretic security).

The proof of \cref{thm:main:IT} is an immediate consequence of the following two  lemmata.

Recall (\cref{def:PrtoChannel}) that $\CHAN(\Pi)$ denotes the channel induced by a random execution of the no-input, Boolean output protocol  $\pi$.


\begin{lemma}[Gap amplification]\label{lemma:main:GA}
	There exists an (oblivious)  \ppt  protocol $\Delta$ and constant  $c_1>0$ such that the following holds. Let $\eps,\delta,\alpha,\amax$ be parameters satisfying requirements in  \cref{thm:main:IT} \wrt $c_1$.  Let  $C$ be a channel with  $(\eps,\delta)$-\Leak and $\alpha$-agreement, let $\ell = 2^{(\floor{\log 1/\amax}-2)}$ and let  $\tC=\CHAN(\Delta^C(1^\ell))$. Then
	\begin{itemize}
		\item $\tC$  has $\talpho\in [1/32,3/8]$-\Agr.
		\item For  any  $\delta'\in(0,1)$: $\tC$  has  $(\teps,\tdelta)$-\Leak  for $\teps=2\ell \epsilon^2+\epsilon\sqrt{2\ell\ln({1/\delta'})}$ and  $\tdelta= \delta'+\ell\delta$.
	\end{itemize}
\end{lemma}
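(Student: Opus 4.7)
\medskip

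\noindent\textbf{Plan of proof.} The protocol $\Delta$ will be the efficient recursive construction $\Lambda_d^C$ sketched in \cref{subsec:Efficient}, instantiated with $d=\floor{\log(1/\amax)}-2$ so that $2^d=\ell$. My first step is to argue that, for analysis purposes, it suffices to study the simpler (but exponentially inefficient) protocol $\Delta_\ell^C$ of \cref{proto:technique:OT:IT}, which makes $\ell$ parallel calls to $C$ and halts in an iteration exactly when $\oOA\xor\oOB\in\{0^\ell,1^\ell\}$. Concretely, I will argue by induction on $d$ that the channel $\CHAN(\Lambda_d^C)$ is identically distributed to $\CHAN(\Delta_{2^d}^C)$: the base case $d=0$ is trivial, and the inductive step follows because the $\Delta_2$ wrapper that $\Lambda_d$ applies to two copies of $\Lambda_{d-1}$ preserves the distribution of ``useful'' tuples. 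Given this, I analyze agreement and leakage of $\Delta_\ell^C$ directly.

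\medskip

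\noindent\textbf{Agreement of $\Delta_\ell^C$.} Let $E$ denote the event (per iteration) that $\oOA\xor\oOB\in\{0^\ell,1^\ell\}$; the protocol halts precisely when $E$ occurs, and conditioned on halting the parties agree iff $\oOA=\oOB$ in the halting iteration. Thus
\[
\Pr[\tOA=\tOB]=\frac{(\half+\alpha)^\ell}{(\half+\alpha)^\ell+(\half-\alpha)^\ell}.
\]
From $\ell=2^{\floor{\log 1/\amax}-2}$ one gets $\ell\in[1/(8\amax),\,1/(4\amax)]$, which combined with $\alpha\in[\amax/2,\amax]$ yields $\alpha\ell\in[1/16,1/4]$. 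Plugging this into \cref{Prop:simple} gives $\talpho\in[\half(1+\alpha\ell),\half(1+3\alpha\ell)]$, hence $\talpho-\half\in[1/32,3/8]$ as required (modulo a harmless tweak of the constant to keep $\alpha\ell$ strictly below $1/4$).

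\medskip

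\noindent\textbf{Leakage of $\Delta_\ell^C$.} Fix $\Pc\in\ABc$. The view of $\Pc$ in $\Delta_\ell^C$ consists of the transcripts of the non-halting iterations followed by the transcript of the halting iteration. Conditioned on halting, the non-halting iterations are (jointly) independent of $\{\tOA=\tOB\}$, so only the halting iteration matters for the leakage. Because the halting event $E$ factorizes over the $\ell$ parallel calls in the halting iteration, conditioning that iteration's view on $\{\tOA=\tOB\}$ makes it equal in distribution to $\ell$ i.i.d.\ copies of $\VP\mid \OA{=}\OB$, and conditioning on $\{\tOA\ne\tOB\}$ makes it $\ell$ i.i.d.\ copies of $\VP\mid \OA{\ne}\OB$ (each individual sample from $C$ conditioned on $E$ is distributed as a sample conditioned on either $\OA=\OB$ or $\OA\ne\OB$ accordingly). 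By hypothesis $\VP|_{\OA=\OB}\rindist_{\eps,\delta}\VP|_{\OA\ne\OB}$, so by \cref{thm:rep} applied with the chosen $\delta'$,
\[
(\VP|_{\OA=\OB})^\ell \;\rindist_{\,\ell\eps(e^\eps-1)+\eps\sqrt{2\ell\ln(1/\delta')},\;\ell\delta+\delta'}\;(\VP|_{\OA\ne\OB})^\ell.
\]
Using $e^\eps-1\le 2\eps$ (valid since $c_1$ is large and $\eps$ is small) yields the claimed $\teps=2\ell\eps^2+\eps\sqrt{2\ell\ln(1/\delta')}$ and $\tdelta=\delta'+\ell\delta$. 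Finally, by \cref{fact:LR:DP} (data processing), the same bound carries over from the $\ell$ raw samples to the actual view $\tVP$, which is a (deterministic) function of them together with the independent prior-iteration transcripts.

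\medskip

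\noindent\textbf{Anticipated obstacles.} The main subtle point is the second bullet: I need to be careful that conditioning on $\{\tOA=\tOB\}$ acts on each of the $\ell$ parallel channel calls of the halting iteration independently and in the ``right'' way (mapping to $\VP\mid\OA{=}\OB$ coordinate-wise), rather than introducing correlations across coordinates. This is true because the halting event $E$ is a disjoint union of two symmetric events ($\oOA=\oOB$ and $\oOA=\neg\oOB$), each of which factorizes. A second point requiring care is the equivalence between $\Lambda_d^C$ and $\Delta_{2^d}^C$ at the channel level: I will need to verify that the inductive wrapping by $\Delta_2$ exactly reproduces the distribution on the $2^d$ useful channel calls, including their joint distribution with the final outputs. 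The remaining work -- the parameter checks for $\alpha\ell\in[1/16,1/4]$ and the application of \cref{Prop:simple} and \cref{thm:rep} -- is routine once these two structural facts are in hand.
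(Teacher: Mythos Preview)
Your overall strategy matches the paper's: analyze the inefficient protocol $\Delta_\ell^C$ for agreement (via \cref{Prop:simple}) and leakage (via \cref{thm:rep} plus data processing), and lift the bounds to the efficient recursive protocol. The agreement and leakage analyses of $\Delta_\ell^C$ are essentially correct.

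There is, however, a genuine gap in your reduction from the efficient protocol to the inefficient one. You assert that $\CHAN(\Lambda_d^C)$ is \emph{identically distributed} to $\CHAN(\Delta_{2^d}^C)$. This is false. The views in the two protocols have different structures: in $\Delta_{2^d}^C$ every failed iteration wastes exactly $2^d$ channel calls, whereas in $\Lambda_d^C$ failures occur at varying recursion depths and waste varying numbers of calls (this is precisely why the efficient protocol has polynomial expected running time while the inefficient one is exponential --- if the induced channels were identical, so would be the distribution of the number of calls). Consequently an inductive proof of channel equivalence will not go through.

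What the paper proves instead (and what you correctly hint at in your ``anticipated obstacles'') is the weaker statement of \cref{clm:ChannelsAmp:Efficient}: the $2^d$ \emph{important} calls of $\Lambda_d^C$ --- those that actually determine the output --- are jointly distributed as $C^\ell|_E$, exactly like the final iteration of $\Delta_{2^d}^C$. This suffices for agreement because the outputs depend only on the important calls. For leakage one needs an additional step: the full view $\hVP$ is a randomized function of the important calls, and this function is the \emph{same} whether $\hOA=\hOB$ or not (the discarded calls at every recursion level are resampled independently of the agreement event). Hence by \cref{fact:LR:DP} the log-ratio bound on the important calls transfers to $\hVP$. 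You should replace the identical-distribution claim with this two-part argument.
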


\newcommand{\bound}{\MathAlgX{bound}}
\begin{definition}[Bounded execution]\label{def:boundedExecution}
	Given  Boolean output protocol $\pi$ and $n\in \N$, let $\bound_n(\pi)$ be the variant of $\pi$ that if the protocol does not halt after $n$ steps,  it halts and the parties  output uniform independent bits.
	
\end{definition}

\begin{lemma}[Large Gap to OT]\label{lemma:main:LaegeGapToOT}
	There exist  an (oblivious) \ppt protocol $\Delta$ and  constants $n,c > 0$ such that the following holds: let  $\pi$ be a protocol of expected running time at most $t$ that induces a channel $C$  with $\alpha\in [1/32,3/8]$-agreement,  and $(\eps,\delta)$-\Leak for $\eps,\delta \leq c$.
	
	Then $\Delta^{C'}(1^\pk)$ is a semi-honest statistically secure OT  in the $C' = \\ \CHAN(\bound_{n\cdot t}(\pi))$ hybrid model.  
\end{lemma}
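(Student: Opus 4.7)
The plan is to convert $C'$ into a passive WBSC meeting the hypothesis of \cref{thm:wullschleger}, and then invoke that theorem. Three transformations are needed: (i) absorb the perturbation introduced by the running-time cutoff $\bound_{n\cdot t}$; (ii) symmetrize the channel so that both outputs are uniform and the two conditional error probabilities $\pr{\OB \neq \OA \mid \OA = b}$ coincide; and (iii) convert the $(\eps,\delta)$-log-ratio leakage of the hypothesis into the statistical-distance leakage demanded by an SWBSC.

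For (i), since $\pi$ has expected running time at most $t$, Markov's inequality gives $\pr{\pi \text{ fails to halt in } n\cdot t \text{ steps}} \leq 1/n$, whence $\SD(C,C') \leq 1/n$. Because $\alpha \leq 3/8$, both $\set{\OA = \OB}$ and $\set{\OA \neq \OB}$ have probability at least $1/8$ in $C$, so \cref{Prop:CondSD} lifts the $O(\eps+\delta)$ statistical-distance leakage of $C$ (obtained from its log-ratio leakage via $\SD \leq (e^{\eps}-1)+\delta$) to $O(\eps+\delta+1/n)$ statistical-distance leakage in $C'$, while the agreement changes by at most $O(1/n)$. For (ii), the protocol $\Delta$ will invoke $C'$ once, then have $\Ac$ sample a fresh uniform bit $r$, send it to $\Bc$, and both parties XOR their outputs with $r$; let $C''$ denote the resulting channel. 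A direct check shows that $\OA''$ is uniform, the agreement is preserved, and $\pr{\OB'' \neq \OA'' \mid \OA'' = b} = \pr{\OB \neq \OA} = 1/2 - \alpha \pm O(1/n)$ for both $b \in \zo$. Since $r$ is independent of $(\VA,\VB,\OA,\OB)$ and in particular of the event $\set{\OA = \OB}$, appending $r$ to each view leaves the statistical-distance leakage unchanged at $\eta := O(\eps+\delta+1/n)$. Thus $C''$ is an $(1/2 - \alpha \pm O(1/n), \eta)$-SWBSC, and by \cref{clm:spcl->WBSC} it is a $(0,1/2-\alpha \pm O(1/n),1/2-\alpha \pm O(1/n),2\eta,2\eta)$-\WBSC.

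The final step in $\Delta$ will be to run Wullschleger's protocol on $C''$ with parameter $\eps_W := 15/32$, which upper-bounds $1/2 - \alpha + O(1/n)$ throughout our regime $\alpha \in [1/32,3/8]$ once $n$ is large enough. Since $\eps_W$ is a constant bounded away from $1/2$, the right-hand side $K_0 := \bigl(1 - \tfrac{2\eps_W^2}{\eps_W^2 + (1-\eps_W)^2}\bigr)^2$ of the condition in \cref{thm:wullschleger} is a positive universal constant. The plan is to fix the lemma's constants $c$ and $n$ once and for all, small and large enough respectively, so that $150(1 - (1-2\eta)^2) < K_0$; this is achievable because $\eta = O(c + 1/n)$. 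The main obstacle is precisely this threading of numerical constants: each of the three transformations mildly degrades both the agreement and the leakage, and we must verify that the composed loss still leaves $2\eta$ strictly below the Wullschleger threshold determined by $\eps_W$, while keeping the effective agreement bounded away from $0$. Once $c$ and $n$ are chosen, \cref{thm:wullschleger} yields a semi-honest statistically secure OT whose running time is polynomial in $\pk$, $1/\eps_W$, and $1/(1-2\eps_W)$ — the latter two being constants — so $\Delta$ runs in \ppt in $\pk$ alone, as required. Obliviousness of $\Delta$ is inherited from Wullschleger's protocol together with the fact that the symmetrization step uses only the output bits of $C'$.
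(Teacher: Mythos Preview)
Your approach is essentially identical to the paper's: truncate via Markov, balance via XOR with a fresh random bit (this is exactly the paper's \cref{proto:balanced:LG}), pass to an \SWBSC and then a \WBSC via \cref{clm:spcl->WBSC}, and invoke \cref{thm:wullschleger}. The only slip is the choice $\eps_W = 15/32$: at $\alpha = 1/32$ the truncation can push $\eps_0''$ up to $15/32 + \Theta(1/n)$, so no finite $n$ makes $15/32$ an upper bound --- take instead any constant strictly between $15/32$ and $1/2$ (the paper uses $49/100$).
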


We prove  the above two Lemmas in  the following subsections, but first we will prove \cref{thm:main:IT}.
\begin{proof}[Proof of  \cref{thm:main:IT}]
Let $\ell = 2^{(\floor{\log 1/\amax}-2)}$. By \cref{lemma:main:GA}, there exists an expected polynomially time  protocol $\Lambda$ such that $\Lambda^C(1^\ell)$   induces a channel $\tC$ of  $\talph\in [1/32,3/8]$-\Agr, and $(\teps,\tdelta)$-\Leak  for $\teps=2\ell \epsilon^2+\epsilon\sqrt{2\ell\ln({1/\delta'})}$ and  $\tdelta= \delta'+\ell\delta$,  for any  $\delta'\in(0,1)$.  

Let $t \in \poly$ be a polynomial that bounds the expected running time of $\Lambda$. By \cref{lemma:main:LaegeGapToOT}, there exist  universal  constants $n,c$ and \ppt protocol $\Delta$, such that if
\begin{align}\label{eq:main:IT}
&\teps=2\ell \epsilon^2+\epsilon\sqrt{2\ell\ln({1/\delta'})}\le c &\text{and}&    & \tdelta= \delta'+\ell\delta \le c
\end{align}
then the  protocol $\Gamma$, defined by $\Gamma^C(1^\pk, 1^{\floor{1/\amax}}) = \Delta^{C'}(1^\pk)$ for \\$C'= \CHAN(\bound_{n\cdot t(\ell)}(\Lambda^\C(1^\ell)))$,  is a semi-honest statistically secure OT. Hence, we conclude the proof noting that \cref{eq:main:IT} holds by setting  $\delta'=\ell\delta$  and choosing $c_1$ (the constant in \cref{thm:main:IT})  to be sufficiently large.



\end{proof}

\cref{lemma:main:LaegeGapToOT} is proved in \cref{sec:ChannelsAmp:LargetoOT} using the amplification result of
\cite{wullschleger2009oblivious}.  Toward  proving \cref{lemma:main:GA},  our main technical contribution, we start  in \cref{sec:ChannelsAmp:Inefficient} by presenting an inefficient protocol implementing the desired channel. In \cref{sec:ChannelsAmp:Efficient} we show how to bootstrap the  the above protocol into an efficient one.

\subsubsection{Inefficient Amplification}\label{sec:ChannelsAmp:Inefficient}	
The following protocol implements the channel stated in \cref{lemma:main:GA}, but its running time is \emph{exponential} in $1/\amax$.
\begin{protocol}\label{proto:OT:IT}[Protocol $\Delta^C= (\tAc,\tBc)$] 
    \item[Oracle:]  channel $C=((\VA,\OA),(\VB,\OB))$.
	
	\item[Input:] $1^\ell$.
	\item[Operation:] The parties repeat the following process until it produces outputs:
		\begin{enumerate}
		\item The parties (jointly) call the channel $C$ for $\ell$ times. Let $\ooA=(o^{\Ac}_1,\ldots,o^{\Ac}_\ell),\ooB= (o^{\Bc}_1,...,o^{\Bc}_\ell)$ be the outputs. 

		\item $\tAc$ computes and sends $\mathcal{S}=\set{\ooA, 1^\ell\xor\ooA}$ according to their lexical order to $\tBc$.

		\item $\tBc$ inform $\tAc$ whether $\obar^{\Bc}\in \mathcal{S}$.
		
		If positive,  both  parties output the index of their tuple in $\mathcal{S}$ (and the protocol ends).
	\end{enumerate}
\end{protocol}


We show that the channel induced by protocol $\Delta^C(1^\ell)$ satisfies all the requirement of \cref{lemma:main:GA} apart from its expected running time (which is  exponential in $\ell$).

Let $\tC= \CHAN(\Delta^C(\ell))=((\tVA,\tOA),((\tVB,\tOB))$. The following function outputs the calls to $C$ made in the final iteration in $\tC$.

\begin{definition}[Final calls]\label{not:ChannelsAmp:Inefficient}
	   For  $c\in \Supp(\tC)$ let $\Fcalls(c)$ denote the output of the $\ell$ calls to $C$ made in the {\sf final} iteration in $c$.
\end{definition}
We make the  following observation about the final calls.
\begin{claim}\label{clm:ChannelsAmp:Inefficient}
	 The following holds for  $((\cdot,\oOA),(\cdot,\oOB)) = \Fcalls(\tC= ((\cdot,\tOA),(\cdot,\tOB)))$.
	\begin{itemize}
		\item $\tOA=\tOB$ iff $\oOA=\oOB$.
		\item 	Let $C^\ell = ((\cdot,(\OA)^\ell), (\cdot,(\OB)^\ell))$ be the random variable induced by  taking $\ell$ copies of $C$ and let $E$ be the event that $(\OB)^\ell \in \set{( \OA)^\ell, (\OA)^\ell \xor 1^\ell  }$.  Then
		$\Fcalls(\tC) \equiv C^\ell|_E$.
	\end{itemize}

\end{claim}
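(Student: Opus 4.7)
The plan is to verify both parts directly from the operational description of \cref{proto:OT:IT}, treating each iteration as an i.i.d.\ trial and performing a standard rejection-sampling analysis.

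For the first item, I would unwind the definition of the parties' outputs in the terminating (final) iteration. In that iteration, $\tAc$ sets $\mathcal{S}=\{\ooA,1^\ell\xor\ooA\}$ (sorted lexicographically) and outputs the index of $\ooA$ in $\mathcal{S}$, while $\tBc$ outputs the index of $\ooB$ in $\mathcal{S}$. Since the protocol only halts in this iteration when $\ooB\in\mathcal{S}$, there are exactly two possibilities: $\ooB=\ooA$ (both parties' values sit at the same index, yielding $\tOA=\tOB$) or $\ooB=\ooA\xor 1^\ell$ (the two elements of $\mathcal{S}$ are distinct and the parties sit at different indices, yielding $\tOA\neq\tOB$). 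This equivalence is exactly what the first bullet asserts. The only mild subtlety is checking that $\ooA\neq \ooA\xor 1^\ell$, so $\mathcal{S}$ has two elements and the ``index'' is well-defined; this is immediate because $\ell\geq 1$.

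For the second item, the key observation is that the iterations of the outer loop are mutually independent (fresh randomness each time), and each iteration produces a sample from $C^\ell$. The protocol halts in a given iteration precisely when the event $E=\{(\OB)^\ell\in\{(\OA)^\ell,(\OA)^\ell\xor 1^\ell\}\}$ occurs in that iteration (since this is exactly the condition $\ooB\in\mathcal{S}$). Thus the protocol is performing rejection sampling: the $\ell$ channel calls $\Fcalls(\tC)$ produced in the final iteration are distributed as a sample from $C^\ell$ conditioned on $E$. I would formalize this by writing $\Pr[\Fcalls(\tC)=c]$ as a geometric sum over which iteration $i$ is final, with each term equal to $\Pr[\neg E]^{i-1}\cdot\Pr[C^\ell=c,E]$; summing the geometric series and normalizing gives $\Pr[C^\ell=c \mid E]$ for every $c$ in the support of $E$, and $0$ outside it, establishing $\Fcalls(\tC)\equiv C^\ell\mid E$.

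There is no real obstacle here; the claim is essentially bookkeeping to set up the quantitative analysis in the proof of \cref{lemma:main:GA}. The only point requiring a little care is to verify that $\Pr[E]>0$ (so the conditioning is well-defined and the protocol halts almost surely): this follows from $\Pr[E]\geq\Pr[(\OA)^\ell=(\OB)^\ell]\geq(\tfrac12+\alpha)^\ell>0$, using the $\alpha$-agreement of $C$. With this in hand both bullets are immediate, and the combination of them is what will later let us replace the analysis of $\tC$'s leakage/agreement by an analysis of the much more tractable object $C^\ell\mid E$, to which \cref{thm:rep} and the data-processing property (\cref{fact:LR:DP}) can be applied.
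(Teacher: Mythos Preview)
Your proposal is correct and follows exactly the line the paper intends: the paper's own proof is the single phrase ``Immediate by construction,'' and you have simply unpacked that---the first bullet from the definition of the parties' outputs as indices in $\mathcal{S}$, and the second from the fact that the outer loop is i.i.d.\ rejection sampling with acceptance event $E$. Your additional care about $\size{\mathcal{S}}=2$ and $\Pr[E]>0$ is fine but not something the paper bothers to mention.
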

\begin{proof}
	Immediate by construction.
\end{proof}

%

\paragraph{Agreement.}
\begin{claim}[Agreement]\label{claim:agreement:IT}
$ \pr{\tOA=\tOB} \in [17/32 , 7/8]$.
\end{claim}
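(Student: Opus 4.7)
The plan is to reduce the computation of $\Pr[\tOA=\tOB]$ to an explicit closed-form expression via \cref{clm:ChannelsAmp:Inefficient}, then apply \cref{Prop:simple} to sandwich it, and finally bound the product $\alpha\ell$ using the hypotheses $\alpha\in[\amax/2,\amax]$ and $\ell=2^{\floor{\log(1/\amax)}-2}$.

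First, by the first bullet of \cref{clm:ChannelsAmp:Inefficient}, the event $\{\tOA=\tOB\}$ coincides (on $\tC$) with the event $\{\oOA=\oOB\}$ on $\Fcalls(\tC)$, and by the second bullet $\Fcalls(\tC)\equiv C^\ell|_E$, where $E=\{(\OB)^\ell\in\{(\OA)^\ell,(\OA)^\ell\xor 1^\ell\}\}$. Since the $\ell$ invocations of $C$ in $C^\ell$ are independent and $C$ has $\alpha$-agreement, we get
\[
\Pr[(\OB)^\ell=(\OA)^\ell]=(1/2+\alpha)^\ell,\qquad \Pr[(\OB)^\ell=(\OA)^\ell\xor 1^\ell]=(1/2-\alpha)^\ell,
\]
which yields
\[
\Pr[\tOA=\tOB]=\Pr[(\OA)^\ell=(\OB)^\ell\mid E]=\frac{(1/2+\alpha)^\ell}{(1/2+\alpha)^\ell+(1/2-\alpha)^\ell}.
\]

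Next, I will apply \cref{Prop:simple} (with $b=\alpha$) to conclude that the above ratio lies in $[\tfrac12(1+\alpha\ell),\tfrac12(1+3\alpha\ell)]$, provided $\alpha\ell\le 1/4$. The main calculation is therefore to bound $\alpha\ell$. Writing $\log(1/\amax)=k+f$ with $k\in\N$ and $f\in[0,1)$, we have $\ell=2^{k-2}=1/(2^{2+f}\amax)$, so $\ell\in(\,1/(8\amax),\,1/(4\amax)\,]$. Combined with $\alpha\in[\amax/2,\amax]$ this gives $\alpha\ell\in(1/16,\,1/4]$, validating the hypothesis of \cref{Prop:simple} (the boundary case $\alpha\ell=1/4$ is handled because the proposition's bound is tight at that point, or by noting that the derivation of \cref{Prop:simple} only requires $\alpha\ell\le 1/4$ for these particular numerical bounds).

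Plugging the bounds on $\alpha\ell$ into \cref{Prop:simple}:
\[
\Pr[\tOA=\tOB]\ge \tfrac12(1+\alpha\ell)\ge \tfrac12(1+1/16)=17/32,
\]
\[
\Pr[\tOA=\tOB]\le \tfrac12(1+3\alpha\ell)\le \tfrac12(1+3/4)=7/8,
\]
which gives the claim. The only conceptual obstacle is correctly reducing to $C^\ell|_E$ via \cref{clm:ChannelsAmp:Inefficient}; once that is done, the rest is a straightforward application of \cref{Prop:simple} and arithmetic on $\alpha\ell$. I do not foresee any issues with the boundary behaviour since the constants $17/32$ and $7/8$ leave ample slack.
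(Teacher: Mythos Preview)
Your proof is correct and follows essentially the same approach as the paper: reduce to the closed-form ratio via \cref{clm:ChannelsAmp:Inefficient}, bound $\alpha\ell\in[1/16,1/4]$ from the definitions of $\ell$ and the assumption $\alpha\in[\amax/2,\amax]$, and then invoke \cref{Prop:simple}. Your explicit decomposition $\log(1/\amax)=k+f$ is a slightly more detailed justification of the $\alpha\ell$ bound than the paper gives, and your remark about the boundary case $\alpha\ell=1/4$ in \cref{Prop:simple} is a fair observation that the paper simply glosses over.
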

\begin{proof}
 By \cref{clm:ChannelsAmp:Inefficient},
	\begin{align}\label{eq:exact-agree}
	\pr{\tOA=\tOB}&= \frac{\pr{ (\OA)^\ell =(\OB)^\ell \mid E}}{\pr{ (\OA)^\ell =(\OB)^\ell \mid E}+\pr{ (\OA)^\ell  \oplus(\OB)^\ell = 1^\ell \mid E}}\\
	&= \frac{\pr{ (\OA)^\ell =(\OB)^\ell}}{\pr{ (\OA)^\ell =(\OB)^\ell}+\pr{ (\OA)^\ell  \oplus(\OB)^\ell = 1^\ell}}\nonumber\\
	& =\frac{(1/2+\alpha)^\ell}{(1/2+\alpha)^\ell+(1/2-\alpha)^\ell}.\nonumber
	\end{align}
	Since, $\ell = 2^{(\floor{\log 1/\amax}-2)}$  and  $\amax/2 \le \alpha \le \amax$, we get that $1/4 \geq \ell\cdot \alpha \geq 1/16$. By \cref{Prop:simple},
	\begin{align}
	\frac{(1/2+\alpha)^\ell}{(1/2+\alpha)^\ell+(1/2-\alpha)^\ell} \in [\half(1+\alpha\ell),\half(1+3\alpha\ell) ]
	\end{align}
Thus, $\pr{\tOA=\tOB} \in [17/32 , 7/8]$, which concludes the proof.
\end{proof}

\newcommand{\tv}{\widetilde{v}}
\paragraph{Leakage.}
\begin{claim}[Leakage]\label{claim:Leak:IT}
$\tC$ has $(\teps,\tdelta)$-\Leak, where $\teps=2\ell\epsilon^2+\epsilon\sqrt{2\ell\ln({1/\delta'})}$ and  $\tdelta= \delta'+\ell\delta$ for every $\delta'\in(0,1)$.
\end{claim}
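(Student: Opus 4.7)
The plan is to (i) decompose Alice's view in $\tC$ into the contributions of the pre-final (rejected) iterations and of the final iteration, (ii) show that the log-ratio distance between $\tVA|_{\tOA=\tOB}$ and $\tVA|_{\tOA\ne\tOB}$ reduces to the log-ratio distance between the two conditional final-iteration views, and (iii) bound the latter via \cref{thm:rep} (the repetition theorem for log-ratio distance). The argument for sender security $\tVB$ is completely symmetric, so I only treat receiver security.

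\textbf{Decomposition.} Since every iteration of $\Delta^C$ draws fresh independent randomness, I can write $\tVA = (W, F)$, where $W$ records all rejected iterations together with their number, and $F$ is the contribution of the accepting final iteration, namely $(\VA_1,\dots,\VA_\ell)$ together with the (deterministic) message $\mathcal{S}=\{\oOA,\oOA\oplus 1^\ell\}$ and the confirmation bit, which is $1$ in the final iteration. The factor $W$ has the law of a geometric-length sequence of iid copies of one iteration conditioned on $E^c$; in particular, its distribution is completely independent of what occurs in the final iteration, and thus independent of the split of the acceptance event $E$ into $\{\oOA=\oOB\}$ versus $\{\oOA\oplus\oOB=1^\ell\}$. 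By \cref{clm:ChannelsAmp:Inefficient}, conditioning on $\tOA=\tOB$ (respectively $\tOA\ne\tOB$) is exactly conditioning the final iteration on $\{\oOA=\oOB\}$ (respectively $\{\oOA\oplus\oOB=1^\ell\}$). A direct factorization of probabilities across $W$ and $F$ then implies that the log-ratio distance between the two conditional distributions of $\tVA$ is bounded by the log-ratio distance between the two conditional distributions of $F$.

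\textbf{Reduction to iid copies and application of \cref{thm:rep}.} Since $\mathcal{S}$ and the confirmation bit are deterministic functions of $(\VA_1,\dots,\VA_\ell)$, \cref{fact:LR:DP} reduces the problem further to comparing $(\VA_1,\dots,\VA_\ell)|_{\oOA=\oOB}$ with $(\VA_1,\dots,\VA_\ell)|_{\oOA\oplus\oOB=1^\ell}$. By the independence of the $\ell$ calls to $C$, these are precisely $(\VA|_{\OA=\OB})^\ell$ and $(\VA|_{\OA\ne\OB})^\ell$ respectively. The $(\eps,\delta)$-Leak hypothesis on $C$ gives $\VA|_{\OA=\OB}\rindist_{\eps,\delta}\VA|_{\OA\ne\OB}$, so invoking \cref{thm:rep} with parameter $\delta'\in(0,1)$ yields log-ratio closeness with parameters $\bigl(\ell\eps(e^\eps-1)+\eps\sqrt{2\ell\ln(1/\delta')},\,\ell\delta+\delta'\bigr)$. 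Since we may assume $\eps\le 1$ (by the hypotheses of \cref{thm:main:IT}), $e^\eps-1\le 2\eps$ and therefore $\ell\eps(e^\eps-1)\le 2\ell\eps^2$. Combined with the decomposition above, this gives $\tVA|_{\tOA=\tOB}\rindist_{\teps,\tdelta}\tVA|_{\tOA\ne\tOB}$ with $\teps=2\ell\eps^2+\eps\sqrt{2\ell\ln(1/\delta')}$ and $\tdelta=\ell\delta+\delta'$, as required.

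\textbf{Main obstacle.} The one conceptually nontrivial step is the decomposition $\tVA=(W,F)$ and the claim that conditioning on $\tOA=\tOB$ versus $\tOA\ne\tOB$ leaves the marginal of $W$ invariant. The proof must carefully account for the random stopping time (the index of the first accepting iteration) and use the fact that failure $E^c$ in each pre-final iteration is oblivious to whether the final iteration hits the $\{\oOA=\oOB\}$ or the $\{\oOA\oplus\oOB=1^\ell\}$ half of $E$. Once this factorization is established, the remainder of the argument is a direct combination of \cref{fact:LR:DP} and \cref{thm:rep}.
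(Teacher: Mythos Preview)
Your proposal is correct and follows essentially the same approach as the paper. The paper also reduces to the final-iteration view via \cref{clm:ChannelsAmp:Inefficient} and then applies \cref{thm:rep}; the only cosmetic difference is that where you explicitly factor $\tVA=(W,F)$ and argue that the law of $W$ is unaffected by the split of $E$, the paper packages this step as a single data-processing move (defining a randomized map $f$ that, given the final-iteration view, resamples the rejected iterations) and invokes \cref{fact:LR:DP}.
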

\begin{proof}
We need to prove that for both $\Pc\in \ABc$:
\begin{align}\label{eq:Leak:IT:1}
\tVP|_{\tOA =\tOB} \rindist_{(\teps,\tdelta)} \tVP|_{\tOA \neq \tOB}
\end{align}

By assumption $C$ has $(\eps,\delta)$-\Leak. Thus, by \cref{thm:rep},
\begin{align}\label{claim:Leak for cond}
(\VP)^\ell|_{( \OA)^\ell =  (\OB)^\ell}  \rindist_{(\teps,\tdelta)}   (\VP)^\ell|_{( \OA)^\ell =  (\OB)^\ell \xor 1^\ell}
\end{align}
Let $((\oVA,\oOA),(\oVB,\oOB) = \Fcalls(\tC)$. By the above and \cref{clm:ChannelsAmp:Inefficient},
\begin{align}
\oVP|_{\tOA =\tOB} \rindist_{(\teps,\tdelta)} \oVP|_{\tOA \neq \tOB}
\end{align}
 \cref{eq:Leak:IT:1} now follows by a data processing  argument: let $f$ be the randomized function that on input $v\in\Supp(\oVP)$ outputs a random sample from  $\tVP|_{\oVP = v}$.  It easy to verify that $f(\oVP|_{\tOA =\tOB})=\tVP|_{\tOA =\tOB}$ and $f(\oVP|_{\tOA \ne\tOB})\equiv\tVP|_{\tOA \ne\tOB}$. Thus \cref{eq:Leak:IT:1} follows by  \cref{fact:LR:DP}.
\end{proof}

\remove{
\begin{notation}\label{not:condi}
	Given a channel $C=((\VA,\OA)(\VB,\OB))$ and $\ell \in\N$, let $C^\ell=(C_1,C_2,\cdots C_\ell)$ denote  $\ell$ independent samples from $C$. For every $\p\in\set{\Ac,\Bc}$ let $C^{\p,\ell}=(C^\p_1,C^\p_2,\cdots C^\p_\ell)$ denote $C^\ell$ restricted to $\p$.
\end{notation}

\begin{claim}\label{claim: conditinal}
	Gi
\end{claim}
\begin{proof}
	Recall that $\tVP$, is composed of failed iterations and  a uniform bit $W$ (or $Z$ in the case of party $\tBc$ ). Note that since we condition on the event $\overline{C}^\p_F=u^\p$, the final iteration $F$ is thus decided and it holds that  $\oOA_F \oplus \oOB_F \in \set{0^\ell,1^\ell}$. All previous iterations are discarded, and are independent form the values of $\tOA$ and $\tOB$. Since $W_F$ was uniformly sampled after $\oOA_F$ and  $\oOB_F$ where fixed, it holds that $W_F$  is still an independent uniform bit after conditioning on the events $\set{\tOA=\tOB}$ and $\set{\tOA\ne\tOB}$. This concludes the proof.
\end{proof}


\begin{claim}\label{claim:used:IT}
	For every $u^\p\in\Supp(\overline{C}^\p_F)$:
$\tVP|_{\tOA=\OB,\overline{C}^\p_F=u^\p}\equiv\tVP |_{\tOA\ne \tOB,\overline{C}^\p_F=u^\p}$
\end{claim}
\begin{proof}
Recall that $\tVP$, is composed of failed iterations and  a uniform bit $W$ (or $Z$ in the case of party $\tBc$ ). Note that since we condition on the event $\overline{C}^\p_F=u^\p$, the final iteration $F$ is thus decided and it holds that  $\oOA_F \oplus \oOB_F \in \set{0^\ell,1^\ell}$. All previous iterations are discarded, and are independent form the values of $\tOA$ and $\tOB$. Since $W_F$ was uniformly sampled after $\oOA_F$ and  $\oOB_F$ where fixed, it holds that $W_F$  is still an independent uniform bit after conditioning on the events $\set{\tOA=\tOB}$ and $\set{\tOA\ne\tOB}$. This concludes the proof.
\end{proof}

It follows that we  need to consider the amount  of information $\overline{C}^\p$ gives about the protocol's outputs.

\begin{claim}\label{claim:security:IT}
	For both $\p\in\ABc$: 	$(\overline{C}^\Pc_F,\tOP) |_{\tOA=\tOB}\rindist_{(\teps,\tdelta)} (\overline{C}^\Pc_F,\tOP) |_{\tOA\neq \tOB}$.
\end{claim}
\begin{proof}
	First note that for both  $\p\in\ABc$,  $\tOP$ is independent from $\overline{C}^\Pc$
	thus it suffices to prove that,
	$\overline{C}^\Pc_F|_{\tOA=\tOB}\rindist_{(\teps,\tdelta)} \overline{C}^\Pc_F, |_{\tOA\neq \tOB}$.
	 By construction,  $\overline{C}^\Pc_F |_{\tOA=\tOB}$ consists of $\ell$ independent samples from the distribution $\VP|_{\OA=\OB}$ (for $\VP$ being the view given by the channel $C$).
	
	Similarly $\overline{C}^\Pc_F |_{\tOA\neq \tOB}$, consists of $\ell$ independent samples from the distribution $\VP|_{\OA\neq \OB}$.
	
	Since, by assumption,  $\VP|_{\OA=\OB} \rindist_{\eps,\delta} \VP|_{\OA \ne \OB}$,  \cref{thm:rep} yields that
	\[\tVP|_{\tOA=\tOB}  \rindist_{(\teps,\tdelta)} \tVP|_{\tOA\neq \tOB},\]
	for $\teps=\ell \cdot \epsilon(e^\epsilon-1)+\epsilon \cdot \sqrt{2
		\ell \cdot \ln(1/\delta')}$ and $\tdelta=\ell \delta + \delta'$ for any $\delta'\in(0,1)$.
\end{proof}

We now prove \cref{claim:Leak:IT} using the above two claims:

\begin{proof}[Proof of \cref{claim:Leak:IT}]
	We need to show that for every part $\tP\in\set{\tAc,\tBc}$:
	$\tVP|_{\tOA=\tOB}\rindist_{(\teps,\tdelta)}  \tVP |_{\tOA\ne \tOB}$ by \cref{fact:LR:Cond,claim:used:IT}, it suffices to show that $\overline{C}^\Pc_F |_{\tOA=\tOB}\rindist_{(\teps,\tdelta)} \overline{C}^\Pc_F|_{\tOA\neq \tOB}$.Thus by $\cref{claim:security:IT}$ this indeed holds.
\end{proof}

%
%
%
%

}

\subsubsection{Efficient Amplification}\label{sec:ChannelsAmp:Efficient}	

We will show how to make \cref{proto:OT:IT} protocol more efficient in terms of $\alpha$. The resulting protocol will run in poly-time even if $\alpha$ is inverse polynomial.
The  efficient amplification protocol is defined as follows.
Let  $\Delta$ be the (inefficient)  protocol from   \cref{proto:OT:IT}. 
	
\begin{protocol}\label{proto:EffAmp}[Protocol $\Lambda^C= (\hAc,\hBc)$] 
	\item[Oracle:] 	Channel   $C$.
	\item[Prameter:] Recursion depth $d$.
	\item[Operation:] The parties interact in  $\Delta^{\Lambda^{C}(d-1)}(2)$, letting $\Lambda^{C}(0) = C$.~\\
\end{protocol}

 We show that the channel induced by protocol $\Lambda^C(d)$ satisfies all the requirement of \cref{lemma:main:GA}.  But we first show that the expected running time of $\Lambda^C(d)$ is $O(4^d)$, and therefore, the protocol that on input $1^\ell$ invoke $\Lambda^C(\log \ell)$, is \ppt, as stated in \cref{lemma:main:GA}.

\paragraph{Running time.}
\begin{claim}[Expected running time]\label{claim:RunEffAmp}
	Let $C$ be a channel, the for any $d\in \N$  the expected running time of $\Lambda^C(d)$ is at most $O(4^d)$.
\end{claim}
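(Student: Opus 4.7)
The plan is to bound $T(d)$, the expected running time of $\Lambda^C(d)$, by a simple recurrence $T(d) \leq 4 \cdot T(d-1) + O(1)$, which immediately yields $T(d) = O(4^d)$. The base case $T(0) = O(1)$ is clear since $\Lambda^C(0) = C$ costs one channel activation. So the entire task reduces to analyzing one level of the recursion, i.e., the expected number of calls that $\Delta^{C'}(2)$ makes to its underlying channel $C'$, for an arbitrary channel $C'$.

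For this, I would first bound the expected number of iterations of the outer loop in $\Delta^{C'}(2)$. Each iteration is independent and produces an output precisely when the two-call outputs satisfy $\bar o^A \oplus \bar o^B \in \{0^2, 1^2\}$. Writing $\alpha'$ for the (possibly unknown) agreement of $C'$, the per-iteration success probability is
\[
p = (1/2 + \alpha')^2 + (1/2 - \alpha')^2 = 1/2 + 2(\alpha')^2 \geq 1/2.
\]
The crucial point here is that this lower bound $p \geq 1/2$ holds uniformly in $\alpha'$, so it applies to $\Lambda^C(d-1)$ regardless of how the agreement evolves across recursive levels. Hence the expected number of iterations is at most $1/p \leq 2$.

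Each iteration of $\Delta^{C'}(2)$ performs exactly $\ell = 2$ calls to $C'$ plus a constant amount of local computation (sending a pair of length-$2$ strings and a single response bit). Therefore the expected number of calls that $\Lambda^C(d) = \Delta^{\Lambda^C(d-1)}(2)$ makes to $\Lambda^C(d-1)$ is at most $2 \cdot 2 = 4$, with $O(1)$ additional overhead per iteration. By Wald's identity (or just linearity of expectation over the geometric stopping time),
\[
T(d) \leq 4 \cdot T(d-1) + O(1),
\]
and unrolling gives $T(d) = O(4^d)$, as claimed.

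The only mildly subtle point is the uniform lower bound $p \geq 1/2$: one must observe that it does not depend on the agreement of the recursively constructed channel $\Lambda^C(d-1)$, which is a priori not under our control at each level. Once that is in hand, the analysis is a one-line geometric-series computation, and no assumption on the leakage or agreement of $C$ itself is needed for this running-time bound.
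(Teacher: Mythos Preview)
Your proposal is correct and follows essentially the same approach as the paper: isolate the fact that $\Delta^{C'}(2)$ makes at most $4$ expected calls to $C'$ for \emph{any} channel $C'$ (since the per-iteration success probability is $(1/2+\alpha')^2+(1/2-\alpha')^2=1/2+2(\alpha')^2\ge 1/2$), then feed this into the recurrence $T(d)\le 4\,T(d-1)+O(1)$ with $T(0)=O(1)$. Your explicit remark that the bound $p\ge 1/2$ is uniform in the (uncontrolled) agreement of $\Lambda^C(d-1)$, and your invocation of Wald's identity to justify the recurrence, are slightly more careful than the paper's presentation but do not constitute a different argument.
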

We will use the following claim:
\begin{claim}\label{claim:inefficient-time} 
	For any channel $C$, $\Delta^C(2)$ makes in expectation at most $4$ calls to $C$.
\end{claim}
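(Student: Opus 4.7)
The plan is to observe that $\Delta^C(1^2)$ has a very simple structure: it proceeds in independent iterations, each of which makes exactly $\ell = 2$ calls to $C$, and the protocol halts in a given iteration precisely when the outputs $\ooA, \ooB \in \{0,1\}^2$ from that iteration satisfy $\ooA \oplus \ooB \in \{0^\ell, 1^\ell\} = \{00,11\}$, i.e., $\ooB \in \set{\ooA, \ooA \oplus 1^\ell}$. Thus the number of iterations is a geometric random variable whose parameter is the per-iteration success probability, and the expected number of calls equals $2$ times the expected number of iterations.

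Next, I would lower-bound this per-iteration success probability, \emph{uniformly over all channels $C$} (with no assumption on the agreement). Writing $p = \Pr[\OA = \OB]$, the two samples drawn in an iteration are independent copies of $C$, so
\[
\Pr[\ooA \oplus \ooB \in \{00,11\}] \;=\; \Pr[\OA = \OB]^2 + \Pr[\OA \ne \OB]^2 \;=\; p^2 + (1-p)^2.
\]
The elementary inequality $p^2+(1-p)^2 = 1 - 2p(1-p) \ge 1/2$ (since $p(1-p) \le 1/4$) then gives that each iteration succeeds with probability at least $1/2$, regardless of $C$.

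Finally, the expected number of iterations is at most $1/(1/2) = 2$, so by Wald's identity (or just linearity combined with the geometric bound) the expected number of calls to $C$ is at most $2 \cdot 2 = 4$, yielding the claim. I do not foresee any real obstacle here; the only subtlety is making sure that the bound holds for \emph{any} channel (including unbalanced ones or ones with $\alpha = 0$), which is exactly why we use the worst-case bound $p(1-p) \le 1/4$ rather than any assumption coming from $\alpha$.
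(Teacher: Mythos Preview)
Your proposal is correct and follows essentially the same approach as the paper: both compute the per-iteration success probability as $p^2+(1-p)^2\ge 1/2$ (the paper parametrizes via $p=\tfrac12+\alpha$ and writes this as $\tfrac12+2\alpha^2$), conclude that the expected number of iterations is at most $2$, and multiply by the $2$ calls per iteration.
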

\begin{proof}
	Let  $C$ with a channel with  agreement $\alpha\in[-1/2,1/2]$.  Let $\oOA=(\OA_1,\OA_2)$ and $\oOB=(\OB_1,\OB_2)$ denote the outputs of two invocations of $C$, respectively. By construction, $\Delta^C(2)$ concludes on the event  $E=\set{(\OB_1,\OB_2)\in\{\oOA,1^2\oplus\oOA\}}$. It is clear that  
	$\pr{E}=(\half+\alpha)^2+(\half-\alpha)^2=\half+\alpha^2\ge\half$.
	Thus, the  expected number of invocations preformed by  $\Delta^C(2)$  is bounded is $4$.  
\end{proof}	
We now prove $\cref{claim:RunEffAmp}$ using the above claim.
\begin{proof}[Proof of \cref{claim:RunEffAmp}]
	 For  $d\in\N$, let $T(d)$ denote the expected runtime of $\Lambda^C(d)$.  By   \cref{claim:inefficient-time},  
	\begin{align}
	T(d)=4\cdot{T(d-1)}+O(1),
	\end{align}
	letting  $T(0)=1$. Thus, $T(d)\in O(4^d)$.
\end{proof}

Let $\hC_d= \CHAN(\Lambda^C(d))= ((\hVA_d,\hOA_d),((\hVB_d,\hOB_d))$.  The following function outputs the   ``important' calls of $C$ made in $\hC_d$, the ones used to set the final outcome.

Let $\circ$ denote vectors concatenation.
\newcommand{\Icalls}{\operatorname{important}}
\begin{definition}[Important calls]\label{not:ChannelsAmp:Rfficient}
	 For  $d\in \N$ and  $c\in \Supp(\hC_d)$, let  $\Fcalls(c) = (c_0,c_1)$ be the two calls to $\Lambda^C(d-1)$ done in final execution of $\Delta^{\Lambda^C(d-1)}(2)$ in $c$.  Define 	 $\Icalls(c) = \Icalls(c_0) \circ \Icalls(c_1)$, letting $\Icalls(c) = c$ for $c\in  \Supp(\hC_0)$. 
	 
\end{definition}

Similarly to the analysis of inefficient protocol, the crux   is the  following observation about the important calls.

\begin{claim}\label{clm:ChannelsAmp:Efficient}
	Let $d\in\N$ and  set $\ell=2^d$.   The following holds for  $((\cdot,\oOA),(\cdot,\oOB)) = \Icalls(\hC_d= ((\cdot,\hOA),(\cdot,\hOB)))$.
	\begin{itemize}
		\item $\hOA=\hOB$ iff $\oOA=\oOB$.
		
		\item 	Let $C^\ell = ((\cdot,(\OA)^\ell), (\cdot,(\OB)^\ell))$ be the random variable induced by  taking $\ell$ copies of $C$ and let $E$ be the event that $(\OB)^\ell \in \set{( \OA)^\ell, (\OA)^\ell \xor 1^\ell  }$.  Then
		
		$\Icalls(\hC_d) \equiv C^\ell|_E$.
	\end{itemize}
\end{claim}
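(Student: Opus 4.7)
The plan is to prove both items simultaneously by induction on $d$, the base case being immediate from the definitions and the inductive step built on $\cref{clm:ChannelsAmp:Inefficient}$ applied to the outer execution $\Delta^{\Lambda^C(d-1)}(2)$. When $d=0$ we have $\Lambda^C(0)=C$, so $\hC_0=C$ and $\Icalls$ is the identity on $\Supp(\hC_0)$; with $\ell=1$ the event $E$ is the whole sample space, hence $C^1|_E\equiv C$, and both items hold.

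For the inductive step I set $m=2^{d-1}$, write $\Fcalls(\hC_d)=(c_0,c_1)$ with each $c_i$ a sample of $\hC_{d-1}$, and apply $\cref{clm:ChannelsAmp:Inefficient}$ to the outer call $\Delta^{\Lambda^C(d-1)}(2)$ (viewing $\Lambda^C(d-1)$ as the underlying channel, with sample parameter $2$). This produces two ingredients: the agreement equivalence ``$\hOA_d=\hOB_d$ iff the outputs of $c_0$ and $c_1$ agree coordinate-wise'', and the distributional identity $(c_0,c_1)\equiv(\hC_{d-1})^2|_{E'}$ for the corresponding length-$2$ agreement event $E'$. For the first bullet at level $d$, I compose the outer agreement equivalence with the first bullet of the induction hypothesis applied to each $c_i$: this is precisely the statement that $\hOA_d=\hOB_d$ iff the inner important-call outputs at each copy agree, which after concatenation says $\oOA=\oOB$. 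For the second bullet I push the deterministic operator $\Icalls$ through the distributional identity: $\Icalls(\hC_d)$ becomes the concatenation of two samples of $\Icalls(\hC_{d-1})$ that are independent under the unconditional law and jointly conditioned on the image of $E'$. By the second bullet of the induction hypothesis each such sample is distributed as $C^m|_{E_{d-1}}$ for $E_{d-1}=\{(\OB)^m\in\{(\OA)^m,(\OA)^m\oplus 1^m\}\}$, and the first bullet at level $d-1$ turns the outer event $E'$ into the condition that the two inner output pairs are either both equal or both XOR to $1^m$, which after concatenation to length $\ell=2m$ is exactly the event $E$.

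The main obstacle I anticipate is verifying that the conditional \emph{weights} in the final identification $\Icalls(\hC_d)\equiv C^\ell|_E$ match, not merely that the supports agree. Writing $p_\pm=(\tfrac{1}{2}\pm\alpha)^m$, the distribution $C^\ell|_E$ selects the $\oOA=\oOB$ branch with probability $p_+^2/(p_+^2+p_-^2)$. On the other side, the induction hypothesis says that a single sample of $\Icalls(\hC_{d-1})$ falls in the equal-output branch with probability $p_+/(p_++p_-)$, and independence together with the conditioning $E'$ (which demands that both inner samples be on the same branch) yields conditional probability $p_+^2/(p_+^2+p_-^2)$ for the joint equal-output branch. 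These ratios coincide, so the identification is complete; this is the only place where a direct probability computation is required, and the rest of the argument is a clean composition of deterministic maps, independence, and the already-proved statements.
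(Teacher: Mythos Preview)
Your proof is correct and follows essentially the same inductive strategy as the paper: use \cref{clm:ChannelsAmp:Inefficient} for the outer $\Delta^{\hC_{d-1}}(2)$ execution and combine with the induction hypothesis for the two inner copies. The paper takes $d=1$ as the base case (appealing directly to \cref{clm:ChannelsAmp:Inefficient}) while you take $d=0$, but this is immaterial; your explicit branch-probability computation is a harmless sanity check but is in fact unnecessary, since once you observe that two independent copies of $C^m|_{E_{d-1}}$ further conditioned on $E'$ is literally $C^{2m}$ conditioned on $E_{d-1,0}\cap E_{d-1,1}\cap E'$, and that this intersection equals $E$, the full distributional identity follows without any separate weight verification.
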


We prove  \cref{clm:ChannelsAmp:Efficient} below, but first use it for proving \cref{lemma:main:GA}.

\paragraph{Agreement.}
\begin{claim}[Agreement]\label{claim:agreement:IT:Eff}
	$ \pr{\hOA=\hOB} \in [17/32 , 7/8]$.
\end{claim}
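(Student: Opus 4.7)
The plan is to reduce this claim to the agreement analysis of the inefficient protocol (Claim \ref{claim:agreement:IT}) by invoking \cref{clm:ChannelsAmp:Efficient}, which states that the important calls $\Icalls(\hC_d)$ are distributed exactly as $C^\ell|_E$, the same distribution that governs $\Fcalls(\tC)$ in the inefficient case. Since the first bullet of \cref{clm:ChannelsAmp:Efficient} tells us that $\hOA = \hOB$ holds if and only if the final output bits $\oOA, \oOB$ from $\Icalls(\hC_d)$ coincide, the probability $\pr{\hOA = \hOB}$ depends only on the distribution of $\Icalls(\hC_d)$.

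First, I will write
\[
\pr{\hOA = \hOB} \;=\; \pr{\oOA = \oOB} \;=\; \pr{(\OA)^\ell = (\OB)^\ell \,\mid\, E},
\]
using the first bullet of \cref{clm:ChannelsAmp:Efficient} for the first equality and the second bullet for the second. Then, expanding the conditional probability via the definition of $E = \{(\OB)^\ell \in \{(\OA)^\ell, (\OA)^\ell \oplus 1^\ell\}\}$ exactly as in \cref{eq:exact-agree}, I obtain
\[
\pr{\hOA = \hOB} \;=\; \frac{(1/2+\alpha)^\ell}{(1/2+\alpha)^\ell + (1/2-\alpha)^\ell}.
\]

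Next, I will invoke the hypothesis that $\ell = 2^d$ is chosen so that $\ell = 2^{(\floor{\log 1/\amax} - 2)}$ and the assumption $\amax/2 \le \alpha \le \amax < 1/8$. This yields $\alpha \ell \in [1/16, 1/4]$, so \cref{Prop:simple} applies and gives
\[
\frac{(1/2+\alpha)^\ell}{(1/2+\alpha)^\ell + (1/2-\alpha)^\ell} \;\in\; \left[\tfrac{1}{2}(1 + \alpha\ell),\, \tfrac{1}{2}(1 + 3\alpha\ell)\right] \;\subseteq\; [17/32,\; 7/8].
\]

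There is essentially no obstacle here: the entire technical work has already been front-loaded into \cref{clm:ChannelsAmp:Efficient}, which asserts that the important calls in the efficient recursive protocol follow the same distribution as the final calls of the inefficient one. Once that is granted, the agreement calculation is a verbatim repeat of the inefficient case. The only point meriting verification is that the parameter regime matches, i.e.\ that $\alpha \ell$ lies in the interval required by \cref{Prop:simple}; this will follow immediately from the standing choice of $\ell = 2^d$ tracking the inefficient protocol's sample count.
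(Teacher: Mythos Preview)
Your proposal is correct and follows essentially the same approach as the paper: invoke \cref{clm:ChannelsAmp:Efficient} to reduce to the distribution $C^\ell|_E$, then repeat verbatim the computation from the proof of \cref{claim:agreement:IT} (including the use of \cref{Prop:simple} with $\alpha\ell \in [1/16,1/4]$). The paper's own proof is a one-line pointer to exactly this reduction.
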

\begin{proof}
	The proof follows by \cref{clm:ChannelsAmp:Efficient}, using the same lines  as the proof that  \cref{claim:agreement:IT} follows from \cref{clm:ChannelsAmp:Inefficient}.
\end{proof}

\paragraph{Leakage.}
\begin{claim}[Leakage]\label{claim:Leak:IT:Eff}
	$\hC$ has $(\teps,\tdelta)$-\Leak, where $\teps=2\ell\epsilon^2+\epsilon\sqrt{2\ell\ln({1/\delta'})}$ and  $\tdelta= \delta'+\ell\delta$ for every $\delta'\in(0,1)$.
\end{claim}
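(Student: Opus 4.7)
The plan is to mimic the proof of \cref{claim:Leak:IT}, replacing the appeal to \cref{clm:ChannelsAmp:Inefficient} by an appeal to \cref{clm:ChannelsAmp:Efficient}. Fix $\Pc \in \ABc$; we must show
\[
\hVP|_{\hOA=\hOB} \;\rindist_{(\teps,\tdelta)}\; \hVP|_{\hOA\neq \hOB}.
\]
Let $((\oVA,\oOA),(\oVB,\oOB)) = \Icalls(\hC_d)$ denote the important calls (\cref{not:ChannelsAmp:Rfficient}). By \cref{clm:ChannelsAmp:Efficient}, the event $\{\hOA=\hOB\}$ coincides with the event $\{\oOA=\oOB\}$, and, letting $C^\ell$ denote $\ell$ independent copies of $C$,
\[
\Icalls(\hC_d)|_{\hOA=\hOB} \;\equiv\; C^\ell|_{(\OA)^\ell=(\OB)^\ell},
\qquad
\Icalls(\hC_d)|_{\hOA\neq\hOB} \;\equiv\; C^\ell|_{(\OA)^\ell=(\OB)^\ell\oplus 1^\ell}.
\]
Therefore $\oVP|_{\hOA=\hOB}$ is distributed as $\ell$ independent copies of $\VP|_{\OA=\OB}$ and $\oVP|_{\hOA\neq\hOB}$ as $\ell$ independent copies of $\VP|_{\OA\neq\OB}$.

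Since $C$ has $(\eps,\delta)$-\Leak, we have $\VP|_{\OA=\OB} \rindist_{\eps,\delta} \VP|_{\OA\ne\OB}$. Applying \cref{thm:rep} with $\ell$ repetitions and the free parameter $\delta'\in(0,1)$ gives
\[
\oVP|_{\hOA=\hOB} \;\rindist_{(\teps,\tdelta)}\; \oVP|_{\hOA\neq\hOB},
\]
for $\teps$ and $\tdelta$ as stated (using the bound $\ell\cdot\eps(e^\eps-1)\le 2\ell\eps^2$ that is already employed in the inefficient case).

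Finally, we transfer this closeness from $\oVP$ to $\hVP$ by data processing. Define the randomized map $f$ that, on input $\bar v \in \Supp(\oVP)$, outputs a fresh sample from $\hVP|_{\oVP=\bar v}$; in words, $f$ completes an important-call transcript to a full recursive view of $\Lambda^C(d)$ by resampling all discarded (failed) sub-invocations independently of the conditioning on the agreement event. Since the failed calls used in $\Lambda^C(d)$ are, by construction, independent of the outputs of the important calls and of the event $\{\hOA=\hOB\}$, the map $f$ is well-defined and yields $f(\oVP|_{\hOA=\hOB})\equiv \hVP|_{\hOA=\hOB}$ and $f(\oVP|_{\hOA\neq\hOB})\equiv \hVP|_{\hOA\neq\hOB}$. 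Applying \cref{fact:LR:DP} to $f$ gives the desired
\[
\hVP|_{\hOA=\hOB} \;\rindist_{(\teps,\tdelta)}\; \hVP|_{\hOA\neq\hOB}.
\]
The main (and essentially only) obstacle is justifying the independence needed for $f$ to be well-defined; this is a routine but careful unwinding of \cref{proto:EffAmp} and \cref{not:ChannelsAmp:Rfficient}, which guarantee that discarded sub-executions of $\Lambda^C(\cdot)$ contribute independent fresh randomness to $\hVP$ beyond what is captured by $\oVP$.
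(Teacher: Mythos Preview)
Your proof is correct and follows essentially the same approach as the paper: both argue by invoking \cref{clm:ChannelsAmp:Efficient} in place of \cref{clm:ChannelsAmp:Inefficient} and then rerunning the data-processing argument from the proof of \cref{claim:Leak:IT}. Your version is simply more explicit about the randomized map $f$ and the independence of the discarded sub-executions, which the paper leaves implicit.
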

\begin{proof}
The proof follows by \cref{clm:ChannelsAmp:Efficient} and a data processing argument, using  similar lines  to the  proof that  \cref{claim:Leak:IT} follows from \cref{clm:ChannelsAmp:Inefficient}.
\end{proof}

\paragraph{Proving  \cref{lemma:main:GA}.}

\begin{proof}[Proof of \cref{lemma:main:GA}]
	Consider the protocol $\Tau^C(1^\ell) = \Lambda^C(\floor{\log \ell})$. The proof that $\Tau$ satisfies the requirements of \cref{lemma:main:GA} immediately follows  by Claims \ref{claim:RunEffAmp}, \ref{claim:agreement:IT:Eff} and \ref{claim:Leak:IT:Eff}. 
\end{proof}

\paragraph{Proving  \cref{clm:ChannelsAmp:Efficient}.}

\begin{proof}[Proof of \cref{clm:ChannelsAmp:Efficient}]
First note that the first item in the claim immediately follows	by construction. We now prove the second item.

Let  $d\in\N$ and  let  $\ell=2^d$. For $C^{\ell}= ((\cdot,(\OA)^\ell), (\cdot,(\OB)^\ell))$, let $D_\ell$ be  the distribution  of  $C^{\ell}|_{\set{(\OB)^\ell \in \set{( \OA)^\ell, (\OA)^\ell \xor 1^\ell }}}$.  We need to prove that
\begin{align*}
\Icalls(\hC_d) \equiv D_\ell
\end{align*}

We prove the claim by induction on $d$. The base case $d=1$ follows by \cref{clm:ChannelsAmp:Inefficient}.

 Fix $d>1$,  for $j\in\zo$,  let $\hC_{d-1,j}$ be an invocations of the channel on input $d-1$  and let $((\cdot,\oOA_j),(\cdot,\oOB_j)) = \Icalls(\hC_{d-1,j})$. By the induction hypothesis,
 \begin{align}
 \Icalls(\hC_{d-1,j})\equiv D_{\ell/2}
 \end{align}

 The key observation is that by construction, the event $\Fcalls(\hC_d)=  \hC_{d-1,0}\circ\hC_{d-1,1}$ occurs if and only if, 
\begin{align}
\oOB_0\circ \oOB_1\in\set{\oOA_0 \circ\oOA_1,1^\ell\oplus\oOA_0 \circ\oOA_1}
\end{align}

Recall this means that, 
$$\Icalls(\hC_d)=\big(\Icalls(\hC_{d-1,0}) \circ \Icalls(\hC_{d-1,1})\big)\mid_{\overline{E}}$$

where $\overline{E}={\set{{\oOB_0\circ \oOB_1\in\set{\oOA_0 \circ\oOA_1,1^\ell\oplus\oOA_0 \circ\oOA_1}}}}$. The above observations  yields that\\ $\Icalls(\hC_d) \equiv D_\ell$.
	\end{proof}

\remove{
We analyze the correctness and security of $\hC_d$ by relating it to the channel $\tC_{2^d} = ((\tVA,\tOA),(\tVB,\tOB)$ induced by a random execution of $\Delta^{C}(2^d)$.  

\paragraph{Agreement for efficient protocol.}
We next bound the  agreement of $\hC_d$ by showing that it exactly equals the agreement of $\tC_{2^d}$ (seen at $\cref{claim:agreement:IT}$).

\begin{claim}[Agreement]\label{claim:agreement}
For every $d\in\N$,	$\hC_d=((\hVA_d,\hOA_d),(\hVB_d,\hOB_d))$ has the same agreement as $\tC_{2^d}=((\tVA_{2^d},\tOA_{2^d}),(\tVB_{2^d},\tOB_{2^d}))$.
\end{claim}
\begin{proof}
	The proof follows by induction, we will show that for every $d\in\N$, $$\half+\halph_d=\pr{\hOA_d=\hOB_d}=\pr{\tOA_{2^d}=\tOB_{2^d}}=\frac{(1/2+\alpha)^{2^d}}{(1/2+\alpha)^{2^d}+(1/2-\alpha)^{2^d}}$$ (where $\alpha$ is the agreement of the channel $C$) see \cref{claim:agreement:IT} ,\cref{eq:exact-agree}.
	 For the bases case where $d=1$,  by construction $\hC_1\equiv\tC_2$ (it's the same protocol).
	 Assume that the hypothesis holds for some $d$, we will show that it also holds for $d+1$. Consider the channel $\hC_{d+1}$, by definition $\hC_{d+1}$ is the output we get by randomly executing $\Delta^{\hC_d}(2)$, thus it follows that,
	 $$\half+\halph_{d+1}=\frac{(1/2+\halph_d)^{2}}{(1/2+\halph_d)^{2}+(1/2-\halph_d)^{2}} $$
	 Since by assumption $\half+\halph_d=\frac{(1/2+\alpha)^{2^d}}{(1/2+\alpha)^{2^d}+(1/2-\alpha)^{2^d}}$, it follows that,
	 \begin{align}
	 \half+\halph_{d+1}=\frac{\Big(\frac{(1/2+\alpha)^{2^d}}{(1/2+\alpha)^{2^d}+(1/2-\alpha)^{2^d}}\Big)^{2}}
	 {\Big(\frac{(1/2+\alpha)^{2^d}}{(1/2+\alpha)^{2^d}+(1/2-\alpha)^{2^d}}\Big)^{2}
	 	+\Big(\frac{(1/2-\alpha)^{2^d}}{(1/2+\alpha)^{2^d}+(1/2-\alpha)^{2^d}}\Big)^{2}}
 	=\frac{(1/2+\alpha)^{2^{d+1}}}{(1/2+\alpha)^{2^{d+1}}+(1/2-\alpha)^{2^{d+1}}}
	 \end{align}
	 Thus the claim holds.
\end{proof}

 \paragraph{Leakage for efficient protocol.}
For  $d\in \N$   let $\hC_d = ((\hVA,\hOA),(\hVB,\hOB)$  be the channel induced by  a random execution of $\Lambda^C(d)$. We omit the subscript  $d$ from the notation unless when addressing the channel induced by  execution of $\Lambda^C$ with different values of $d$. 

We prove the following claim

\begin{claim}\label{claim:ChannelsAmp:EfficientLekage}
	$\hC$ has $(\heps,\hdelta)$-\Leak, where $\heps=2\cdot2^d\cdot\epsilon^2+\epsilon\sqrt{2\cdot2^d\ln({1/\delta'})}$ and  $\hdelta= \delta'+2\cdot2^d\delta$ for every $\delta'\in(0,1)$.
\end{claim}

The ``Meaningful'' samples of $C$ in  $\hC$  are those samples of $C$   whose value effected the  output of $\hC$.
\begin{definition}[Meaningful samples]\label{Def:mean}
We define $\UsedC(\tC_{2^d})$ as the samples of $C$ done  in the last iteration in $\tC_{2^d})$ (refereed as  $\overline{C}_F$ in \cref{not:OT:IT}). We define   $\UsedC(\hC_d) = \UsedC(V_0), \UsedC(V_1)$, where    $V_0$ and $V_1$ he joint views of the two invocations of $\Lambda^C(d)$ done in the final execution of $\Delta_2^{\Lambda^C(d-1)}$ in $\hC_d$.  We let $\UsedC(\hC_0) = \hC_0$.

For   $\p\in\set{\Ac,\Bc}$ and $u \in \Supp(\UsedC^\Pc(\cdot ))$, let $u^\Pc$ be as the part seen by party $\Pc$ in  $u$.

\end{definition}

 \newcommand{\tUA}{\widetilde{\UA}}

\begin{claim}\label{claim:used with out}
For every $d\in \N$:   	$(\hOA_d,\hOB_d,\UsedC(\hC_d))\equiv (\tOA_{2^d},\tOB_{2^d},\UsedC(\tC_{2^d}))$.

\end{claim}
\begin{proof}
%
	Fix $d\in\N$ and let $\ell=2^d$. For the channel $C=((\VA,\OA),(\VB,\OB))$, let 
	$\Vbar=((\VA_1,\VB_1)\cdots(\VA_\ell,\VB_\ell))$ and $\oOP=(\OP_1,\cdots\OP_\ell)$
	denote the views and outputs of $\ell$ independent samples of $C$. By the definition of $\Delta^C$, $\Vbar\in \Supp(\UsedC(\tC_\ell))$  iff $\oOA\oplus \oOB\in\set{1^\ell,0^\ell}$. Similarly, by the definition of $\Lambda^C$,  $\Vbar\in \Supp(\UsedC(\hC_d))$  iff $\oOA\oplus \oOB\in\set{1^\ell,0^\ell}$. For the event $E_b=\set{\oOA\oplus \oOB= b^\ell}$,
$$\UsedC(\tC_\ell)\mid_{E^b} \equiv  \Vbar^\ast\mid_{E^b} \equiv \UsedC(\hC_d)\mid_{E^b}$$
Since we do a rejective\Jnote{?} sampling that depends only on the relationship between $\OA$ and $\OB$. It holds that after conditioning on $E^b$  every index $i\in\ell$, is independent.

Thus it remains to show that for every $b\in\zo$, 
\begin{align}\label{eq:used=used}
\ppr{\UsedC(\tC_\ell)}{E^b}=\ppr{\UsedC(\hC_\ell)}{E^b}
\end{align}
But since $\ppr{\UsedC(\tC_\ell)}{E^0}=\pr{\tOA=\tOB}$ and  $\ppr{\UsedC(\hC_\ell)}{E^0}=\pr{\hOA=\hOB}$ by \cref{claim:agreement} , \cref{eq:used=used} holds, this concludes the proof.
%
%
%
%
\end{proof}

\begin{claim}\label{claim:leakage_with_used}
	For every    $u_d\in \Supp(\UsedC(\hC_d = (\hVA_d,\hOA_d,\hVB_d,\hOB_d)))$ and $\Pc \in \ABc$:
	$$\hVP_d|_{\hOA_d = \hOB_d,\UsedC(\hC_d)^\Pc=u_d^\Pc}\equiv\hVP_d |_{\hOA_d \ne \hOB_d,\UsedC(\hC_d)^\Pc=u_d^\Pc}.$$
\end{claim}
\begin{proof}

	The proof will follow by induction. We will prove the claim by induction over $d\in\N$.  
	For the bases case where $d=1$,  by construction $\hC_1\equiv\tC_2$ (it's the same protocol), and by \cref{claim:used:IT} our assumption holds.
	
	Assume that the hypothesis holds for $d$, 
	we will show that it also holds for $d+1$. Consider the channel $\hC_{d+1}$, by definition $\hC_{d+1}$ is the output we get by randomly executing $\Delta^{\hC_d}(2)$,
	for $i\in\zo$ let  $(\hVP_{d,1},\hVP_{d,2})$
denote the  view  of $\p$ in the two executions of $\hC_d$ done by $\Delta^{\hC_d}(2)$. 
From  \cref{claim:used:IT} we can deduce that for every $(v_1,v_2)\in\Supp(\hVP_{d,1},\hVP_{d,2})$, 
	$$\hVP_{d+1}\mid_{\hOA_{d+1} = \hOB_{d+1},(\hVP_{d,1},\hVP_{d,2})=(v_1,v_2)}\equiv\hVP_{d+1} \mid_{\hOA_{d+1} \ne \hOB_{d+1},(\hVP_{d,1},\hVP_{d,2})=(v_1,v_2)}$$
	By the hypothesis assumption,
	$$\hVP_{d+1}|_{\hOA_{d+1} = \hOB_{d+1},\UsedC(\hC_{d+1})^\Pc=u_{d+1}^\Pc}\equiv\hVP_{d+1} |_{\hOA_{d+1} \ne \hOB_d,\UsedC(\hC_d)^\Pc=u_d^\Pc}$$
	
		By \cref{Def:mean},
	$$\UsedC(\hC_{d+1}) = \UsedC(\hC_{d}),\UsedC(\hC_{d})$$ 
	
	the claim holds.
	\Jnote{As of now, the proof for this is ugly. I am considering taking Iftachs approach.}
\end{proof}

\

}
\subsubsection{From Channels with Large Gap to OT}\label{sec:ChannelsAmp:LargetoOT}	

\begin{definition}
A channel $C=((\VA,\OA),(\VB,\OB))$ is  {\sf balanced} if  $\pr{\OA = 1} = \pr{\OB = 1} = \half$.
\end{definition}

We use the following claim.
\begin{claim}\label{claim:ChannelsAmp:LG}	
	Let $C=((\VA,\OA),(\VB,\OB))$ be a balanced channel that has  $\alpha \in [\amin, \amax]$-agreement and $(\eps,\delta)$-\Leak. Then $C$ is a $(\eps_0,p)$-\SWBSC  for some $\eps_0 \in [ \half-\amax,\half-\amin]$, and $p=2\eps +\delta$. 
\end{claim}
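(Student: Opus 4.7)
The plan is to verify, in turn, each of the three requirements in \cref{def:SWBSC}, at each step reading off the SWBSC parameter from the corresponding hypothesis on $C$.

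First I would establish the correctness requirement by computing the joint distribution of $(\OA,\OB)$ from the two inputs we are given: (i) $C$ is balanced, i.e.\ $\pr{\OA=1}=\pr{\OB=1}=\half$, and (ii) $\pr{\OA=\OB}=\half+\alpha$. Writing $p_{b_\Ac b_\Bc}=\pr{\OA=b_\Ac,\OB=b_\Bc}$ and solving the four linear constraints (two marginal constraints, one agreement constraint, normalization) yields $p_{00}=p_{11}=\tfrac14+\tfrac{\alpha}{2}$ and $p_{01}=p_{10}=\tfrac14-\tfrac{\alpha}{2}$. Thus $\pr{\OA=0}=\half$ and, for each $b_\Ac\in\zo$,
\[
\pr{\OB\neq \OA\mid \OA=b_\Ac}\;=\;\frac{\tfrac14-\tfrac{\alpha}{2}}{\tfrac12}\;=\;\tfrac12-\alpha,
\]
so setting $\eps_0=\half-\alpha$ gives the correctness property with $\eps_0\in[\half-\amax,\half-\amin]$ as claimed.

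Next, for the receiver and sender security requirements I would convert the log-ratio leakage into statistical distance. By hypothesis, for each $\Pc\in\ABc$,
\[
\VP|_{\OA=\OB}\;\rindist_{\eps,\delta}\;\VP|_{\OA\neq \OB}.
\]
A routine check (that I would include as a one-line lemma) shows that $D_0\rindist_{\eps,\delta}D_1$ implies $\SD(D_0,D_1)\le (e^\eps-1)+\delta$: for any event $A$ one has $\Pr[D_0\in A]-\Pr[D_1\in A]\le(e^\eps-1)\Pr[D_1\in A]+\delta\le(e^\eps-1)+\delta$. Combining this with the elementary inequality $e^\eps-1\le 2\eps$ valid for $\eps\in[0,1]$ gives $\SD(\VP|_{\OA=\OB},\VP|_{\OA\neq \OB})\le 2\eps+\delta=p$ for both parties, which is exactly the receiver and sender security conditions in \cref{def:SWBSC}.

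There is no real obstacle here: once the joint distribution of $(\OA,\OB)$ is pinned down by the balancedness plus agreement assumptions, the $\SWBSC$ correctness parameter $\eps_0=\half-\alpha$ falls out, and the $\SWBSC$ security parameter $p=2\eps+\delta$ is produced by the standard log-ratio$\Rightarrow$statistical-distance conversion. The only mild subtlety I would flag is the implicit use of $\eps\le 1$ to bound $e^\eps-1$ by $2\eps$; in the regime of interest for \cref{thm:main:IT} this always holds (indeed $\eps$ is much smaller), so I would simply note this as an assumption at the outset of the proof.
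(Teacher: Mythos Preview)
Your proposal is correct and follows essentially the same approach as the paper. The paper's proof is a two-line argument that only addresses the security part---it invokes the log-ratio to statistical-distance conversion $D_0\rindist_{\eps,\delta}D_1\Rightarrow D_0\sindist_{2\eps+\delta}D_1$ (stated after \cref{def:Log-Ratio}) and declares the claim done, leaving the correctness computation implicit; you spell out that computation explicitly, which is a reasonable addition.
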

\begin{proof}
	For every $\p\in\set{\Ac,\Bc}$ we have that, 
	$\tVP|_{\tOA=\tOB}  \rindist_{(\eps,\delta)} \tVP|_{\tOA\neq \tOB},$
	thus by definition it follows that,	$\tVP |_{\tOA=\tOB}\sindist_{(2\eps+\delta)} \tVP |_{\tOA\neq \tOB}$, and the claim holds. 
\end{proof}
The following claim, states that a given a channel with bounded leakage and agreement we can construct a new protocol using the olds one, that has the same leakage and agreement, while having the additional property of being balanced.

\begin{claim}\label{claim:balanced:LG}
	There exists a constant-time single oracle call protocol  $\Delta$ such that for every channel  $C$, the channel $\tC$ induced by $\Delta^C$ is balanced and has the same agreement and leakage as of $C$.  
\end{claim}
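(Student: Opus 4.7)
The plan is to balance $C$ by having the parties jointly XOR both output bits with a shared, uniform, public bit that is independent of the channel. Concretely, define $\Delta^C$ as follows: party $\Ac$ samples $b\gets\zo$ uniformly, sends $b$ to $\Bc$, and both parties then make a single (joint) call to $C$, obtaining $(\VA,\OA)$ and $(\VB,\OB)$. Party $\Ac$ outputs $\tOA=\OA\xor b$ with view $\tVA=(\VA,b)$, and party $\Bc$ outputs $\tOB=\OB\xor b$ with view $\tVB=(\VB,b)$. This is clearly constant-time and uses a single call to $C$.

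Next I would verify balance and agreement. Since $b$ is uniform and independent of $(\OA,\OB)$, for each $\Pc\in\ABc$ the bit $\tOP=\OP\xor b$ is uniform over $\zo$, so $\tC$ is balanced. Because the \emph{same} bit $b$ is XORed into both outputs, $\tOA=\tOB$ iff $\OA=\OB$, so $\Pr[\tOA=\tOB]=\Pr[\OA=\OB]=\half+\alpha$ and the agreement is preserved exactly.

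For the leakage, the key observation is that $b$ is sampled independently of the channel, so for each $\Pc\in\ABc$ the pair $(\VP,b)$ is the product distribution $\VP\times U_1$, even after conditioning on the event $\set{\OA=\OB}$ or on its complement (since both events are determined by $(\OA,\OB)$, which is independent of $b$). Combined with the equivalence $\set{\tOA=\tOB}=\set{\OA=\OB}$, this gives
\[
\tVP\mid_{\tOA=\tOB}\ \equiv\ (\VP\mid_{\OA=\OB})\times U_1,
\qquad
\tVP\mid_{\tOA\ne\tOB}\ \equiv\ (\VP\mid_{\OA\ne\OB})\times U_1.
\]
Since attaching an independent uniform bit to both sides does not change the log-ratio distance (it is a bijective, randomness-free post-processing that is identical on both conditional distributions, so one can invoke \cref{fact:LR:DP} in both directions), $\tC$ has the same $(\eps,\delta)$-leakage as $C$.

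I do not expect any real obstacle here: the only thing to be slightly careful about is that the same $b$ is used on both sides (otherwise balance would be achieved but the agreement event would change), and that $b$ is sampled independently of the channel call (so conditioning on the agreement event leaves the distribution of $b$ in each party's view unchanged). Both are built into the construction.
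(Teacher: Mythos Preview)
Your proposal is correct and is essentially identical to the paper's approach: the paper's \cref{proto:balanced:LG} has the parties call $C$ once, then $\tAc$ sends a uniform bit $r$ to $\tBc$, and each party XORs its output with $r$; the paper then appeals to ``by construction'' for balance and agreement and to a data-processing argument for leakage, exactly as you spell out. One small wording slip: appending an independent uniform bit is neither bijective nor randomness-free, but your ``both directions'' idea is the right fix---apply \cref{fact:LR:DP} once to the randomized map $v\mapsto(v,U_1)$ and once to the projection $(v,b)\mapsto v$ to conclude the log-ratio parameters coincide.
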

\begin{protocol}\label{proto:balanced:LG}[Protocol $\Delta= (\tAc,\tBc)$] 
	\item[Oracle:] 	Channel   $C$.
	\item[Operation:] ~
	\begin{enumerate} 
		\item The parties (jointly) call the channel $C$. Let $o^{\Ac}$ and $o^{\Bc}$  denote their  output respectively.

		\item $\tAc$ sends  $r \la \zo $ to $\tBc$ .
		\item $\tAc$ outputs $o^{\Ac} \xor r$  and $\tBc$ outputs $ o^{\Bc} \xor r$.
	\end{enumerate}
\end{protocol}
\begin{proof}[Proof of \cref{claim:balanced:LG}]
	Let $\tC=\CHAN(\Delta^C)$. By construction $\tC$  is balanced and has $\alpha$-agreement.   Finally, by a data processing argument,  $\tC$ has the same leakage as $C$.
\end{proof}

\paragraph{Proving   \cref{lemma:main:LaegeGapToOT}.}

\begin{proof}[Proof of    \cref{lemma:main:LaegeGapToOT}]
	Set $n = 10^8$, let $C = \CHAN(\pi)$, let $\Pi' = \bound_{n\cdot t}(\Pi)$  and let $C'= \CHAN(\Pi')$. By Markov inequality, 
 
	\begin{align}\label{eq:main:LaegeGapToOT:SD}
		C' \sindist_{1/n}  C
	\end{align}

	By \cref{claim:balanced:LG}, there exist a protocol $\Delta$ such that $\Delta^C$ is balanced and has the same leakage and agreement as $C$. Moreover, since $\Delta$ only uses one call to the channel $C$, by data processing argument,
		\begin{align}
	\CHAN(\Delta^{\C'}) \sindist_{1/n} \CHAN(\Delta^C)
	\end{align}

 By \cref{claim:balanced:LG}, $\Delta^{\C'}$ is also balanced. \cref{claim:ChannelsAmp:LG} yields that $\Delta^C$ is a $(15/32,p)$-\WBSC  for $p=2\eps+\delta$. Hence, using \cref{Prop:CondSD}, we get that $\Delta^{\C'}$ is $(\eps_0,\overline{p})$-\WBSC, for $\eps_0 = \epsilon+1/10^8$ and $\overline{p}=p+4/10^7$. 

	In the following we use \cref{thm:wullschleger} to show that $\Delta^{\C'}$ can be used to construct semi-honest statistically secure OT.  To do this, we need to prove that
	\begin{align}\label{eq:main:LaegeGapToOT}
	150(1-(1-2\overline{p})^2)<(1-\frac{2\eps_0^2}{\eps_0^2+(1-\eps_0)^2})^2
	\end{align}
	Indeed, since $(1-2\frac{\eps_0^2}{\eps_0^2+(1-\eps_0)^2})^2\geq 1/100$, for $\delta'=1/10^{7}$ it holds that, for small enough $c$,

	\begin{align}
		(1-(1-2\overline{p})^2) \leq 4\overline{p} & \leq 4p+2/10^6 \leq  2\eps + \delta +2/10^6\\
		&\leq 3c+2/10^6\nonumber\\
		&<1/(150\cdot 100).\nonumber
\end{align}

	\newcommand{\tDelta}{\widetilde{\Delta}}
	And therefore $\Delta^{\C'}$ satisfies the requirement of \cref{thm:wullschleger}. Let $\Gamma$ be the protocol guaranteed  in \cref{thm:wullschleger}, and let $\wGamma^{C'}(1^\pk) = \Gamma^{\Delta^{C'}}(1^\pk, 49/100)$. By  \cref{eq:main:LaegeGapToOT,thm:wullschleger}, $\wGamma^{C'}(1^\pk)$ is statistically secure semi-honest OT.  Since $\epsilon_0$ is a bounded from $0$ and $1/2$ by constants, $\wGamma$ running time  in polynomial  in $\pk$. 
\end{proof}

\newcommand{\SPCL}{\myOptName{SPCL}}
\newcommand{\PSV}{\myOptName{passive}}
\newcommand{\COMP}{\myOptName{Comp}}

\subsection{The Computational Setting}\label{sec:ProtAmp}	

In this section we extend \cref{thm:main:IT} to the computational setting.  We start by defining the computational analogue of log-ratio leakage. We give two such definition, for the uniform and non-uniform settings. As in similar computational analogue  of information measures \cite{Holenstein06,HaitnerReVa13}, for the uniform version we need to give the uniform distinguisher the ability  to sample from the distributions in consideration,

\begin{definition}[Computational log-ratio leakage]\label{def:compLeakNU}~\\
A channel ensemble   $C= \set{C_\pk= ((\VA_\pk,\OA_\pk),(\VB_\pk,\OB_\pk))}_{\pk \in \N}$ has {\sf $(\eps,\delta)$-\CLeak} {\sf [\resp $(\eps,\delta)$-\NCLeak]} if there exists a channel ensemble   
$\tC= \set{\tC_\pk= ((\tVA_\pk,\tOA_\pk),(\tVB_\pk,\tOB_\pk))}_{\pk \in \N}$  such that the following holds: 
\begin{itemize}

		\item For every $\pk \in \N$: the channel  $\tC_\pk$ has $(\eps(\pk),\delta(\pk))$-\Leak 
		(according to \cref{def:LRLeakage:IT}).
		
		
		\item For every $\Pc \in \ABc$ and  \ppt $\Dc$:
		$$\size{\pr{\Dc^{C_\pk, \tC_\pk}(1^\pk,\VP_\pk,\OA_\pk,\OB_\pk) = 1} - \pr{\Dc^{C_\pk, \tC_\pk}(1^\pk,\tVP_\pk,\tOA_\pk,\tOB_\pk) = 1}} \le \negl (\pk).$$

		[ \resp  for every   $\Pc \in \ABc$:  $\{{\VP_\pk,\OA_\pk,\OB_\pk}\}_{\pk\in\N} \cindist \{\tVP_\pk,\tOA_\pk,\tOB_\pk\}_{\pk\in\N}$]
	
\end{itemize}

\end{definition}
That is, the distinguisher $\Dc$ aiming to tell $\Pc$'s view in $C$ from its view in $\tC$ is equipped the ability to oracle access to $C$ and  $\tC$. This ability is crucial when arguing about the leakage of many  samples of  such channels. We note that typically,  the channel $C$ in consideration is a one induced by an efficient protocol, and thus the oracle access to $C$ given  to $\Dc$ can be simulated efficiently.

\begin{theorem}[Small computational log-ratio leakage implies OT]\label{thm:main:C}
There exists  constant $c_1>0$ such that the following holds. Let   $\eps,\delta, \alpha$ be functions such that for every $\pk \in \N$: $\eps(\pk),\delta(\pk) \in [0,1]$,  $1/8 >\alpha(\pk)  \ge c_1 \cdot \eps(\pk)^2$ and $\delta(\pk) \le  \eps(\pk)^2$ and $\alpha(\pk) > 1/p(\pk)$ for some $p\in \poly$. Let $C$ be a  channel  ensemble that has $(\eps,\delta)$-\CLeak  [\resp $(\eps,\delta)$-\NCLeak]  and $\alpha $-\Agr. Then in the $C$-hybrid model there exists a semi-honest  [\resp non-uniform]  computational OT.  
\end{theorem}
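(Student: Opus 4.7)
The plan is to reduce the computational theorem to its information-theoretic counterpart (\cref{thm:main:IT}) by leveraging the ideal channel ensemble $\tC$ that is guaranteed to exist by the definition of $(\eps,\delta)$-\CLeak. Specifically, let $\tC = \{\tC_\pk\}_{\pk \in \N}$ be the ideal channel ensemble witnessing the computational leakage of $C$; by definition each $\tC_\pk$ has information-theoretic $(\eps(\pk),\delta(\pk))$-\Leak, and for every party $\Pc$ the tuple $(V^\Pc_\pk,\OA_\pk,\OB_\pk)$ is computationally indistinguishable from $(\tVP_\pk,\tOA_\pk,\tOB_\pk)$ even to distinguishers with oracle access to both $C$ and $\tC$. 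A preliminary step is to verify that $\tC$ inherits (essentially) $\alpha$-\Agr from $C$: the event $\{\OA_\pk = \OB_\pk\}$ is computable from $(\OA_\pk,\OB_\pk)$, so any polynomial gap in $\Pr[\tOA_\pk = \tOB_\pk]$ would yield an efficient distinguisher, contradicting the leakage definition. Hence $\tC$ has $\alpha'$-\Agr for $\alpha'(\pk) \ge \alpha(\pk) - \negl(\pk) \ge c_1 \cdot \eps(\pk)^2 / 2$ (with a slightly smaller constant), still satisfying the hypotheses of \cref{thm:main:IT}.

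Next, applying \cref{thm:main:IT} to $\tC$, we obtain that $\Delta^{\tC}(1^\pk, 1^{\floor{1/\alpha(\pk)}})$ is a semi-honest statistically secure OT in the $\tC$-hybrid model. Since $1/\alpha \in \poly$ by assumption, this protocol is \ppt (using the efficient variant of \cref{proto:EffAmp}) and makes some polynomial number $Q = Q(\pk)$ of oracle calls to the underlying channel. Because $\Delta$ is oblivious, we may freely substitute $C$ for $\tC$ at the hybrid level without altering the protocol's code: the claim is that $\Delta^{C}(1^\pk, 1^{\floor{1/\alpha(\pk)}})$ is a \emph{computationally} secure semi-honest OT in the $C$-hybrid model.

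The core technical step is a standard hybrid argument across the $Q$ oracle calls. Define hybrids $H_0, \ldots, H_Q$, where in $H_i$ the first $i$ calls go to $\tC$ and the remaining $Q-i$ calls go to $C$. Consecutive hybrids $H_i$ and $H_{i+1}$ differ only in a single call, and a distinguisher between them can be turned into a distinguisher for the computational leakage definition: the reduction uses its oracle access to $C$ and $\tC$ to simulate all calls except the $(i+1)$st one, on which it embeds its challenge sample. Thus $H_0 \ucindist H_Q$ (or $\cindist$ in the non-uniform case) with advantage bounded by $Q$ times the single-sample advantage, which remains negligible. Since $H_Q$ is exactly the real execution $\Delta^C$ and $H_0$ is $\Delta^{\tC}$, which is statistically OT-secure, the simulator guaranteed by the statistical security of $\Delta^{\tC}$ also works against \ppt adversaries for $\Delta^C$, up to a negligible loss.

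The main obstacle I anticipate is bookkeeping the hybrid reduction so that it simultaneously handles (i) the semi-honest simulator's output distribution for both parties (receiver and sender security require separate hybrid chains keyed to the choice of corrupted party, since the \CLeak definition fixes $\Pc$), and (ii) the subtle requirement that the reduction must efficiently simulate the \emph{other} $Q-1$ channel invocations while responding to the distinguisher — this is precisely what the oracle access in \cref{def:compLeakNU} was designed to enable. The non-uniform case $(\eps,\delta)$-\NCLeak follows by the same argument, replacing \ppt distinguishers and simulators by \nuppt ones throughout; the oracle access is no longer needed in this variant, but the hybrid argument is otherwise identical.
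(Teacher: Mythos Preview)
Your approach is correct and matches the paper's: show that $\tC$ inherits (up to negligible) the agreement of $C$, apply \cref{thm:main:IT} to $\tC$, and then transfer security to $\Delta^{C}$ via a hybrid over the channel calls---this is precisely why the \CLeak distinguisher in \cref{def:compLeakNU} is granted oracle access to both $C$ and $\tC$.

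One detail the paper handles that you gloss over: in the uniform setting the OT protocol receives only $1^\pk$, and the function $\alpha$ need not be efficiently computable, so you cannot simply feed $1^{\lfloor 1/\alpha(\pk)\rfloor}$ into $\Delta$. The paper's protocol therefore prepends a sampling phase in which $\Ac$ draws polynomially many independent samples from $C_\pk$, computes an empirical estimate $\widetilde\alpha$ of the agreement, and uses that to set the length parameter (aborting if $\widetilde\alpha < 1/p(\pk)$); a Hoeffding bound gives that $\widetilde\alpha$ is within a constant factor of $\alpha(\pk)$ except with negligible probability, so the hypotheses of \cref{thm:main:IT} are met. Your hardwiring of $\alpha$ is fine in the non-uniform case but would be a gap in the uniform one.
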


\cref{thm:main:C}  yields the following result. 
\begin{corollary}[Protocols with small  log-ration leakage implies OT]\label{cor:main:C}
	Let $\eps,\delta,\alpha$ be as in \cref{thm:main:C}.  Assume there exists  a \ppt  protocol that induces a   channel ensemble that has $\alpha$-\Agr and $(\eps,\delta)$-\CLeak [\resp $(\eps,\delta)$-\NCLeak], then there exists a  [\resp non-uniform] computational OT. 
\end{corollary}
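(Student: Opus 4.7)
The plan is to observe that \cref{cor:main:C} is essentially a direct consequence of \cref{thm:main:C} together with a standard ``compile-out'' step that replaces ideal channel calls with fresh executions of the given protocol. The heavy lifting is already done inside \cref{thm:main:C}; the corollary only needs to upgrade from the $C$-hybrid model to a standalone protocol.

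First, I would invoke \cref{thm:main:C} on the channel ensemble $C = \CHAN(\pi)$, where $\pi$ is the assumed \ppt protocol. By hypothesis, $C$ has $\alpha$-agreement and $(\eps,\delta)$-\CLeak (\resp $(\eps,\delta)$-\NCLeak) with parameters satisfying the theorem's premises, so we obtain a semi-honest computational (\resp non-uniform computational) OT protocol $\Gamma$ in the $C$-hybrid model. Note that the use of the CLeak definition (which gives the distinguisher oracle access to $C$) is consistent here because $\pi$ being \ppt means $C$ is efficiently samplable, and the distinguisher reductions required by the proof of \cref{thm:main:C} all go through.

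Next, I would define the compiled protocol $\Gamma^\pi$ obtained by syntactically replacing each ideal call to $C$ inside $\Gamma$ with an independent, fresh execution of $\pi$. By \cref{def:PrtoChannel}, a single such execution produces exactly the joint distribution of views and outputs specified by $C$. A routine hybrid over the polynomially many channel invocations therefore shows that the joint distribution of all parties' views and outputs in $\Gamma^\pi$ is \emph{identical} to that in $\Gamma$ in the $C$-hybrid model. Since $\pi$ is \ppt and $\Gamma$ is \ppt relative to its oracle, the composition $\Gamma^\pi$ is itself \ppt, and semi-honest OT security transfers from $\Gamma$ to $\Gamma^\pi$ verbatim: any efficient (\resp non-uniform) adversary breaking $\Gamma^\pi$ would directly break $\Gamma$ in the $C$-hybrid model, contradicting \cref{thm:main:C}.

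I do not anticipate a genuine obstacle here, since the argument is the standard functionality-to-protocol composition. The only point worth a brief sanity check is that the amplification protocols appearing in the proof of \cref{thm:main:C} are \emph{oblivious}, as promised in \cref{sec:Preliminaries}: they consume only the output bit of each channel call, so substituting $\pi$ for $C$ in $\Gamma$ does not require access to parts of the channel's view that are meant to be private. The full view produced by each $\pi$-execution simply becomes part of the transcript of $\Gamma^\pi$, exactly as the view $\VP$ does in the ideal-channel distribution captured by the CLeak/NCLeak definition, so nothing about the security reduction needs to be redone.
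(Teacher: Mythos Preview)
Your argument correctly handles the passage from the $C$-hybrid model to a standalone protocol, and that part matches the paper's reasoning (the paper is simply terser about it). However, you stop at a \emph{semi-honest} computational OT, whereas the corollary asserts a full computational OT, secure against arbitrary adversaries. Note that \cref{thm:main:C} only delivers \emph{semi-honest} OT in the $C$-hybrid model, and composing with the \ppt protocol $\pi$ preserves this: you end with semi-honest security, not malicious security.

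The paper closes this gap with two additional steps you omit. First, it invokes \cite{ImpagliazzoLu89} to conclude that the existence of the semi-honest computational OT protocol implies the existence of one-way functions. Second, it applies the GMW compiler \cite{GoldreichMW87}: given one-way functions, a semi-honest secure protocol can be transformed into one secure against arbitrary (malicious) adversaries. Your proposal needs these two ingredients to reach the conclusion as stated.
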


\begin{proof}
	We only prove the uniform security case,  the non-uniform case follow analogously. By \cref{thm:main:C}, the existence of the guaranteed protocol yields a semi-honest  computational OT protocol $\pi$. By \cite{ImpagliazzoLu89}, the existence of $\pi$ implies the existence of one-way functions. Finally, by \cite{GoldreichMW87}, using one-way functions we can compile $\pi$ into an OT secure against arbitrary adversaries. 
\end{proof}

\newcommand{\tDelta}{\widetilde{\Delta}}
\newcommand{\talpha}{\widetilde{\alpha}}
\begin{proof}[Proof of \cref{thm:main:C}]

Let $\Delta = (\Ac,\Bc)$ be the protocol guaranteed by \cref{thm:main:IT}. Consider the following protocol.
\begin{protocol}\label{proto:OT:C}[Protocol $\tDelta= (\Ac,\Bc)$] 
	
	\item[Oracle:] channel $C$.
	
	\item[Parameter:] security parameter $1^\pk$.
	\item[Operation:] ~
	
	\begin{enumerate}
		\item  $\Ac$ samples $t(\pk)$ independent instances from $C_\pk$, and sends the average agreement  $\talpha$ to   $\tBc$.

		If $\talpha< 1/p(\pk)$, the two parties abort.

		\item The parties interact in $\Delta^{C_\pk}(1^\pk,1^\ell)$, for $\ell=\max(1,{2^{\floor{\log (2/(3\cdot\amax))}-2}})$ (and output the  same values  as the  parties in this interaction do).   
\end{enumerate}
\end{protocol}
It is clear that $\tDelta^C$ runs in polynomial time.   Let $\widetilde{A}_\pk$ be the value of $\talpha$ in a random execution of $\tDelta^C(1^\pk)$. By Hoffeding bound,
 $$\pr{\widetilde{A}_\pk \notin [\alpha-1/3\alpha,\alpha+1/3\alpha]} \le \pr{\widetilde{A}_\pk \notin [\alpha-1/3\cdot p(\pk),\alpha+1/3\cdot p(\pk)]}\le \negl(\pk),$$ 
 which implies that $\alpha \in [3/4\cdot \widetilde{A}_\pk,3/2\cdot \widetilde{A}_\pk]$. The correctness of $\tDelta$ thus follows by \cref{thm:main:IT}.

We prove security only for the uniform security case,  the non-uniform case follow analogously.  Let $\tC$ be the channel ensemble that realizes  the  $(\eps,\delta)$-\CLeak of $C$. First note that the correctness of $\tC$ is the same as $C$ up to some negligible additive value, as otherwise it is easy  to distinguish between $C$ and $\tC$. By the above observation about $\widetilde{A}_k$ and \cref{thm:main:IT}, it follows that $\tDelta^{\tC}$ is a semi-honest secure OT in the $\tC$-hybrid model. Assume there exists a distinguisher  that violates the security of one of the parties in   $\tDelta^C$, a simple hybrid argument yields that a distinguisher with the ability to sample from $C$ and $\tC$ can exploit the above security breach  to violates  the assumed indistinguishability of $C$ and $\tC$.
\end{proof}

\remove{
Recall that a protocol $(\Ac,\Bc)(1^\pk)$ where $\Ac,\Bc$ have no private inputs, and both output bits, induces a channel ensemble $\set{\OA_\pk,\OB_\pk,{\VA_\pk},{\VB_\pk}}_{\pk \in \N}$ where $\OA_\pk,\OB_\pk$ and ${\VA_\pk},{\VB_\pk}$ are the outputs and views of $\Ac,\Bc$, respectively, on the input $1^\pk$.

We identify the protocol with the channel ensemble, and will say that the protocol $(\Ac,\Bc)$ has \emph{agreement} or \emph{leakage}.

\Rnote{Once we're done, we'll state a corollary in terms of \ppt protocols.}

As in the information theoretic settings, we prove \cref{thm:main:C} by converting the channel ensemble into a \COMP-\SPCL-\PSV-\WBSC. It will be helpful to consider the following ``simulation variant '' of \COMP-\SPCL-\PSV-\WBSC.

\begin{definition}[Sim-\SPCL-\WBSC]
A channel ensemble $\set{\OA_\pk,\OB_\pk,{\VA_\pk},{\VB_\pk}}_{\pk \in \N}$ is \emph{$(\eps_0,p)$-Sim-\COMP-\SPCL-\PSV-\WBSC} if for every $\pk \in \N$, there exist random variables $\widetilde{\VA_\pk},\widetilde{\VB_\pk}$ that are jointly distributed with $\OA_\pk,\OB_\pk,{\VA_\pk},{\VB_\pk}$ such that:
\begin{itemize}
\item For every $\pk \in \N$, the channel  $(\OA_\pk,\OB_\pk,\widetilde{\VA_\pk},\widetilde{\VB_\pk})$ is an $(\eps_0,p)$-spcl-\PSV-\WBSC.
\item Let $E_{\pk}$ denote the indicator variable for the event $\set{\OA_\pk=\OB_\pk}$, the following holds:
    \begin{itemize}
    \item For every infinite sequence of bits $\set{b_\pk}_{\pk \in \N}$, $\set{{\VA_\pk}|_{E_\pk=b_\pk}} \cindist \set{\widetilde{\VA_\pk}|_{E_\pk=b_\pk}}$.
    \item For every infinite sequence of bits $\set{b_\pk}_{\pk \in \N}$,  $\set{{\VB_\pk}|_{E_\pk=b_\pk}} \cindist \set{\widetilde{\VB_\pk}|_{E_\pk=b_\pk}}$.
    \end{itemize}
\end{itemize}
\end{definition}


\begin{lemma}
	\label{lem:main prot works comp}
	If $\set{\OA_\pk,\OB_\pk,{\VA_\pk},{\VB_\pk}}_{\pk \in \N}$ is a channel ensemble that satisfies the requirements in Theorem \ref{thm:main:C}, then:
\begin{itemize}
 \item The probability that an invocation of \cref{proto:ot:IT} is successful is at least $(1/2+\alpha)^{\ell}=\Omega(2^{-(1/4\alpha)})$.
 \item The channel induced by a successful invocation of protocol \ref{proto:ot} is an
$(\eps_0,p)$-Sim-\COMP-\SPCL-\PSV-\WBSC for $3/8 \geq \eps_0 \geq 1/8$, and $p=2\ell\epsilon^2+2\epsilon\sqrt{2\ell\ln({1/\delta'})}+\delta'+\ell\delta$, for any $0<\delta'<1$.
\end{itemize}
\end{lemma}

\begin{proof}
The operation of \cref{proto:ot} does not depend on the ``view variables'' of the channel, and only depends on the outputs.
Therefore, the run of such a protocol is identical on the channels $\set{\OA_\pk,\OB_\pk,{\VA_\pk},{\VB_\pk}}$ and $\set{\OA_\pk,\OB_\pk,\widetilde{\VA_\pk},\widetilde{\VB_\pk}}$, and in particular the outputs are the same. Thus, the first item of \cref{lem:main prot works comp} follows directly from the first item of \cref{lem:main prot works}.
By the same argument, it follows that applying \cref{proto:ot} on channel $\set{\OA_\pk,\OB_\pk,{\VA_\pk},{\VB_\pk}}$ satisfy the correctness property of a spcl-\PSV-\WBSC, with the same $\eps_0,p$ as in \cref{lem:main prot works}.

Thus, the expected number of invocations needed to get a successful invocation (and the parties receive an output) is $O(2^{1/4\alpha(\pk)})$ which is polynomial as $\alpha(\pk) =\Omega(\frac{1}{\log \pk})$.\Nnote{tree}

	For security, note that the views of the parties in a successful execution of this protocol is $\ell(\pk) \in poly(\pk)$ independent samples from $\VA_\pk|_{E_\pk=b_\pk}$ and $\VB_\pk|_{E_\pk=b_\pk}$ for some $b_\pk \in \zo$, and an additional uniform independent bit ($B^\Ac$ and $B^\Bc$).
Therefore, using a hybrid argument, we get that the views of the parties in this protocol are computationally indistinguishable to the views of the parties in a execution of the protocol with the channel $(\OA_\pk,\OB_\pk,\widetilde{\VA_\pk},\widetilde{\VB_\pk})$. That is,
	
	\begin{equation*}
	\set{(\OA_\pk,{\VA_\pk}|_{E_\pk=b_\pk})^\ell,B^\Ac}_{\pk \in \N}  \cindist \set{(\OA_\pk,\widetilde{\VA_\pk}|_{E_\pk=b_\pk})^\ell,B^\Ac}_{\pk \in \N}
	\end{equation*}
	\begin{equation*}
	\set{(\OB_\pk,{\VB_\pk}|_{E_\pk=b_\pk})^\ell,B^\Bc}_{\pk \in \N}  \cindist \set{(\OB_\pk,\widetilde{\VB_\pk}|_{E_\pk=b_\pk})^\ell,B^\Bc}_{\pk \in \N}
	\end{equation*}
	
	 Since, from \cref{lem:main prot works}, an execution of \cref{proto:ot} using this channel is  $(\eps_0,p)$-spcl-\PSV-\WBSC, we get that execution of the protocol using $(\OA_\pk,\OB_\pk,{\VA_\pk},{\VB_\pk})$ is $(\eps_0,p)$-Sim-\COMP-\SPCL-\PSV-\WBSC.
\end{proof}

To complete the proof of \cref{thm:main:C}, we show that Sim-\COMP-\SPCL-\PSV-\WBSC indeed implies OT, the next lemma state that it implies \COMP-\SPCL-\PSV-\WBSC. Since by \cref{clm:spcl->WBSC} and \cref{thm:wullschleger} \COMP-\SPCL-\PSV-\WBSC implies OT for the same choice of parameters as in the proof of \cref{thm:mainIT}.

\begin{claim}
	An $(\epsilon_0, p)$-Sim-\COMP-\SPCL-\PSV-\WBSC is also a $(\epsilon_0, p+1/\pk)$-\COMP-\SPCL-\PSV-\WBSC.
\end{claim}

We remark that the choice of the function $1/\pk$ above is quite arbitrary and any function that is noticeable will do.

\begin{proof}
    Let $\set{\OA_\pk,\OB_\pk,{\VA_\pk},{\VB_\pk}}_{\pk \in \N}$ be an $(\epsilon_0, p)$-Sim-\COMP-\SPCL-\PSV-\WBSC. First, notice that the correctness of \COMP-\SPCL-\PSV-\WBSC holds from definition.

	For receiver security, assume for contradiction that there is distinguisher $\Dc$ that can break the receiver security of the protocol using advises $\set{z_\pk}_{\pk \in \N}$. Then, there is a polynomial $f(\pk)$, s.t. for infinite many $\pk \in \N$  it holds that
	
	$\size{\pr{\Dc(1^\pk,({\VA_\pk}, \OA_\pk)|_{E_\pk=0},z_\pk)=1}-\pr{\Dc(1^\pk,({\VA_\pk}, \OA_\pk)|_{E_\pk=1},z_\pk)=1}}\geq p +f(\pk)$.
	
	From the definition of Sim-\COMP-\SPCL-\PSV-\WBSC, there exists random variable $\widetilde{\VA_\pk}$ s.t. for any large enough $\pk$ and a bit $b_\pk$:
	
	$\size{\pr{\Dc(1^\pk,({\VA_\pk}, \OA_\pk)|_{E_\pk=b_\pk},z_\pk)=1}-\pr{\Dc(1^\pk,(\widetilde{\VA_\pk}, \OA_\pk)|_{E_\pk=b_\pk},z_\pk)=1}}\leq neg(\pk)$, and,
	
	$\size{\pr{\Dc(1^\pk,(\widetilde{\VA_\pk}, \OA_\pk)|_{E_\pk=0},z_\pk)=1}-\pr{\Dc(1^\pk,(\widetilde{\VA_\pk}, \OA_\pk)|_{E_\pk=1},z_\pk)=1}}\leq p$

	Now, from the triangle inequality it holds that:
	
	\begin{align*}
	p+f(x) &\leq \size{\pr{\Dc(1^\pk,({\VA_\pk}, \OA_\pk)|_{E_\pk=0},z_\pk)=1}-\pr{\Dc(1^\pk,({\VA_\pk}, \OA_\pk)|_{E_\pk=1},z_\pk)=1}}\\
	&\begin{aligned}
	=&|\pr{\Dc(1^\pk,({\VA_\pk}, \OA_\pk)|_{E_\pk=0},z_\pk)=1}-\pr{\Dc(1^\pk,(\widetilde{\VA_\pk}, \OA_\pk)|_{E_\pk=0},z_\pk)=1}\\
	&+\pr{\Dc(1^\pk,(\widetilde{\VA_\pk}, \OA_\pk)|_{E_\pk=0},z_\pk)=1}-\pr{\Dc(1^\pk,(\widetilde{\VA_\pk}, \OA_\pk)|_{E_\pk=1},z_\pk)=1}\\
	&+\pr{\Dc(1^\pk,(\widetilde{\VA_\pk}, \OA_\pk)|_{E_\pk=1},z_\pk)=1}-\pr{\Dc(1^\pk,({\VA_\pk}, \OA_\pk)|_{E_\pk=1},z_\pk)=1}|
	\end{aligned}\\
	&\begin{aligned}
	\leq&\size{\pr{\Dc(1^\pk,({\VA_\pk}, \OA_\pk)|_{E_\pk=0},z_\pk)=1}-\pr{\Dc(1^\pk,(\widetilde{\VA_\pk}, \OA_\pk)|_{E_\pk=0},z_\pk)=1}}\\
	&+\size{\pr{\Dc(1^\pk,(\widetilde{\VA_\pk}, \OA_\pk)|_{E_\pk=0},z_\pk)=1}-\pr{\Dc(1^\pk,(\widetilde{\VA_\pk}, \OA_\pk)|_{E_\pk=1},z_\pk)=1}}\\
	&+\size{\pr{\Dc(1^\pk,(\widetilde{\VA_\pk}, \OA_\pk)|_{E_\pk=1},z_\pk)=1}-\pr{\Dc(1^\pk,({\VA_\pk}, \OA_\pk)|_{E_\pk=1},z_\pk)=1}}
	\end{aligned}\\
	&\begin{aligned}
	\leq&\size{\pr{\Dc(1^\pk,({\VA_\pk}, \OA_\pk)|_{E_\pk=0},z_\pk)=1}-\pr{\Dc(1^\pk,(\widetilde{\VA_\pk}, \OA_\pk)|_{E_\pk=0},z_\pk)=1}}\\
	&+\size{\pr{\Dc(1^\pk,(\widetilde{\VA_\pk}, \OA_\pk)|_{E_\pk=1},z_\pk)=1}-\pr{\Dc(1^\pk,({\VA_\pk}, \OA_\pk)|_{E_\pk=1},z_\pk)=1}}\\
	&+p
	\end{aligned}\\
	\end{align*}
	
	We get that for infinite many $\pk \in \N$,
		\begin{align*}
	f(x) \leq &\size{\pr{\Dc(1^\pk,({\VA_\pk}, \OA_\pk)|_{E_\pk=0},z_\pk)=1}-\pr{\Dc(1^\pk,(\widetilde{\VA_\pk}, \OA_\pk)|_{E_\pk=0},z_\pk)=1}}\\
	&+\size{\pr{\Dc(1^\pk,(\widetilde{\VA_\pk}, \OA_\pk)|_{E_\pk=1},z_\pk)=1}-\pr{\Dc(1^\pk,({\VA_\pk}, \OA_\pk)|_{E_\pk=1},z_\pk)=1}}
	\end{align*}
	
	Which imply that at least one of the following holds for infinite many $\pk$:
	\begin{itemize}
		\item $f(x)/2 \leq \size{\pr{\Dc(1^\pk,({\VA_\pk}, \OA_\pk)|_{E_\pk=0},z_\pk)=1}-\pr{\Dc(1^\pk,(\widetilde{\VA_\pk}, \OA_\pk)|_{E_\pk=0},z_\pk)=1}}$
		\item $f(x)/2 \leq \size{\pr{\Dc(1^\pk,(\widetilde{\VA_\pk}, \OA_\pk)|_{E_\pk=1},z_\pk)=1}-\pr{\Dc(1^\pk,({\VA_\pk}, \OA_\pk)|_{E_\pk=1},z_\pk)=1}}$
	\end{itemize}
	in contradiction to the definition of $\widetilde{\VA_\pk}$.
	
	The proof for the security of the sender follows similar lines.
\end{proof}

}
\section{Characterization of  Channel for Distributed Differentially Private Computation}\label{sec:DPXORtoOTl}

In this section we prove our results on 2-party differentially private computation. Our goal is to show that a sufficiently accurate 2-party differentially private computation of the XOR function implies OT.
In \cref{sec:DPXORtoOT:IT} we consider differential privacy in an information theoretic setting. In \cref{sec:DPXORtoOT:C} we consider the computational setting, giving formal definitions with which we restate and prove \cref{thm:DPXORInf}. Finally, in \cref{sec:non-monotone} we extend our result to functions over many bits that are not ``monotone under relabeling''.

Throughout, we use the following notions of  agreement and  accuracy for functionalities. Since we care about lower bounds, we only consider (a weaker) average-case variant of these notions.

\begin{definition}[Accuracy and agreement, functionalities]\label{def:AccAgr:funcionality}
Let $f:\cX\times \cY \to \zn$ be a  Boolean output functionality  and let  $(\OA_{x,y}, \OB_{x,y}) = f(x,y)$. We say that $f$ has {\sf average agreement $\alpha$} if  $\ppr{x,y \gets \cX,\cY}{\OA_{x,y} = \OB_{x,y}}  =  \half + \alpha$. We say that $f$  {\sf computes a Boolean function $g$ with average correctness   $\beta$}, if $\pr{\OA_{x,y} = \OB_{x,y}=g(x,y)} = \half + \beta$.

A non-Boolean  output functionality $f$ has agreement $\alpha$  if the Boolean functionality $f'$, defined by $f'(x,y) = (o^\Ac_1,o^\Bc_1)$ for $(o^\Ac,o^\Bc) \gets f(x,y)$, has agreement $\alpha$. Similarly,  $f$  computes  $g$ with correctness   $\beta$, if  the functionality $f'$ does.
\end{definition}
Namely,  a non-Boolean functionality $f$ has certain agreement and correctness (\wrt Boolean function $g$) if  this holds \wrt the first bits it outputs (\ie its ``designated output bits'').


\subsection{The XOR Functionality}
\subsubsection{The Information Theoretic Case}\label{sec:DPXORtoOT:IT}

We prove the  following  characterization of differential private functionalities  for computing XOR.

\begin{theorem}\label{thm:DPXORtoOT:IT}
There exists a \ppt protocol $\Delta$ and a constant $c_1>0$ such that the following holds. Let $\eps,\beta \in [0,1]$ be such that $\beta \geq c_1 \cdot \eps^2$. Let $f = (f_\Ac,f_\Bc)$ be a functionality  that is $\eps$-DP, has perfect agreement and computes the XOR function with average correctness $\beta$. Then $\Delta^f(1^\pk,1^{\floor{1/\beta}})$  is a  semi-honest   statistically secure OT in the $f$-hybrid model. Furthermore, the parties in $\Delta$ only make use of the first bit of the outputs of  $f$.
\end{theorem}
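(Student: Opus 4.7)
The plan is to reduce to \cref{thm:main:IT}. First, transform the functionality $f$ into a no-input balanced Boolean-output channel $C$ with $\beta$-agreement and $(\eps + O(\beta), 0)$-log-ratio leakage. Then invoke the amplification result of \cref{thm:main:IT}, which converts $C$ into a semi-honest statistically secure OT protocol. The transformation uses a single call to $f$ through only the first output bit of each party, which accounts for the ``furthermore'' clause in the statement.

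Concretely, consider the protocol $\Pi$ of \cref{intro:pro:DPXOR:IT} with $f$ in place of the DP XOR protocol: $\Ac$ samples $X \gets \zo$, $\Bc$ samples $Y \gets \zo$; the parties jointly invoke $f(X, Y)$ and receive the first bits $\OA_f, \OB_f$ of their outputs; $\Ac$ sends a uniform mask $R \gets \zo$; and the parties set $\OA := X \xor R$ and $\OB := \OB_f \xor Y \xor R$. Let $C := \CHAN(\Pi)$, and let $\Delta^f(1^\pk, 1^{\floor{1/\beta}})$ run the amplifier of \cref{thm:main:IT} on $C$ with $\amax$ a small constant multiple of $\beta$. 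Perfect agreement of $f$ makes $\OA_f = \OB_f$ always, so $\OA = \OB$ if and only if $\OB_f = X \xor Y$, which by average correctness has probability $\half + \beta$; the fresh mask $R$ guarantees that $C$ is balanced.

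The heart of the argument is the leakage analysis. For $\Ac$, write $V^\Ac = (X, R, V^\Ac_f)$ and note that, by perfect agreement, the common output $\OB_f$ is already a deterministic function of $V^\Ac_f$, denoted $\OB_f(V^\Ac_f)$. Hence conditioning on $\{\OA = \OB\}$ given $V^\Ac$ pins $Y$ to the single value $X \xor \OB_f(V^\Ac_f)$, collapsing the marginalization over $Y$ in the expression for $\Pr[V^\Ac = (x, r, v_f), \OA = \OB]$ to a single term. Dividing by the analogous expression for $\OA \ne \OB$ leaves precisely the ratio $\Pr[V^\Ac_f = v_f \mid X{=}x, Y{=}y]/\Pr[V^\Ac_f = v_f \mid X{=}x, Y{=}y \xor 1]$, which by $\eps$-DP of $f$ with respect to $\Bc$'s single-bit input lies in $[e^{-\eps}, e^\eps]$. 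Renormalizing from joint to conditional probabilities multiplies by the factor $(\half \pm \beta)/(\half \mp \beta) = e^{\pm O(\beta)}$, yielding $V^\Ac|_{\OA=\OB} \rindist_{\eps + O(\beta)} V^\Ac|_{\OA \ne \OB}$; the symmetric argument for $V^\Bc$ uses DP \wrt $\Ac$'s input.

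Finally, apply \cref{thm:main:IT} with effective leakage parameter $\eps' := \eps + O(\beta)$ and $\delta' := 0$. For $\beta$ bounded by a sufficiently small universal constant, $(\eps')^2 \le 2\eps^2 + O(\beta^2)$, so the hypothesis $\beta \ge c_1 \eps^2$ with $c_1$ chosen large enough absorbs the constant-factor loss and satisfies the amplifier's precondition; the complementary range $\beta = \Theta(1)$ is handled by \cref{thm:wullschleger} directly, after converting the log-ratio leakage to its $O(\eps)$ statistical counterpart. The main obstacle is the leakage computation of the third paragraph: one must carefully verify that the passage from joint to conditional probabilities introduces only an additive $O(\beta)$ term in the log-ratio parameter, and that this additive loss is quadratically absorbed by the slack already present in the amplifier's hypothesis.
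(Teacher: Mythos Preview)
Your proposal is correct and follows the same overall strategy as the paper: build the channel of \cref{pro:DPXOR:IT} from $f$, establish $\beta$-agreement and $O(\eps)$ log-ratio leakage, and then invoke \cref{thm:main:IT}.

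The one noteworthy difference is in the leakage calculation. The paper proves an intermediate statement (\cref{claim:dpnew}) that $(\VA,\OA)\mid_{\OB=b}\rindist_\eps (\VA,\OA)\mid_{\OB=\bar b}$, and then applies it twice to obtain $(2\eps,0)$-leakage. You instead exploit perfect agreement directly: since the common output bit $\OB_f=\OA_f$ is already visible in $V^{\Ac}_f$, the sum over $Y$ in $\Pr[V^\Ac=v,\OA=\OB]$ collapses to a single term, and the joint ratio reduces exactly to a DP ratio $\in[e^{-\eps},e^{\eps}]$; the passage to conditionals contributes only the normalization factor $(1/2-\beta)/(1/2+\beta)=e^{\pm O(\beta)}$, giving $(\eps+O(\beta),0)$-leakage. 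Both bounds are $O(\eps)$ (since $\eps$-DP forces $\beta=O(\eps)$), so either suffices; your route is a bit more direct, while the paper's route isolates a reusable statement about conditioning on the other party's output. Your final case split on $\beta=\Theta(1)$ is unnecessary: as the paper notes, $\beta\ge c_1\eps^2$ together with $\beta=O(\eps)$ already forces $\beta<1/8$ once $c_1$ is large enough, so \cref{thm:main:IT} applies unconditionally.
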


We prove \cref{thm:DPXORtoOT:IT}  by constructing  in the $f$-hybrid model  a balanced protocol that induced balanced channel with $\beta$-agreement and  that has $(2\eps,0)$-\Leak.

\begin{protocol}[$\pi^f= (\Ac,\Bc)$]\label{pro:DPXOR:IT}
	\item Oracle:  $f$.
	
	\item Operation:
	
	\begin{enumerate}
		\item 	 $\Ac$ samples  $i_\Ac\gets\zo$ and  $\Bc$ samples  $i_\Bc\gets\zo$.
		
		\item The parties make  a joint call to  $f(i_\Ac,i_\Bc)$. Let $\out_\Bc$ be the first bit of the output given to $\Bc$.

		\item $\Ac$ sends   $r\gets\zo$ to $\Bc$. \label{pro:item: B}
		
		\item The parties  output $i_\Ac\xor r$ and $\out_\Bc \xor i_\Bc \xor r$, respectively.
	\end{enumerate}
\end{protocol}

The proof of \cref{thm:DPXORtoOT:IT}  immediately follows be the next lemma and the tools we devolved in the previous section.

\begin{lemma}\label{lem:DPXORtoOT:IT}
Let $\beta,\eps$ and $f$ be as in \cref{thm:DPXORtoOT:IT}, then in the $f$-hybrid model protocol $\pi^f$ induces  a channel of $(2\eps,0)$-\Leak and $\beta$-\Agr.
\end{lemma}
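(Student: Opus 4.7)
The plan is a short Bayesian computation for each party separately, crucially exploiting the fact that perfect agreement of $f$ delivers a common output bit to both sides. Agreement is immediate: by perfect agreement, both parties learn the same bit $\out \eqdef \out_\Ac = \out_\Bc$ from $f$, and since $O^\Ac = i_\Ac \xor r$ and $O^\Bc = \out \xor i_\Bc \xor r$, the event $E \eqdef \{O^\Ac = O^\Bc\}$ reduces to $\{\out = i_\Ac \xor i_\Bc\}$, which holds with probability $1/2 + \beta$ by the $\beta$-average-correctness assumption.

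For leakage of $\Bc$ (the argument for $\Ac$ is symmetric), I would write $V^\Bc = (i_\Bc, r, V^f_\Bc)$, noting that $V^f_\Bc$ contains $\out$. Conditional on $V^\Bc = (b, r, v)$, event $E$ reduces to the single-bit event $\{i_\Ac = \out(v) \xor b\}$, so it suffices to control the posterior ratio of $i_\Ac$ given $V^\Bc$. Since $i_\Ac$ is a uniform bit independent of $(i_\Bc, r)$, Bayes' rule combined with the $\eps$-DP of $f$ on $\Bc$'s view (for changes of $i_\Ac$ with $i_\Bc$ held fixed) yields $\Pr[i_\Ac = 0 \mid V^\Bc = v]/\Pr[i_\Ac = 1 \mid V^\Bc = v] \in [e^{-\eps}, e^\eps]$, and hence $\Pr[E \mid V^\Bc]/\Pr[\bar E \mid V^\Bc] \in [e^{-\eps}, e^\eps]$. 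A second Bayes step then gives
\begin{equation*}
\frac{\Pr[V^\Bc = v \mid E]}{\Pr[V^\Bc = v \mid \bar E]} \;=\; \frac{\Pr[E \mid V^\Bc = v]}{\Pr[\bar E \mid V^\Bc = v]} \cdot \frac{\Pr[\bar E]}{\Pr[E]} \;\leq\; e^\eps \cdot \frac{1/2 - \beta}{1/2 + \beta} \;\leq\; e^\eps,
\end{equation*}
while the reversed ratio is bounded by $e^\eps \cdot (1/2 + \beta)/(1/2 - \beta)$.

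The step I expect to be the main obstacle is closing this reversed ratio at $e^{2\eps}$, which requires the folklore accuracy bound for $\eps$-DP XOR stating that $\beta \leq (e^\eps - 1)/(2(e^\eps + 1))$, equivalently $(1/2 + \beta)/(1/2 - \beta) \leq e^\eps$. I plan to prove this by an elementary argument: setting $p_{xy} \eqdef \Pr[\out(x,y) = 1]$, the identity $4\beta = (p_{01} - p_{11}) + (p_{10} - p_{00})$, together with the observation that any two probabilities $p, q$ with $p/q, (1-p)/(1-q) \in [e^{-\eps}, e^\eps]$ satisfy $|p - q| \leq (e^\eps - 1)/(e^\eps + 1)$, forces $|p_{0y} - p_{1y}|, |p_{x0} - p_{x1}| \leq (e^\eps - 1)/(e^\eps + 1)$ and hence the claimed bound on $\beta$. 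Once this is established, the leakage bound for $\Bc$ is $(2\eps, 0)$; the analogous computation for $\Ac$ uses $\eps$-DP for changes in $i_\Bc$ and completes the lemma. The one conceptual point worth highlighting is that perfect agreement of $f$ is precisely what reduces the leakage question on each side to a posterior-ratio on a single bit of the \emph{opposite} party's input---the object controlled by differential privacy.
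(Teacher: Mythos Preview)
Your proof is correct and takes a genuinely different route from the paper's.

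The paper never appeals to the accuracy bound $(1/2+\beta)/(1/2-\beta)\le e^\eps$. Instead it proves an intermediate claim (\cref{claim:dpnew}): for each party $\Pc$, the pair $(\VP,\OP)$ conditioned on $\OB=b$ (respectively $\OA=a$) is $(\eps,0)$-log-ratio close to the same pair conditioned on $\OB=\bar b$ (respectively $\OA=\bar a$). The point is that the fresh random bit $r$ in the protocol makes conditioning on the \emph{output} of the other party equivalent, up to a relabeling, to conditioning on that party's \emph{input}, so $\eps$-DP transfers directly to output-conditioning. The $2\eps$ then arises by applying this claim twice: once to move $\Pr[H^{v,a}\mid \OB=a]$ to $\Pr[H^{v,a}\mid \OB=\bar a]$, and once to control the ratio $\Pr[\OA=a\mid \OB=\bar a]/\Pr[\OA=a\mid \OB=a]$; the prior factor is handled by the balancedness $\Pr[\OA=a\mid \OA=\OB]=\Pr[\OA=a\mid \OA\neq\OB]=1/2$, not by any accuracy bound.

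Your approach is more elementary: straight Bayes on the posterior of the opposite party's input bit given $\VP$, followed by a second Bayes step. The price is that the prior ratio $\Pr[\bar E]/\Pr[E]$ appears explicitly, and closing the reversed direction at $e^{2\eps}$ forces you to prove the folklore accuracy bound $2\beta\le (e^\eps-1)/(e^\eps+1)$. That auxiliary lemma is correct as you sketch it (and your decomposition $4\beta=(p_{01}-p_{11})+(p_{10}-p_{00})$ only uses DP of $\Bc$'s view with respect to $i_\Ac$, which is exactly what is available). So your route trades the paper's ``DP transfers to output-conditioning'' trick for an explicit, independently useful accuracy bound; both lead to the same $(2\eps,0)$ conclusion.
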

We prove  \cref{lem:DPXORtoOT:IT} below, but first use it for proving  \cref{thm:DPXORtoOT:IT}.

\begin{proof}[Proof of \cref{thm:DPXORtoOT:IT}]
	The proof directly follows from  \cref{thm:main:IT} and \cref{lem:DPXORtoOT:IT}. Note that, by differential privacy properties, $\beta$ is bounded. Specifically, for sufficiently large $c_1$, $\beta \leq 1/8$. 
\end{proof}


Let $C= ((\OA,\VA),(\OB,\VB))$ denote the channel induces by a random execution of $\pi^f$. \cref{lem:DPXORtoOT:IT} is an immediate consequence of the following three claims.

\begin{claim}\label{clm:DPXORtoOT:IT:Agr}
$\pr{\OA=\OB}  = 1/2+\beta$.
\end{claim}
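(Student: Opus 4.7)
\begin{proofsketch}[of \cref{clm:DPXORtoOT:IT:Agr} -- plan]
The plan is to track the XOR of the two output bits through the protocol and reduce the agreement probability to the correctness of $f$.

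First I would expand the outputs. By construction,
\[
\OA \xor \OB \;=\; (i_\Ac \xor r) \xor (\out_\Bc \xor i_\Bc \xor r) \;=\; \out_\Bc \xor i_\Ac \xor i_\Bc,
\]
so the random bit $r$ cancels out and $\set{\OA = \OB}$ is exactly the event $\set{\out_\Bc = i_\Ac \xor i_\Bc}$. Note that $r$ has no role in the agreement analysis; its purpose is only to randomize the marginal distributions of $\OA,\OB$ (relevant for the balancedness and leakage claims that follow).

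Next I would invoke the assumed properties of $f$. Because $f$ has perfect agreement, the first bit of Bob's output equals the first bit of Alice's output in every execution of $f(i_\Ac,i_\Bc)$; call this common bit $o(i_\Ac,i_\Bc)$. In the protocol $i_\Ac$ and $i_\Bc$ are drawn independently and uniformly from $\zo$, which matches the distribution over the input domain used in \cref{def:AccAgr:funcionality}. Therefore, by the assumption that $f$ computes XOR with average correctness $\beta$,
\[
\pr{\OA = \OB} \;=\; \ppr{i_\Ac,i_\Bc \gets \zo}{o(i_\Ac,i_\Bc) = i_\Ac \xor i_\Bc} \;=\; \tfrac12 + \beta,
\]
which is precisely the desired identity.

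There is essentially no obstacle here: the algebra in the first step is immediate, and the second step is a direct application of the definition of average correctness to the uniformly chosen protocol inputs. The only subtlety worth stating explicitly is that the input distribution to $f$ inside $\pi^f$ coincides with the uniform distribution over $\zo \times \zo$ used when measuring the correctness of $f$, which is why we obtain equality (and not just a lower bound).
\end{proofsketch}
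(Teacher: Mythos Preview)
Your proposal is correct and is exactly the argument the paper has in mind: the paper's own proof is the one-liner ``Follows by construction and the assumed accuracy of $f$,'' and your sketch simply unpacks this by canceling $r$ and invoking perfect agreement together with the average-correctness definition over uniform inputs.
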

\begin{proof}
	Follows by construction and the assumed accuracy of $f$.
\end{proof}

 \begin{claim}\label{clm:DPXORtoOT:IT:Leak}
  For both $\p\in\set{\Ac,\Bc}$:
 	$(\VP,\OP)\mid_ {\OA= \OB}\rindist_{2\eps} (\VP,\OP) \mid_{ \OA\neq \OB}$.
 \end{claim}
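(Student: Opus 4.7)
The plan is to bound the log-ratio of the conditional distributions using the $\eps$-DP of $f$, focusing on $\Ac$'s view (the argument for $\Bc$'s view is symmetric). The first step is the observation that $\OA = i_\Ac \xor r$ is a deterministic function of $\VA = (i_\Ac, V^\Ac_f, r)$, where $V^\Ac_f$ denotes $\Ac$'s view in the $f$-call; hence $(\VA, \OA)$ has the same distribution as $\VA$, and it suffices to establish $\VA \mid_{E_+} \rindist_{2\eps} \VA \mid_{E_-}$, where $E_+ := \set{\OA = \OB}$ and $E_- := \set{\OA \ne \OB}$. By perfect agreement both parties receive the same output $o$ from $f$, and $\OA \xor \OB = o \xor i_\Ac \xor i_\Bc$, so $E_+$ holds exactly when $f$ correctly computes XOR on $(i_\Ac, i_\Bc)$. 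Once $\VA = (a, w, s)$ is fixed, $o = o(w)$ is determined, and the event $E_+$ (resp.\ $E_-$) forces the still-random $i_\Bc$ to $a \xor o(w)$ (resp.\ $1 \xor a \xor o(w)$). Using the mutual independence of $i_\Bc$ and $r$ from the other variables, this yields
\[\pr{\VA = (a, w, s),\, E_\pm} = \tfrac{1}{8}\, \pr{\VA_{a, c_\pm} = w}, \qquad c_+ := a \xor o(w),\quad c_- := 1 \xor c_+.\]

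Because $c_+$ and $c_-$ differ in Hamming distance $1$, $\eps$-DP of $f$ will bound the ratio $\pr{\VA_{a, c_+} = w}/\pr{\VA_{a, c_-} = w} \in [e^{-\eps}, e^\eps]$. Dividing by $\pr{E_\pm} = 1/2 \pm \beta$ to pass to conditional probabilities multiplies the ratio by $(1/2 \mp \beta)/(1/2 \pm \beta)$. In one direction this factor is at most $1$, so $\pr{\VA = \cdot \mid E_+} \le e^\eps\, \pr{\VA = \cdot \mid E_-} \le e^{2\eps}\, \pr{\VA = \cdot \mid E_-}$ is immediate. In the reverse direction the extra factor is $(1/2+\beta)/(1/2-\beta)$, so the desired $e^{2\eps}$ bound reduces to proving $(1/2+\beta)/(1/2-\beta) \le e^\eps$.

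To establish this inequality I will apply DP pointwise and then average. For every $(x, y) \in \zo \times \zo$, $\eps$-DP of $f$ in its second coordinate gives
\[\pr{o = x \xor y \mid f(x, y)} \le e^\eps\, \pr{o = x \xor y \mid f(x, 1 \xor y)} = e^\eps\, \pr{o \ne x \xor (1 \xor y) \mid f(x, 1 \xor y)},\]
where the final equality uses $x \xor (1 \xor y) = 1 \xor (x \xor y)$. Averaging over uniform $(x, y)$ and reindexing $y \mapsto 1 \xor y$ in the right-hand sum yields $1/2 + \beta \le e^\eps\,(1/2 - \beta)$, as needed. The bound for $\VB$ follows by the symmetric construction, replacing DP in $\Bc$'s input by DP in $\Ac$'s input. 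The main obstacle is this final averaging step: it exploits precisely the fact that flipping one input of XOR flips the correct output bit, so correctness at $(x, y)$ is directly comparable to error at the Hamming-neighbor $(x, 1 \xor y)$; without this structural feature, $\eps$-DP alone would not give the needed relation between $\pr{E_+}$ and $\pr{E_-}$.
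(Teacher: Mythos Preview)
Your proof is correct, but it takes a genuinely different route from the paper's.

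The paper first establishes an intermediate claim (their Claim~\ref{claim:dpnew}): $(\VA,\OA)\mid_{\OB=b}\rindist_\eps(\VA,\OA)\mid_{\OB=\bar b}$, proved by using the fresh random bit $r$ to translate conditioning on $\OB$ into conditioning on $\IB$. The main claim is then obtained by writing $\Pr[H^{v,a}\mid E_+]=\Pr[H^{v,a}\mid \OB=a]\cdot\frac{\Pr[\OA=a\mid E_+]}{\Pr[\OA=a\mid \OB=a]}$, using the balancedness $\Pr[\OA=a\mid E_+]=\Pr[\OA=a\mid E_-]=\tfrac12$ (again from $r$), and applying the intermediate claim \emph{twice}: once on the numerator $\Pr[H^{v,a}\mid \OB=\cdot]$ and once on the normalizer $\Pr[\OA=a\mid \OB=\cdot]$. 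The two $\eps$'s add to $2\eps$.

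Your approach instead computes $\Pr[\VA=(a,w,s),E_\pm]=\tfrac18\Pr[\VA_{a,c_\pm}=w]$ directly and splits the $2\eps$ as: one $\eps$ from the pointwise DP ratio, and a second $\eps$ from the global normalizer $(1/2+\beta)/(1/2-\beta)\le e^\eps$, which you derive by averaging the pointwise DP inequality $p_{x,y}\le e^\eps(1-p_{x,1\oplus y})$ over uniform $(x,y)$. This averaging step is exactly where your argument exploits the XOR structure, as you note; the paper's proof never needs to reason about $\beta$ at all, relying instead on the extra randomization $r$ to make the normalizing factors cancel. A side effect of your decomposition is that one direction actually holds with $e^\eps$ rather than $e^{2\eps}$, and your argument for the leakage claim would go through even without the $r$-step in the protocol (though $r$ is still needed elsewhere for balancedness). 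The paper's route buys a clean standalone lemma (Claim~\ref{claim:dpnew}) that may be reusable; yours is more self-contained.
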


\renewcommand{\IP}{I^\Pc}
\renewcommand{\IA}{I^\Ac}
\newcommand{\IB}{I^\Bc}
\newcommand{\ob}{\overline{b}}
\newcommand{\oa}{\overline{a}}

\remove{\begin{proof}
We assume for ease of notation that the value of the bit $r$ chosen  by $\Ac$ in $\pi$ is always set to $0$ (rather than uniformly at random), it is clear that this change has no effect on log-ration distance we measure. We also only prove for $\p= \Ac$, where the case $\p= \Bc$ follows analogously. Fix $a\in \zo$. Let $\IP$ be the value of $i_\Pc$ as appears in $\VP$. The differential privacy of $f$ yields that:
\begin{align}\label{eq:DPXORtoOT:IT:Leak:1}
\VA|_{\IA = a,\IB=0}  \rindist_\eps\VA|_{\IA = a,\IB=1}
\end{align}
Let $\Out$ be the (always common) output of $f$ in $\VA$. Fix an  any event $E$  and let $E_o$, for $o\in \zo$, be in event in $\set{E \land \Out = o}$. Since $\OA = \IA$ and since $\OB= \IB \xor \Out$ (recall we assume that the coin $r$ is $0$), it follows that
\begin{align}\label{eq:DPXORtoOT:IT:Leak:2}
\begin{aligned}
\ppr{\VA \mid \OA= a,\OB=b}{E_o}  &=  \frac{\ppr{\VA,\VB \mid \OA= a}{E_o,\OB=b}}{\ppr{\VA,\VB \mid \OA= a}{\OB=b}}\\
&=  \frac{\ppr{\VA,\VB \mid \IA= a}{E_o,\IB=b\xor o}}{\ppr{\VA,\VB \mid \OA= a}{\OB=b}}\\
&=  \ppr{\VA,\VB \mid \IA= a,\IB=b\xor o}{E_o}\frac{\ppr{\VA,\VB \mid \IA=a}{\IB= b\xor o}}{\ppr{\VA,\VB \mid \OA= a}{\OB=b}}\\
&=  \ppr{\VA,\VB \mid \IA= a,\IB=b\xor o}{E_o}\frac{1}{2\ppr{\VA,\VB \mid \IA= a}{\OB=b}}\\
&\leq  \ppr{\VA,\VB \mid \IA= a,\IB=b\xor o}{E_o}\frac{1+e^\eps}{2}\\
&\leq  e^\eps\ppr{\VA,\VB \mid \IA= a,\IB=b\xor o}{E_o}\\
\end{aligned}
\end{align}
Where the last inequality implied by the facts that  $\ppr{\VA,\VB}{\OB=b}=1/2$, and,

 $\ppr{\VA,\VB \mid \IA= a}{\OB=b} \leq e^\eps \ppr{\VA,\VB \mid \IA= \bar{a}}{\OB=b}$.

 It follows that for   $b\in \zo$:

 \begin{align*}
 \ppr{\VA \mid \OA= a,\OB=b}{E} & =  \ppr{\VA \mid \OA= a,\OB=b}{E_0} +  \ppr{\VA \mid \OA= a,\OB=b}{E_1}\\
 &\le e^\eps \cdot  \left( \ppr{\VA \mid \IA= a,\IB=b}{E_0} +  \ppr{\VA \mid \IA= a,\IB=\ob}{E_1}\right)\\
 &\le e^{2\eps} \cdot  \left(\ppr{\VA \mid \IA= a,\IB=\ob}{E_0} +  \ppr{\VA \mid \IA= a,\IB=b}{E_1} \right)\\
 &= e^{2\eps} \cdot  \left(\ppr{\VA \mid \IA= a,\OB=\ob}{E_0} +  \ppr{\VA \mid \IA= a,\OB=\ob}{E_1} \right)\\
 &= e^{2\eps} \cdot   \ppr{\VA \mid \OA= a,\OB=\ob}{E} .
 \end{align*}
 The second and forth qualities are by \cref{eq:DPXORtoOT:IT:Leak:2}, and the inequality is  by \cref{eq:DPXORtoOT:IT:Leak:1}. We conclude that for every  $a,r\in \zo$:
\begin{align}\label{eq:DPXORtoOT:IT:Leak:3}
\VA|_{\OA = a,\OB=0,R=r}  \rindist_{2\eps}\VA|_{\OA = a,\OB=1,R=r}
\end{align}
To see that $\VA|_{\OA = a,\OB=0}  \rindist_{3\eps}\VA|_{\OA = a,\OB=1}$, note that, for $E^r=\set{E \land R = r}$
 \begin{align*}
\ppr{\VA \mid \OA= a,\OB=b}{E^r} & =  \ppr{\VA \mid \OA= a,\OB=b, R=r}{E^r}\ppr{\VA \mid \OA= a,\OB=b}{R=r}\\
\end{align*}
and it hold that:
 \begin{align*}
\ppr{\VA \mid \OA= a,\OB=b}{R=r}&=\ppr{\VA \mid \OA=a}{\OB= b,R=r}/\ppr{\VA \mid \OA=a}{\OB= b}\\
&=\ppr{\VA \mid \OA=a,R=r}{\IB\xor \Out= b \xor r}\ppr{\VA \mid \OA=a}{R=r}/\ppr{\VA \mid \OA=a}{\OB= b}\\
&=1/2\cdot\ppr{\VA \mid \OA=a,R=r}{\IB\xor \Out= b \xor r}/\ppr{\VA \mid \OA=a}{\OB= b}.
\end{align*}

Finally, the convexity of the log-ration distance, \cref{fact:LR:Convex}, yields that
\Nnote{It only holds when $r$ is not fixed, right? \Inote{didnt get your comment}}
\begin{align*}
\VA|_{\OA =\OB}  \rindist_{2\eps}\VA|_{\OA \neq \OB}.
\end{align*}
\end{proof}
}

  We  use the following claim that states that we have bounded leakage with respect to the outputs of protocol $\tPi$.
\begin{claim}\label{claim:dpnew}
	For every $a,b \in \zo$ it holds that,
	\begin{itemize}
		\item $(\VA,\OA)\mid_ {\OB= b}\rindist_\eps (\VA,\OA) \mid_{ \OB=\ob}$
		\item $(\VB,\OB)\mid_ {\OA= a}\rindist_\eps (\VB,\OB) \mid_{ \OA=\oa}$
	\end{itemize}
\end{claim}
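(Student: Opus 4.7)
The plan is to reduce the claim to the $\eps$-DP guarantee of $f$, using two structural features of $\pi^f$: (i) the \emph{perfect} agreement of $f$ forces the first output bit seen by $\Bc$ to be determined once $\VA$ is fixed, and (ii) the uniform mask $r$ (and the uniform input bits $i_\Ac, i_\Bc$) makes all marginals trivially uniform, so normalization factors cancel. I would only prove the first bullet in detail; the second follows by symmetry, noting that $r$ is part of $\VB$ as well since it is sent in the clear.

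First, fix $v \in \Supp(\VA)$ and read off from $v$ the values $i_\Ac^v$, $r^v$, and $\out_\Ac^v$ (the first bit of $\Ac$'s output from $f$ recorded inside $v$); note that $\OA$ is a deterministic function of $\VA$, so it suffices to bound the ratio of $\Pr[\VA = v \mid \OB = b]$ as $b$ varies. By perfect agreement of $f$ we have $\out_\Bc = \out_\Ac^v$ whenever $\VA = v$, hence
\[
\set{\VA = v}\cap \set{\OB = b} \;=\; \set{\VA = v}\cap \set{i_\Bc = b \xor \out_\Ac^v \xor r^v}.
\]
Since $\OB = \out_\Bc \xor i_\Bc \xor r$ and $r$ is uniform and independent of $(\out_\Bc,i_\Bc)$, we get $\Pr[\OB = b] = 1/2$ for each $b$, so the normalization cancels and
\[
\frac{\Pr[\VA = v \mid \OB = b]}{\Pr[\VA = v \mid \OB = \ob]}
\;=\;
\frac{\Pr[\VA = v,\; i_\Bc = b \xor \out_\Ac^v \xor r^v]}{\Pr[\VA = v,\; i_\Bc = \ob \xor \out_\Ac^v \xor r^v]}.
\]

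Second, write $\VA = (i_\Ac, r, \VA^f)$ where $\VA^f$ is $\Ac$'s view from the call to $f$, and factor numerator and denominator as $\Pr[i_\Ac = i_\Ac^v]\cdot \Pr[r = r^v]\cdot \Pr[i_\Bc = j]\cdot \Pr[\VA^f = v^f \mid i_\Ac = i_\Ac^v, i_\Bc = j]$ for the two values $j \in \zo$ that arise. The first three factors equal $1/2$ each and cancel, leaving the ratio of $\Pr[\VA^f = v^f \mid i_\Ac = i_\Ac^v, i_\Bc = j]$ for two neighboring inputs $(i_\Ac^v, 0)$ and $(i_\Ac^v, 1)$. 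By the $\eps$-DP of $f$ (applied to $\Ac$'s view, since the neighboring inputs differ in $\Bc$'s coordinate), this ratio is at most $e^\eps$. Quantifying over subsets and summing gives $(\VA, \OA) \mid_{\OB = b} \rindist_\eps (\VA, \OA) \mid_{\OB = \ob}$, which is exactly the first bullet.

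The main ``obstacle'' is not mathematical but bookkeeping: $\VA$ does not contain $i_\Bc$, yet conditioning on $\OB$ effectively fixes a value of $i_\Bc$ (via $r^v$ and $\out_\Ac^v$). The cleanest way to handle this, as above, is to use perfect agreement to remove $\out_\Bc$, then use the uniformity of $r$ to eliminate the outer normalization, and only then invoke DP on the lone remaining conditional. The second bullet is completely symmetric: fix $u \in \Supp(\VB)$, use perfect agreement to identify $\out_\Ac = \out_\Bc^u$, use the fact that $\OA = i_\Ac \xor r$ with $i_\Ac$ uniform and independent of $(\VB \setminus \set{r})$ to see $\Pr[\OA = a] = 1/2$, and conclude by DP against $\Bc$'s view.
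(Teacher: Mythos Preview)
Your proof of the first bullet is correct and essentially identical to the paper's: both reduce $\Pr[\VA=v\mid\OB=b]$ to $\tfrac12\cdot\Pr[\VA^f=v^f\mid i_\Ac=i_\Ac^v,\,i_\Bc=b\xor r^v\xor\out^v]$ and then invoke $\eps$-DP of $f$; the paper just packages $r^v\xor\out^v$ as a single variable $r'=R\xor\Out$ rather than reading the two bits off separately.

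One small correction to your sketch of the second bullet: perfect agreement is not needed there (since $\OA=i_\Ac\xor r$ does not involve $\out$ at all), and your claim that ``$i_\Ac$ is independent of $\VB\setminus\{r\}$'' is false (Bob's $f$-view depends on $i_\Ac$). The paper's argument for this case is actually simpler: since $r\in\VB$ is fixed to $r^u$ once $\VB=u$, the event $\{\OA=a\}$ becomes $\{i_\Ac=a\xor r^u\}$, and the ratio reduces directly to $\Pr[\VB^f=u^f\mid i_\Ac=a\xor r^u,\,i_\Bc=i_\Bc^u]$ over the same with $i_\Ac$ flipped, bounded by $e^\eps$ via DP. Your conclusion is right; only the one-line justification needs this adjustment.
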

We now prove \cref{clm:DPXORtoOT:IT:Leak} using the above claim.  We prove for $\p= \Ac$, where the case $\p= \Bc$ follows analogously.
\begin{proof}[Proof of \cref{clm:DPXORtoOT:IT:Leak}]

	 For  $v\in\Supp(\VA)$ and $a\in\zo$, let $H^{\vv,a}=\set{(\VA,\OA)=(v,a)}$. We need to  show that for every $v,a$:
	 \begin{align}\label{eq:DPIT:toProve}
	  \ppr{\VA\mid {\OA=\OB}}{H^{\vv,a}}\rindist_{2\eps}\ppr{\VA\mid {\OA\neq\OB}}{H^{\vv,a}}
	 \end{align}
	
	  Compute,
	\begin{align}\label{eq:DPIT:eq}
	\begin{aligned}
		\ppr{\VA\mid {\OA=\OB}}{H^{\vv,a}} &= \ppr{\VA\mid {\OA=\OB}, \OA=a}{H^{\vv,a}}\cdot\pr{ \OA=a \mid \OA= \OB}\\
	&= \ppr{\VA\mid \OA=a, \OB=a}{H^{\vv,a}}\cdot\pr{ \OA=a \mid \OA= \OB}\\
	&= \ppr{\VA\mid \OB=a}{H^{\vv,a}}\cdot\frac{\pr{ \OA=a \mid \OA= \OB}}{\pr{ \OA=a \mid \OB=a}}.
	\end{aligned}
	\end{align}
	
	In the same way,
	\begin{align}\label{eq:DPIT:neq}
	\ppr{\VA\mid {\OA\neq\OB}}{H^{\vv,a}} = \ppr{\VA\mid \OB=\oa}{H^{\vv,a}}\cdot\frac{\pr{ \OA=a \mid \OA\neq \OB}}{\pr{ \OA=a \mid \OB=\oa}}
\end{align}
	By construction,
	 \begin{align}\label{eq:DPIT:output}
	\pr{ \OA=a \mid \OA= \OB} = \pr{ \OA=a \mid \OA\neq \OB} = 1/2
	\end{align}
	We conclude that
		\begin{align*}
		\ppr{\VA\mid {\OA=\OB}}{H^{\vv,a}} &=  \ppr{\VA\mid \OB=a}{H^{\vv,a}}\cdot\frac{\pr{ \OA=a \mid \OA= \OB}}{\pr{ \OA=a \mid \OB=a}}\tag{by \cref{eq:DPIT:eq}}\\
	&\leq e^\eps \cdot \ppr{\VA\mid \OB=\oa}{H^{\vv,a}}\cdot\frac{\pr{ \OA=a \mid \OA= \OB}}{\pr{ \OA=a \mid \OB=a}}\tag{by  \cref{claim:dpnew}}\\
	&= e^\eps \cdot\ppr{\VA\mid \OB=\oa}{H^{\vv,a}}\cdot\frac{\pr{ \OA=a \mid \OA\neq \OB}}{\pr{ \OA=a \mid \OB=a}}\tag{by  \cref{eq:DPIT:output}}\\
	&\leq e^{2\eps} \cdot \ppr{\VA\mid \OB=\oa}{H^{\vv,a}}\cdot\frac{\pr{ \OA=a \mid \OA\neq \OB}}{\pr{ \OA=a \mid \OB=\oa}}\\
	&=e^{2\eps}\cdot \ppr{\VA\mid {\OA\neq\OB}}{H^{\vv,a}}\tag{by  \cref{eq:DPIT:neq}}.
\end{align*}
The last inequality holds since  by \cref{claim:dpnew}, $\frac{\pr{ \OA=a \mid \OB=\oa}}{\pr{ \OA=a \mid \OB=a}}\leq e^\eps$.

The proof that $\ppr{\VA\mid {\OA=\OB}}{H^{\vv,a}}  \ge e^{2\eps}\cdot \ppr{\VA\mid {\OA\neq\OB}}{H^{\vv,a}}$ is identical, thus the claim holds.
\end{proof}
\paragraph{Proving \cref{claim:dpnew}.}
	
\begin{proof}[Proof of \cref{claim:dpnew}]
We write $f= (f^\Ac,f^\Bc)$. Let $\IA$ and $\IB$ be the values of the inputs of the parties, and let $\Out$ be the (common) values of the output $f^\Pc(\IA,\IB)_1$ and the random bit $r$ in $\VP$ respectively.
	Fix  $\oo \in\zo$ and $v\in \sup(\VB)$, and let $b,r$ be the values of $\IB$ and $R$ according to $v$. Since $R$ and $\IA$ are uniform bits, the value of $R$ is independent from $\IA$, and independent from $\OA=\IA \xor R$ (separately). Thus,
			\begin{align}\label{eq:DPIT:outDP}
	&\pr{\VB=v \mid \OA= \oo } =\pr{\VB=v, \IB=b, R=r  \mid \OA= \oo}\\
		&=\pr{\VB=v, \IB=b  \mid \OA= \oo, R=r }\cdot\pr{R=r \mid \OA= \oo } \nonumber\\
		&=\pr{\VB=v, \IB=b \mid \IA=\oo\xor r, R=r}\cdot\pr{R=r \mid \OA= \oo }\nonumber\\
			&=\pr{\VB=v, \IB=b, R=r \mid \IA=\oo\xor r}\cdot\frac{\pr{R=r \mid \OA= \oo }}{\pr{R=r \mid \IA=\oo\xor r}}\nonumber\\
		&=\pr{\VB=v, \IB=b, R=r \mid \IA=\oo\xor r}\nonumber\\
		&=1/2\cdot \pr{\VB=v \mid \IA=\oo\xor r, \IB=b}.\nonumber
	\end{align}
	Since \cref{eq:DPIT:outDP} holds for every $\oo \in \zo$, and since
	\begin{align*}
1/2\cdot \pr{\VB=v \mid \IA= r, \IB=b}\rindist_\eps 1/2\cdot \pr{\VB=v \mid \IA=\overline{r}, \IB=b},
	\end{align*}
	we conclude that  $\pr{\VB=v \mid \OA= 0 }\rindist_\eps \pr{\VB=v \mid \OA= 1 }$.

The proof of the second item follows in by a similar argument.
For every  $\oo \in\zo$ and  $v\in \sup(\VA)$, let $a,r'$ be the values of $\IA$ and $R\xor\Out$ according to $v$ respectively. Since the value of $R\xor\Out$ is independent from $\IB$, and independent from $\OB$ (separately), we conclude that
	\begin{align*}
	&\pr{\VA=v  \mid \OB= \oo }
	=\pr{\VA=v, \IA=a, R\xor\Out=r'  \mid \OB= \oo }\\
	&=\pr{\VA=v, \IA=a \mid \OB= \oo, R\xor\Out=r' }\cdot\pr{R\xor\Out=r' \mid \OB=\oo}\\
	&=\pr{\VA=v, \IA=a \mid \IB= \oo\xor r', R\xor\Out=r' }\cdot\pr{R\xor\Out=r' \mid \OB=\oo}\\
	&=\pr{\VA=v, \IA=a, R\xor\Out=r' \mid \IB= \oo\xor r' }\cdot\frac{\pr{R\xor\Out=r' \mid \OB=\oo}}{\pr{R\xor\Out=r' \mid \IB= \oo\xor r'}}\\
	&=\pr{\VA=v, \IA=a, R\xor\Out=r' \mid \IB= \oo\xor r' }\\
	&=1/2\cdot\pr{\VA=v \mid \IB= \oo\xor r', \IA=a }.
	\end{align*}
\end{proof}

\paragraph{Proving \cref{lem:DPXORtoOT:IT}.}
\begin{proof}[Proof of \cref{lem:DPXORtoOT:IT}]
	Let $C=(\OA,\OB,\VA,\VB)$ denotes the channel induces by $\pi^f$. \cref{clm:DPXORtoOT:IT:Agr} yields that  $C$ has  $\beta$-\Agr, and  \cref{clm:DPXORtoOT:IT:Leak} yields that  $C$ has $(2\eps,0)$-\Leak.
\end{proof}

\newcommand{\Ox}{O_{x,y,\pk}}
\newcommand{\OAx}{\OA_{x,y,\pk}}
\newcommand{\OBx}{\OB_{x,y,\pk}}
\newcommand{\VAx}{\VA_{x,y,\pk}}
\newcommand{\VBx}{\VB_{x,y,\pk}}
\newcommand{\VPx}{\VP_{x,y,\pk}}
\newcommand{\tOAx}{\tOA_{x,y,\pk}}
\newcommand{\tOBx}{\tOB_{x,y,\pk}}
\newcommand{\tVAx}{\tVA_{x,y,\pk}}
\newcommand{\tVBx}{\tVB_{x,y,\pk}}
\newcommand{\tVPx}{\tVP_{x,y,\pk}}
\newcommand{\Rx}{R_{x,y,\pk}}
\newcommand{\hf}{\widehat{f}}

\subsubsection{The  Computational Case}\label{sec:DPXORtoOT:C}

In this section we restate and prove \cref{thm:DPXORInf}. We will use the following definition.

\begin{definition}[Accuracy and agreement, protocols]\label{def:AccAgr:protocol}
	Let $\Pi$ be a  Boolean output protocol  and let  $(\OA_{x,y}, \OB_{x,y}) = \Pi(x,y)$. We say that $\Pi$ has {\sf average agreement $\alpha$} if,  $\ppr{x,y \gets \cX,\cY}{\OA_{x,y} = \OB_{x,y}}  = \half + \alpha$. We say that $\pi$  {\sf computes a Boolean function $g:\cX\times \cY \to \zn$ with average correctness   $\beta$}, if   \\$\ppr{x,y \gets \cX,\cY}{\OA_{x,y} = \OB_{x,y}=g(x,y)} = \half + \beta$. Similarly, we say that $\pi$  {\sf computes  $g$ with worst-case correctness   $\beta$}, if for every inputs $x,y \in \cX,\cY$, \\ $\pr{\OA_{x,y} = \OB_{x,y}=g(x,y)} \geq \half + \beta$.
\end{definition}

The following is a restatement of \cref{thm:DPXORInf}.

\begin{theorem}\label{thm:DPXORtoOT:C}
	There exists a constant $c>0$ such that the following holds. Let $\eps, \beta$ be functions such that for every $\pk\in \N$: $\eps(\pk), \beta(\pk) \in [0,1]$, $\beta(\pk)\geq c\cdot\eps(\pk)^2$ and $\beta(\pk) \geq 1/p(\pk)$ for some $p \in \poly$. Assume there exist  a \ppt Boolean output protocol  that is semi-honest $\eps$-\CDP  and  computes the XOR functionality with perfect agreement and average correctness  at least $\beta(\pk)$. Then there exists a computationally non-uniform secure  OT.
\end{theorem}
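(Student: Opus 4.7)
The plan is to mimic the information-theoretic construction in Section \ref{sec:DPXORtoOT:IT}, with the protocol $\pi$ playing the role of the functionality $f$, and then to appeal to the amplification Corollary \ref{cor:main:C}. Concretely, I would define a \ppt protocol $\widetilde{\pi}$ which runs Protocol \ref{pro:DPXOR:IT} in the $\pi$-hybrid model: $\Ac$ samples $i_\Ac \la \zo$, $\Bc$ samples $i_\Bc \la \zo$, they jointly invoke $\pi(i_\Ac,i_\Bc)(1^\pk)$, $\Ac$ sends $r\la \zo$ to $\Bc$, and then $\Ac$ outputs $i_\Ac \xor r$ while $\Bc$ outputs $\out_\Bc \xor i_\Bc \xor r$. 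Let $C=\CHAN(\widetilde{\pi})$. Because $\pi$ has average correctness $\beta$ on uniform single-bit XOR inputs and the post-processing by $r$ is balanced, the exact same calculation as in Claim \ref{clm:DPXORtoOT:IT:Agr} shows that $C$ has $\beta$-\Agr.

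The main step is to upgrade this to $(2\eps,0)$-\NCLeak of the channel ensemble $C$. Since $\pi$ is semi-honest $\eps$-\CDP over the single-bit input domain $\zo\times \zo$, I would invoke the \cite{MironovPRV09} equivalence between the indistinguishability-based and the simulation-based formulations of computational differential privacy, which holds precisely for functionalities whose inputs are single bits (as in our setting after restricting to uniform $i_\Ac, i_\Bc$). This yields an information-theoretically $\eps$-\DP functionality $\widehat{f}$ such that for every $x,y\in \zo$ and each $\Pc\in \ABc$, the output-view of $\Pc$ under $\widehat{f}(x,y)$ is non-uniformly computationally indistinguishable from the view of $\Pc$ in $\pi(x,y)(1^\pk)$. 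Let $\widehat{C}$ be the channel ensemble obtained by running Protocol \ref{pro:DPXOR:IT} in the $\widehat{f}$-hybrid model. Lemma \ref{lem:DPXORtoOT:IT} applied to $\widehat{f}$ gives that $\widehat{C}$ has information-theoretic $(2\eps,0)$-\Leak and $\widehat{\beta}$-\Agr with $\widehat{\beta} = \beta \pm \negl(\pk)$. A two-step hybrid (one hybrid per party, using the indistinguishability guarantees of $\widehat{f}$) shows that for each $\Pc\in\ABc$ the joint distribution $(V^\Pc_\pk,\OA_\pk,\OB_\pk)$ under $C$ is non-uniformly computationally indistinguishable from the analogous distribution under $\widehat{C}$, which is exactly the witness required by Definition \ref{def:compLeakNU}. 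Hence $C$ has $(2\eps,0)$-\NCLeak.

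Finally, since by hypothesis $\beta \geq c\cdot \eps^2$ with $c$ a sufficiently large universal constant and $\beta$ noticeable ($\beta \geq 1/p(\pk)$), the parameters satisfy the requirements of Corollary \ref{cor:main:C} (taking the constant $c_1$ there as $c$ here, and possibly slightly shrinking $c$ to absorb the factor of two from $2\eps$ versus $\eps$). Applying Corollary \ref{cor:main:C} to the \ppt protocol $\widetilde{\pi}$ produces a non-uniform computationally secure OT, completing the proof.

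The main obstacle, and the only genuinely computational step, is the reduction to an information-theoretic differentially private simulator $\widehat{f}$ via the \cite{MironovPRV09} equivalence: one must verify that this equivalence applies in our exact formulation (semi-honest $\eps$-\CDP according to Definition \ref{def:DP:C}, with single-bit input domain, and allowing the simulator's output to govern an information-theoretic $\eps$-\DP distribution). Once that bridge is in place, the remainder is a line-by-line computational lift of the information-theoretic argument, with the computational indistinguishability of $C$ and $\widehat{C}$ combining cleanly with Lemma \ref{lem:DPXORtoOT:IT} to yield the needed \NCLeak guarantee.
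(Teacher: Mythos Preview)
Your overall strategy is sound and close to the paper's, but there is a real gap in the step where you pass from the \cite{MironovPRV09} simulator $\widehat{f}$ to the claim that the joint law $(V^\Pc_\pk, O^\Ac_\pk, O^\Bc_\pk)$ under $C$ is indistinguishable from that under $\widehat{C}$. The \cite{MironovPRV09} equivalence produces, for each party $\Pc$ \emph{separately}, an $\eps$-\DP distribution $\widehat{f}^\Pc(x,y)$ that is computationally close to $\Pc$'s real view; it does not couple the two simulators, and in particular gives no reason for $\widehat{f}^\Ac(x,y)_1 = \widehat{f}^\Bc(x,y)_1$ to hold. In the real channel $C$, perfect agreement of $\pi$ means $O^\Bc = \out_\Ac \xor i_\Bc \xor r$ is a deterministic function of $(V^\Ac,i_\Bc)$. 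In your simulated $\widehat{C}$, however, $\widehat{O}^\Bc = \widehat{\out}_\Bc \xor i_\Bc \xor r$ where $\widehat{\out}_\Bc$ comes from $\widehat{f}^\Bc$, which is not determined by $\widehat{f}^\Ac$. Hence your ``two-step hybrid'' cannot appeal to data processing on the single-party indistinguishability $V^\Ac_\pi \cindist \widehat{f}^\Ac$: the target tuple on the simulated side is not the same function of $\widehat{f}^\Ac$ as the real tuple is of $V^\Ac_\pi$. The same issue blocks invoking \cref{lem:DPXORtoOT:IT} on $\widehat{f}$, since its leakage proof (via \cref{claim:dpnew}) uses that the two parties share a common output bit.

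The paper closes this gap with \cref{claim:NCDP:agreement}: because $\pi$ has perfect agreement, the marginals of $f^\Ac(x,y)_1$ and $f^\Bc(x,y)_1$ coincide, so the marginals of $\widehat{f}^\Ac(x,y)_1$ and $\widehat{f}^\Bc(x,y)_1$ differ only negligibly, and one can \emph{re-couple} the two per-party simulators into a single functionality $\widetilde{f}$ with the same per-party marginals but $\Pr[\widetilde{f}^\Ac_1 = \widetilde{f}^\Bc_1] \ge 1 - \negl$. With (almost) perfect agreement of $\widetilde{f}$ in hand, both obstacles disappear: $\widetilde{O}^\Bc$ is almost surely the same function of $(\widetilde{V}^\Ac,i_\Bc)$ as $O^\Bc$ is of $(V^\Ac,i_\Bc)$, so the hybrid goes through, and \cref{lem:DPXORtoOT:IT} applies to $\widetilde{f}$ to give $(2\eps,0)$-leakage of the simulated channel. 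After this fix your route via \cref{cor:main:C} and the paper's route (a direct hybrid against the OT of \cref{thm:DPXORtoOT:IT}, as in the proof of \cref{thm:main:C}) both conclude.
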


We make use of the following  notion of simulation based computational differential privacy, in the spirit of \cite{MironovPRV09}.
\begin{definition}[Simulation based computational differential privacy]\label{def:SImDP}
	
	A two-output functionality ensemble $\set{f_\pk = (f^\Ac_\pk,f^\Bc_\pk)}_{\pk\in\N}$ over input domain $\zn\times \zn$ is {\sf $\eps$-\NCDP}
	if the there exists a functionality ensemble \\$\set{\tf_\pk = (\tf_\pk^\Ac,\tf_\pk^\Bc)}_{\pk\in\N}$  such that the following holds:
	\begin{itemize}
		
		\item For every $\pk \in \N$: the functionality  $\tf_\pk$ is $\eps(\pk)$-\DP
		(according to \cref{def:DP:IT}).

		\item For both  $\Pc \in \ABc$ and every $x,y\in \zn$:
		
		$$\set{f_\pk^\Pc(x,y)}_{\pk\in\N} \cindist \set{\tf_\pk^\Pc(x,y)}_{\pk\in\N}$$
	\end{itemize}	
\end{definition}
As \cref{lem:CDPtoSimDP} (give below) shows, for Boolean inputs,  the above functionality (\cref{def:SImDP}) is closely related to the more standard  $\eps$-\CDP (\cref{def:DP:C}).

The proof of \cref{thm:DPXORtoOT:C} immediately follows by the next two lemmata.

\begin{lemma}\label{lem:CDPtoSimDP}
For any   $\eps$-\CDP  protocol $\pi = (\Ac,\Bc)$,   the functionality ensemble
$\set{f_\pk =  (f^\Ac_\pk(x,y),f_\pk^\Bc(x,y)}_{\pk\in\N}$ defined by $f_\pk(x,y)$  outputting  the parties' views in a random execution  of $(\Ac(x),\Bc(y))(1^\pk)$, is $\eps$-\NCDP.
\end{lemma}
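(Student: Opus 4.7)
The plan is to construct the ideal functionality ensemble $\tf = \set{\tf_\pk}$ by rounding the honest view distributions of $\pi$ so that they become information-theoretically $\eps$-\DP, while remaining non-uniformly indistinguishable from the originals. A key observation that simplifies the task is that Definition \ref{def:SImDP} only demands indistinguishability of $f^\Pc$ from $\tf^\Pc$ for each party $\Pc$ separately, not jointly, so one can perform the rounding on $\Ac$'s and $\Bc$'s view distributions independently and then define $\tf_\pk(x,y) = (\tVA_{x,y,\pk},\tVB_{x,y,\pk})$ with the two marginals sampled from independent copies of the rounded distributions.

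First I would specialize Definition \ref{def:DP:C} to the honest semi-honest receiver $\Bc^\ast=\Bc$ (and symmetrically to $\Ac$), obtaining that for every $\pk$, every neighboring $x, x' \in \zn$ with $\Ham(x,x') = 1$, every $y \in \zn$, and every $\nuppt$ distinguisher $\Dc$,
\[
\pr{\Dc(V^\Bc_{x,y,\pk})=1} \;\le\; e^{\eps(\pk)} \cdot \pr{\Dc(V^\Bc_{x',y,\pk})=1} + \negl(\pk).
\]
This is exactly the kind of ``computational log-ratio closeness'' against non-uniform distinguishers that I need to promote to an information-theoretic $\rindist_{\eps}$ statement on appropriately rounded distributions.

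The key step is to invoke a computational-to-information-theoretic rounding lemma in the spirit of \cite{MironovPRV09}: whenever two distribution ensembles are computationally $\eps$-log-ratio close against $\nuppt$ distinguishers, there exist ``rounded'' ensembles, each $\cindist$-close to the original, that are information-theoretically $\eps$-log-ratio close (the negligible additive slack produced by the standard rounding can be absorbed into a further truncation that costs only a statistical-distance $\negl$, which is swallowed by $\cindist$). Applying this pointwise over every neighboring pair $(x,x')$ and every $y$ produces $\tVB_{x,y,\pk}$ witnessing the DP requirement for $\Bc$, and the symmetric application on the $\Ac$-side gives $\tVA_{x,y,\pk}$; assembling these into $\tf_\pk$ satisfies both bullets of Definition \ref{def:SImDP}.

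The main obstacle I anticipate is making the rounding globally consistent across all $n\cdot 2^n$ neighboring pairs per party rather than doing it pairwise. The MPRV rounding applies out of the box to a single pair of ensembles; globalizing it requires either a minimax/LP-duality argument over the class of distinguishers that could violate the uniform DP bound, or a hybrid whose total $\negl$ slack remains negligible only when the input length $n$ is at most logarithmic in $\pk$. Fortunately this is the regime relevant to the driving XOR application (where $n=1$), so the hybrid blow-up is harmless and the lemma follows from the pair-by-pair rounding.
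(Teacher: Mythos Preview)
Your approach is essentially the same as the paper's: handle the two parties separately, invoke the \cite{MironovPRV09} result to convert computational $\eps$-DP into an information-theoretically $\eps$-DP simulator, and assemble $\tf_\pk(x,y)$ from the two independently-rounded marginals. The paper's proof does exactly this, citing \cite{MironovPRV09} directly for the rounding step.

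Your ``main obstacle'' about global consistency across $n\cdot 2^n$ neighboring pairs is a non-issue in the paper's setting: the lemma is stated and proved for Boolean inputs ($n=1$), as the surrounding text makes explicit (``for Boolean inputs, the above functionality is closely related to the more standard $\eps$-\CDP''). With each party's input a single bit, once you fix the other party's input there is exactly one neighboring pair, so the \cite{MironovPRV09} pair-rounding applies directly with no hybrid or LP argument needed. The paper simply writes $\Mc^\Pc_b$ for $b\in\zo$ and applies MPRV once per fixed $b$.
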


\begin{lemma}\label{lem:DPXORfunc:C}
	 Let $\eps, \beta$ be functions satisfying the requirements of  \cref{thm:DPXORtoOT:C}. Let $f$ be a functionality ensemble  that is $\eps$-\NCDP, has perfect agreement and computes the XOR function with average correctness at least $\beta(\pk)$. Then in the $f$-hybrid model there exists a semi-honest  secure OT.
\end{lemma}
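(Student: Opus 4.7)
The plan is to run the same \cref{pro:DPXOR:IT} construction on the computationally differentially private functionality $f$ and argue that the induced channel ensemble satisfies the hypotheses of the computational amplification theorem \cref{thm:main:C}. Concretely, let $\pi^f$ be the protocol obtained by instantiating \cref{pro:DPXOR:IT} with oracle $f$, let $C = \set{C_\pk = \CHAN(\pi^f(1^\pk))}_{\pk \in \N}$ be the induced channel ensemble, and let $\tf$ be the $\eps$-\DP ``ideal'' functionality ensemble guaranteed by \cref{def:SImDP}. Let $\tC = \set{\tC_\pk = \CHAN(\pi^{\tf}(1^\pk))}_{\pk \in \N}$ denote the channel ensemble obtained by running the same protocol against $\tf$.

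The first step is to verify the agreement condition. Since both $f$ and $\tf$ have perfect agreement (for $f$, this is assumed, and for $\tf$, I need to argue that perfect agreement of $f$ transfers to $\tf$ up to a negligible additive factor; otherwise the indistinguishability of outputs would be violated). For $\tf$, by \cref{clm:DPXORtoOT:IT:Agr} applied to a functionality of correctness $\beta' \ge \beta - \negl(\pk)$, the channel $\tC_\pk$ has agreement $\beta'(\pk)$; and for $C$, since $f$ has correctness $\beta$, again by the same calculation (which uses only the output layer), $C_\pk$ has agreement $\beta(\pk)$. For sufficiently large $\pk$ the agreement is at least $\beta(\pk)/2$, which remains noticeable.

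The second and main step is to establish computational log-ratio leakage. I will claim that $C$ has $(2\eps,0)$-\NCLeak with the witness ensemble $\tC$. The log-ratio leakage of $\tC_\pk$ itself follows directly from \cref{lem:DPXORtoOT:IT} applied to the information-theoretic $\eps(\pk)$-\DP functionality $\tf_\pk$: the induced channel has $(2\eps(\pk),0)$-\Leak. It remains to show that for each $\Pc \in \ABc$, the ensembles $\set{\VP_\pk, \OA_\pk, \OB_\pk}_{\pk \in \N}$ and $\set{\tVP_\pk, \tOA_\pk, \tOB_\pk}_{\pk \in \N}$ are non-uniformly computationally indistinguishable. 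Here is where the specific structure of \cref{pro:DPXOR:IT} pays off: $\pi$ makes only a single oracle call to its functionality, and everything else (sampling $i_\Ac, i_\Bc, r$, XORing bits) is locally computable. Hence the joint view and outputs in $C_\pk$ are a deterministic function (whose randomness is public and independent of the oracle) of $(i_\Ac, i_\Bc, f^\Pc(i_\Ac, i_\Bc))$ for the party $\Pc$ whose view we are analyzing, together with the corresponding output bit of the other party which is a bit of $f^{\Pc'}(i_\Ac,i_\Bc)$; similarly for $\tC_\pk$ with $\tf$ in place of $f$. By assumption both parties' outputs of $f_\pk(x,y)$ are non-uniformly indistinguishable from those of $\tf_\pk(x,y)$ for every fixed $x,y \in \zo$, and since there are only four input pairs, a standard union bound / hybrid argument shows that the joint views $(\VP_\pk, \OA_\pk, \OB_\pk)$ and $(\tVP_\pk, \tOA_\pk, \tOB_\pk)$ are non-uniformly indistinguishable as required.

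The third step is routine: with $C$ having average agreement $\beta/2 \ge (c/2)\eps^2$ with $1/\beta \in \poly$ and $(2\eps,0)$-\NCLeak, increasing the constant $c$ by a factor of $4$ to absorb the factors, \cref{thm:main:C} (or the corollary \cref{cor:main:C}) yields a semi-honest non-uniform computationally secure OT in the $f$-hybrid model. The main obstacle to watch for is ensuring the hybrid argument in the second step goes through cleanly despite the oracle-access flavor of the uniform version of \cref{def:compLeakNU}; in the non-uniform setting of \cref{thm:DPXORtoOT:C} (which only asks for $\cindist$ without oracle access), the hybrid is straightforward and the calls to $f$ and $\tf$ by the distinguisher are not needed.
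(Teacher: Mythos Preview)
Your overall architecture matches the paper's: run \cref{pro:DPXOR:IT} on $f$, use the information-theoretic analysis (\cref{lem:DPXORtoOT:IT}) on the ideal $\tf$ to get $(2\eps,0)$-\Leak, and then invoke \cref{thm:main:C}. However, there is a genuine gap in your second step, and it is precisely the issue the paper isolates as \cref{claim:NCDP:agreement}.

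The problem is this. The definition of $\eps$-\NCDP (\cref{def:SImDP}) only promises, for each $\Pc$ separately, that $\set{f_\pk^\Pc(x,y)} \cindist \set{\tf_\pk^\Pc(x,y)}$. It says nothing about the \emph{joint} distribution of $(\tf_\pk^\Ac,\tf_\pk^\Bc)$; in particular, the two outputs of $\tf$ could be independent (and indeed the construction in \cref{lem:CDPtoSimDP} produces them separately via \cite{MironovPRV09}). Your assertion in step~1 that ``perfect agreement of $f$ transfers to $\tf$ up to a negligible additive factor; otherwise the indistinguishability of outputs would be violated'' is false: a distinguisher sees only one party's marginal, so it cannot detect whether $\tf^\Ac_1$ and $\tf^\Bc_1$ agree. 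Concretely, $\tf^\Ac_1$ and $\tf^\Bc_1$ could be independent fair coins while each marginal matches $f^\Pc_1$ perfectly.

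This matters for your indistinguishability claim: to get $(\VP,\OA,\OB)\cindist(\tVP,\tOA,\tOB)$ you need $(f^\Pc(x,y),f^{\Pc'}(x,y)_1)\cindist(\tf^\Pc(x,y),\tf^{\Pc'}(x,y)_1)$, and the hybrid over the four input pairs does not manufacture this joint statement from the two marginals. The paper's fix (\cref{claim:NCDP:agreement}) is to \emph{re-couple} $\tf$: since the marginals of $\tf^\Ac_1$ and $\tf^\Bc_1$ are negligibly close (each being close to $f^\Pc_1$, and $f$ having perfect agreement), one can sample correlated bits $(R^\Ac,R^\Bc)$ with the correct marginals and $\Pr[R^\Ac=R^\Bc]\ge 1-\negl$, and then condition each $\tf^\Pc$ on its output bit equaling $R^\Pc$. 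The resulting $\tf$ keeps each marginal (hence the $\eps$-\DP property and the indistinguishability from $f^\Pc$) and now has almost perfect agreement, which is exactly what you need to push the other party's output bit through the data-processing argument. Once you insert this step, the rest of your proof goes through as written.
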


\paragraph{Proving \cref{thm:DPXORtoOT:C}.}
\begin{proof}[Proof of \cref{thm:DPXORtoOT:C}]
	Let $\pi$ be a protocol  satisfying the requirements in \cref{thm:DPXORtoOT:C}, and let $f$ be the functionality ensemble guaranteed  by \cref{lem:CDPtoSimDP}  for  $\Pi$. By construction, $f$ has perfect agreement and computes the XOR function with correctness  $\beta$. Thus, the theorem proof follows by \cref{lem:DPXORfunc:C}.	
\end{proof}

\paragraph{Proving \cref{lem:CDPtoSimDP}.}
\begin{proof}[Proof of \cref{lem:CDPtoSimDP}]
		Let $\Mc^\Ac_x (1^\pk,y)= f_\pk^\Ac(x,y)$ and $\Mc^\Bc_y (1^\pk,x)= f_\pk^\Bc(x,y)$. Fix $\p \in \ABc$. Since $\pi$ is $\eps$-\CDP,  it is clear that $\Mc^\p_b$ is $\eps$-\CDP mechanism for every $b \in \zo$. From \cite{MironovPRV09}, for every $ b\in \zo$ there exists distributions ensembles $\set{D_\pk^{\Pc,b,0}}_{\pk\in\N},\set{D_\pk^{\Pc,b,1}}_{\pk\in\N}$, such that
	\begin{enumerate}
		\item for every $\pk \in \N$: $D_\pk^{\Pc,b,0}\rindist_{\eps} D_\pk^{\Pc,b,1}$, and
		
		\item for every  $c \in \zo$: 	 $\set{D_\pk^{\Pc,b,c}}_{\pk\in\N} \cindist \set{\Mc^\Pc_b(1^\pk,c)}_{\pk\in\N}$\label{item:compDP:sim}
	\end{enumerate}
	
	Consider the functionality   ensemble $\set{\tf_\pk}_{\pk\in\N}$ defined by \\$\tf_\pk(x,y) = (\tf^\Ac_\pk(x,y),\tf^\Bc_\pk(x,y))$  outputting   a random sample from  $(D_\pk^{\Ac,x,y},D_\pk^{\Bc,y,x})$. By definition, for every $\p\in \ABc$ it holds that
	$$\set{\tf^\Pc_\pk(x,y)}_{\pk\in\N} \cindist \set{f^\Pc_\pk(x,y)}_{\pk\in\N}.$$
	
	Thus, $\tf$ realizes the  $\eps$-\CDP functionality of $f$.
\end{proof}

\paragraph{Proving \cref{lem:DPXORfunc:C}.}
The proof  immediately follows by the next claim.

\begin{claim}\label{claim:NCDP:agreement}
	Let $f= \set{f_\pk = (f^\Ac_\pk,f^\Bc_\pk)}_{\pk\in\N}$ be a functionality ensemble that is $\eps$-\NCDP,  and has perfect agreement. Then $f$ is $\eps$-\NCDP with respect to a $\eps$-\DP functionality ensemble $\set{\tf_\pk = (\tf^\Ac_\pk,\tf^\Bc_\pk)}_{\pk\in\N}$  that satisfies  that for every  $x,y\in \zn$:
	
	\begin{align}\label{eq:NCDP:agreement}
	\set{(v^\Ac,v^\Bc_1)_{ (v^\Ac,v^\Bc) \gets f_\pk(x,y)}}_{\pk\in\N} \cindist \set{(v^\Ac,v^\Bc_1)_{ (v^\Ac,v^\Bc) \gets \tf_\pk(x,y)}}_{\pk\in\N},
	\end{align}
	and the same holds for or the view of $\Bc$.
	
\end{claim}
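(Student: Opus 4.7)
The plan is to modify the simulator $\tf_0$ provided by \cref{lem:CDPtoSimDP} (which satisfies only marginal indistinguishability) into an $\eps$-\DP ensemble $\tf$ whose joint output has the additional property that the two parties' designated output bits are equal with probability $1$. Once first bits are perfectly aligned, the required joint indistinguishability \cref{eq:NCDP:agreement} follows immediately from the marginal one: in $f$, the perfect-agreement assumption makes $v^\Bc_1$ a deterministic function of $v^\Ac$ (namely $v^\Ac_1$), so $(v^\Ac, v^\Bc_1)$ is a deterministic function of $v^\Ac$ alone, and likewise $(\tv^\Ac, \tv^\Bc_1)$ is a deterministic function of $\tv^\Ac$ alone; then marginal indistinguishability of $\tv^\Ac$ with $v^\Ac$ implies the joint statement.

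The key enabling observation is that the common output bit $O(x,y) := f^\Ac_\pk(x,y)_1 = f^\Bc_\pk(x,y)_1$ is a Boolean-valued mechanism that is simultaneously $\eps$-\CDP in $x$ (viewed through $\Bc$'s view) and $\eps$-\CDP in $y$ (viewed through $\Ac$'s view). Because the output is Boolean, the ``distinguisher that outputs the bit'' shows that $\eps$-\CDP coincides with $\eps$-\DP up to a $\negl$ additive error on the probability $p_\pk(x,y) := \Pr[O(x,y)=1]$. Hence $p_\pk$ can be rounded to a nearby function $\widetilde{p}_\pk$ that is strictly $\eps$-\DP in each coordinate and statistically close to $p_\pk$. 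Let $\widetilde{O}_\pk(x,y)$ be the Boolean mechanism with mass $\widetilde{p}_\pk(x,y)$ at $1$.

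I will then define $\tf_\pk(x,y)$ in two stages: first sample $b \gets \widetilde{O}_\pk(x,y)$; then sample $\tv^\Ac \gets \tf_{0,\pk}^\Ac(x,y)$ conditioned on its first bit equalling $b$, and independently $\tv^\Bc \gets \tf_{0,\pk}^\Bc(x,y)$ conditioned on its first bit equalling $b$. By construction $\tv^\Ac_1 = \tv^\Bc_1 = b$ with probability $1$, delivering the joint indistinguishability as described above. Information-theoretic $\eps$-\DP of each marginal is argued separately: the first-bit sampler $\widetilde{O}_\pk$ is already $\eps$-\DP in both coordinates, and the conditional distribution of the full view given its first bit is obtained by a post-processing of $\tf_{0,\pk}^\Pc$, an $\eps$-\DP mechanism, restricted to an event of probability bounded away from $0$ and $1$ (this holds since $f$ computes XOR with accuracy near $1/2$, making the output bit essentially uniform). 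Marginal indistinguishability $\tv^\Pc \cindist v^\Pc$ is inherited from $\tf_0$ because in $f$ the first bit is distributed exactly as $O(x,y)$ and $\widetilde{O}$ is computationally close to $O$, so conditioning $f^\Pc$ on $O(x,y) = b$ and resampling by $\widetilde{O}$ gives back a distribution close to $f^\Pc$.

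\textbf{Main obstacle.} The delicate step is controlling the \DP loss introduced by the two-stage ``sample $b$, then condition'' procedure. Standard arguments show that conditioning an $\eps$-\DP mechanism on an event of probability $q$ produces a distribution that is only $(\eps + \log(1/q))$-\DP, and composing this with the $\eps$-\DP sampler for $b$ accrues an additional additive $\eps$. For the XOR functionality with accuracy near $1/2$ the event probability $q$ is bounded from below by a universal constant, so the total loss is a constant multiplicative blow-up in $\eps$; this blow-up is harmless for \cref{thm:DPXORtoOT:C} because its requirement $\beta \ge c_1 \eps^2$ can absorb any such constant into $c_1$. Obtaining the exact parameter $\eps$ as stated, rather than $O(\eps)$, would require either a more careful joint construction (sampling $b$ and the post-first-bit tails coherently so the composition does not lose a factor) or re-starting from an $\eps/c$-\DP variant of \cref{lem:CDPtoSimDP}; both options are straightforward bookkeeping.
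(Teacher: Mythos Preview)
Your approach is workable for the downstream theorem but does not prove the claim as literally stated, and it overlooks a much simpler construction that the paper uses. The paper never introduces a new bit-sampler $\widetilde{O}$ and never re-argues differential privacy. Instead, it takes the simulator $\hf$ guaranteed by the definition of $\eps$-\NCDP and observes that, because $f$ has perfect agreement, the first-bit marginals $\hf^\Ac_\pk(x,y)_1$ and $\hf^\Bc_\pk(x,y)_1$ are within negligible statistical distance of one another (both being computationally close to the same bit $f^\Ac_\pk(x,y)_1=f^\Bc_\pk(x,y)_1$). One can therefore couple a pair of bits $(R^\Ac,R^\Bc)$ so that each $R^\Pc$ has \emph{exactly} the distribution of $\hf^\Pc_\pk(x,y)_1$ and $\Pr[R^\Ac=R^\Bc]\ge 1-\negl(\pk)$. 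Defining $\tf^\Pc_\pk(x,y)$ to be a sample from $\hf^\Pc_\pk(x,y)$ conditioned on its first bit equalling $R^\Pc$ leaves each marginal $\tf^\Pc_\pk(x,y)$ \emph{identical} to $\hf^\Pc_\pk(x,y)$, so the $\eps$-\DP property is inherited with no loss whatsoever, while the two first bits now agree except with negligible probability. The joint indistinguishability \cref{eq:NCDP:agreement} then follows exactly by your own argument: perfect (resp.\ almost-perfect) agreement makes $(v^\Ac,v^\Bc_1)$ a function of $v^\Ac$ on both sides, up to a negligible correction.

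By contrast, your two-stage construction changes the marginals (since $\widetilde{O}$ need not equal either $\hf^\Ac_1$ or $\hf^\Bc_1$), forcing you to re-derive \DP via a ``sample $b$, then condition'' decomposition that, as you correctly identify, costs a constant factor in $\eps$; concretely, the ratio $\tf^\Ac(x,y)(v)/\tf^\Ac(x,y')(v)$ picks up one factor of $e^{\pm\eps}$ from $\widetilde{O}$, one from $\hf^\Ac$, and one from the normalizing first-bit probabilities, giving only $3\eps$-\DP. That loss can be absorbed into the constant $c_1$ of \cref{thm:DPXORtoOT:C}, but it means you have not proved \cref{claim:NCDP:agreement} as written. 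Two smaller points: conditioning is not post-processing, so the phrase ``post-processing \ldots restricted to an event'' is misleading; and your lower bound on the conditioning-event probability appeals to $f$ computing XOR with noticeable accuracy, an assumption not among the claim's hypotheses. The paper's coupling trick sidesteps both issues and delivers the exact parameter $\eps$.
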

That is, the above claim states that if a functionality is $\eps$-\NCDP and has perfect agreement, then the view of each party is indistinguishable from the view in an $\eps$-\DP functionality, even when adding the output of the other party.

\begin{proof}
	The straightforward proof replaces  an arbitrary  functionality realizing the  $\eps$-\NCDP of $f$ with one that has (almost) perfect agreement.
	
	  By \cref{lem:CDPtoSimDP}, there exists functionality ensemble $\hf = \set{\hf_\pk = (\hf^\Ac_\pk,\hf^\Bc_\pk)}_{\pk\in\N}$ that realizes the $\eps$-\CDP of $f$. We show  there exists a functionality ensemble $\set{\tf_\pk = (\tf^\Ac_\pk,\tf^\Bc_\pk)}_{\pk\in\N}$ such that
	
	\begin{enumerate}
		\item $\tf^\Pc_\pk(x,y)$ and $\widehat{f}^\Pc_\pk(x,y)$ are the same  for every $x,y\in \zo$, $\pk\in \N$ and $\Pc \in \ABc$, and
		
		\item $\pr{\tf^\Ac_\pk(x,y)_1=\tf^\Bc_\pk(x,y)_1}\geq 1-\negl(\pk)$.
	\end{enumerate}
Namely, $\tf$ also realizes the 	$\eps$-\CDP of $f$ and has an almost perfect agreement. Since $f$ has perfect agreement, $\tf$ satisfies   \cref{eq:NCDP:agreement}.
		
In the rest of the proof we construct the desired $\tf$.  Since $f$ has  perfect agreement, and since $\hf^\Pc$ is computationally close to $f^\Pc$,  for every $x,y\in \zo$ it holds that
		\begin{align}
		\size{\pr{\hf^\Ac_\pk(x,y)_1=1}-\pr{\hf^\Bc_\pk(x,y)_1=1}} \leq \negl(\pk)
		\end{align}
		
		Therefore, for every $x,y\zo$ there  exists ensembles of Boolean random variables pairs  $\set{(\Rx^\Ac,\Rx^\Bc)}_{\pk\in\N}$ such that for any $\pk$:
		 \begin{align}
		 \Rx^\Pc \equiv \hf^\Pc_\pk(x,y)_1
		 \end{align}
		  and
		 \begin{align}
		 \pr{\Rx^\Ac = \Rx^\Bc} \geq 1-\negl(\pk)
		 \end{align}

	 For  $r\in \zo$, define  $\tf^\Pc_\pk(x,y,r)\eqdef\hf^\Pc_\pk(x,y)|_{\hf^\Pc_\pk(x,y)_=r}$, and let  $\tf^\p_\pk(x,y) =  \tf^\Pc_\pk(x,y,\Rx^\p)$.  By construction, the distributions $\tf^\p_\pk(x,y)$ and $\hf^\p_\pk(x,y)$ are the same and $\tf$ has almost perfect agreement.
\end{proof}
\begin{proof}[Proof of \cref{lem:DPXORfunc:C}.]
	The proof follows \cref{thm:DPXORtoOT:IT,claim:NCDP:agreement}, using a similar hybrid argument as in the proof of \cref{thm:main:C}.
\end{proof}

\newcommand{\sx}{\sigma_x}
\newcommand{\sy}{\sigma_y}

\subsection{Extension to Functions that are not Monotone under Relabeling}\label{sec:non-monotone}

We now extend our results to a large class of functions: functions that are not ``monotone under relabeling''.

\begin{definition}[Monotone under relabeling]\label{def:monotoneFunction}~
	A function $g\colon  \zn\times \zn \to \zo$ is {\sf monotone under relabeling} if there exists bijective functions $\sx,\sy:[2^n]\to \zn$ such that for every $x \in \zn$ and $i\leq j \in [2^n]$: 
	\begin{align*}
	g(x,\sy(i))\leq g(x,\sy(j)),
	\end{align*} 
		and, for every $y \in \zn$ and $i\leq j \in [2^n]$:
	\begin{align*}
		g(\sx(i),y)\leq g(\sx(j),y).
	\end{align*}  
	
\end{definition}

\begin{theorem}\label{thm:mainMonotone}
	There exists a constant $c>0$ such that the following holds for every  $n \in \N$.  Let $\eps, \beta$ be functions such that for every $\pk\in \N\colon\eps(\pk), \beta(\pk) \in [0,1]$, $1/2\geq \beta(\pk)\geq c\cdot n^2\cdot\eps(\pk)^2$ and $\beta(\pk) \geq 1/p(\pk)$ for some $p \in \poly$. Let $\pi$ be a \ppt two-party protocol  that is $\eps$-\CDP, and computes a function $g$ over $\zn\times\zn$ that is not monotone under relabeling, with worst-case correctness at least $\beta(\pk)$ and perfect agreement,  then there exists a non-uniform computationally secure OT.
\end{theorem}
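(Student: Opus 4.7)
The plan is to reduce the general case to the XOR case already handled by \cref{thm:DPXORtoOT:C}. The key structural observation is that any $g:\zn \times \zn \to \zo$ that is not monotone under relabeling must contain an ``embedded XOR'': there exist $x_0,x_1,y_0,y_1 \in \zn$ such that $g(x_b,y_c) = b \xor c$ for every $b,c \in \zo$ (after possibly swapping the roles of $0$ and $1$). First I would prove this combinatorial claim by showing that a $0/1$ matrix is monotone under relabeling if and only if for any two rows (and any two columns), the set of $1$-positions of one contains that of the other. If this nested property fails, there must be rows $x_0,x_1$ and columns $y_0,y_1$ exhibiting the pattern $\begin{pmatrix} 1 & 0 \\ 0 & 1 \end{pmatrix}$ or $\begin{pmatrix} 0 & 1 \\ 1 & 0 \end{pmatrix}$, which is exactly the embedded XOR (the second pattern is reduced to the first by flipping the output bit, which does not affect privacy or correctness).

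Next I would construct a \ppt Boolean-output two-party protocol $\pi'$ for XOR over $\zo \times \zo$ by hardwiring the embedded XOR inputs: on input $(b,c)$, party $\Ac$ runs $\Ac$ of $\pi$ on input $x_b$ and party $\Bc$ runs $\Bc$ of $\pi$ on input $y_c$, and both output whatever they output in $\pi$. By the embedding, $\pi'$ inherits perfect agreement from $\pi$ and computes XOR with worst-case (hence average) correctness at least $\beta(\pk)$.

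The main technical step is to analyze the differential privacy of $\pi'$. Since the original protocol is $\eps$-\CDP with respect to inputs of Hamming distance one, but the two possible inputs $x_0,x_1$ of $\Ac$ (resp.\ $y_0,y_1$ of $\Bc$) may differ in up to $n$ coordinates, I would invoke the standard group-privacy property of differential privacy iteratively along a Hamming path of length at most $n$ between $x_0$ and $x_1$. This yields that $\pi'$ is $(n \cdot \eps)$-\CDP with respect to its single-bit inputs. (The triangle-style argument extends to the computational setting by a hybrid argument over the intermediate inputs, applying the \CDP guarantee of $\pi$ at most $n$ times, incurring only a negligible additive loss from the $\negl(\pk)$ terms.)

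Finally I would apply \cref{thm:DPXORtoOT:C} to $\pi'$ with privacy parameter $\eps'(\pk) = n \cdot \eps(\pk)$ and correctness $\beta(\pk)$. The hypothesis of \cref{thm:DPXORtoOT:C} requires $\beta \ge c \cdot (\eps')^2 = c \cdot n^2 \cdot \eps^2$, which is exactly the assumption of \cref{thm:mainMonotone}, and the noticeability requirement $\beta \ge 1/p$ is inherited directly. The conclusion---existence of a non-uniform computationally secure OT---follows. The main obstacle I anticipate is verifying the combinatorial embedded-XOR claim cleanly; the group-privacy loss of a factor of $n$ is what forces the $n^2$ factor in the hypothesis and is essentially tight in our approach.
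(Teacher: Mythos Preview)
Your proposal is correct and follows essentially the same approach as the paper: prove that non-monotone-under-relabeling implies an embedded XOR via the nested-supports argument, hardwire the four inputs to get a XOR protocol, use group privacy to bound its privacy by $n\cdot\eps$, and invoke \cref{thm:DPXORtoOT:C}. One small simplification: you do not need to flip the output bit for the second $2\times 2$ pattern, since swapping the roles of $y_0$ and $y_1$ (or of $x_0$ and $x_1$) already converts it to the first pattern.
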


We will show that every function that is not monotone under relabeling, has a copy of the XOR function that is ``embedded'' in it.

\begin{definition}[Embedded XOR]
A function $g \colon \zn\times \zn \to \zo$ has  {\sf embedded XOR} if there exists $x_0,x_1 \in \zn$ and $y_0,y_1 \in \zn$ such that for every $b,c \in \zo$, $g(x_b,y_c) = b \xor c$. 
\end{definition}

For example the Hamming distance function $\Ham(x,y)$ over $\zo^n \times \zo^n$ has an embedded XOR, by using  the inputs $x_b=b \circ 0^{n-1}$ and $y_c=c \circ 0^{n-1}$.

It is clear that a function that is monotone under relabeling does not have an embedded XOR. In the following we show the opposite direction:  every function $g$ that is not monotone under relabeling  has an embedded XOR. Moreover, we show that if $\pi$ is a $\eps$-\CDP protocol that computes  function  $g$ with worst-case correctness $\beta$, then there exists a $n\cdot\eps$-\CDP protocol $\tpi$ that compute XOR with the same correctness. \cref{thm:mainMonotone}  then follows by \cref{thm:DPXORtoOT:C}.

\begin{lemma}\label{lem:monToXor:corr}
	A  function that is not monotone under relabeling, has an embedded  XOR.
\end{lemma}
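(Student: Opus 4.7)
The plan is to prove the contrapositive: if $g \colon \zn \times \zn \to \zo$ has no embedded XOR, then $g$ is monotone under relabeling. First, I will reformulate ``monotone under relabeling'' in terms of chain conditions on the ``row sets'' and ``column sets'' of $g$. For each $x \in \zn$ let $S_x = \{y : g(x,y) = 1\}$, and for each $y \in \zn$ let $T_y = \{x : g(x,y)=1\}$. The key observation is that a bijection $\sy \colon [2^n] \to \zn$ makes every row $i \mapsto g(x, \sy(i))$ non-decreasing if and only if, for every $x$, the set $S_x$ is a final segment of the ordering induced by $\sy$; and this is possible simultaneously for all $x$ if and only if the family $\{S_x\}_{x \in \zn}$ is totally ordered by inclusion. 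The ``if'' direction is immediate by enumerating $\zn$ bottom-up along the chain, starting with $\zn \setminus \bigcup_x S_x$ and then peeling off successive differences; the ``only if'' direction is immediate because two final segments of a linear order are always nested. The symmetric statement holds for $\sx$ and $\{T_y\}_{y \in \zn}$. Thus $g$ is monotone under relabeling iff (a) $\{S_x\}$ is a chain under inclusion, and (b) $\{T_y\}$ is a chain under inclusion.

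Next, I would show that each of (a), (b) failing forces an embedded XOR. Suppose (a) fails: there exist $x_1, x_2 \in \zn$ with $S_{x_1} \not\subseteq S_{x_2}$ and $S_{x_2} \not\subseteq S_{x_1}$. Pick $y_1 \in S_{x_1} \setminus S_{x_2}$ and $y_2 \in S_{x_2} \setminus S_{x_1}$. Then the $2 \times 2$ sub-matrix of $g$ indexed by $(x_1,x_2) \times (y_1,y_2)$ equals $\bigl(\begin{smallmatrix} 1 & 0 \\ 0 & 1 \end{smallmatrix}\bigr)$. Relabelling $x_0' := x_2, x_1' := x_1, y_0' := y_1, y_1' := y_2$ yields $g(x_b', y_c') = b \xor c$ for all $b, c \in \zo$, which is an embedded XOR. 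The case that (b) fails is identical after transposing the roles of the two coordinates.

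Finally, combining these pieces gives the contrapositive: if $g$ has no embedded XOR then neither (a) nor (b) can fail, so both $\{S_x\}$ and $\{T_y\}$ are chains, so $\sy$ and $\sx$ exist as above, so $g$ is monotone under relabeling. Equivalently, every $g$ that is not monotone under relabeling possesses an embedded XOR. There is no genuine obstacle here; the main care is verifying the combinatorial equivalence between the ``chain of final segments'' property and the existence of a single relabeling that monotonizes every row (resp.\ column) simultaneously, and confirming that the ``anti-diagonal'' $2{\times}2$ submatrix can be relabeled into the exact XOR pattern demanded by the definition.
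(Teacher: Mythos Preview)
Your proof is correct and takes essentially the same approach as the paper: both prove the contrapositive by showing that the absence of an embedded XOR forces the row-support sets (the paper uses $Z_x=\{y:g(x,y)=0\}$, you use $S_x=\{y:g(x,y)=1\}$) to form a chain under inclusion, and then build the relabeling from that chain. One minor remark: conditions (a) and (b) are in fact equivalent (incomparable $T_{y_1},T_{y_2}$ immediately yield incomparable $S_{x_1},S_{x_2}$), so treating them separately is harmless but redundant.
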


\begin{proof}
	Let $g\colon\zn\times \zn \to \zo$ be a  function that has no embedded XOR. We show that $g$ is monotone under relabeling.
	
	For  input $x \in \zn$, let $Z_x=\set{y \mid g(x,y)=0}$. We claim that for any $x_0$ and $x_1$ in $\zo$, it must hold that either $Z_{x_0} \subseteq Z_{x_1}$, or, $Z_{x_1} \subseteq Z_{x_0}$. Indeed, otherwise there is $y_0,y_1$ such that $y_0 \in Z_{x_0}\setminus Z_{x_1}$ and $y_1 \in Z_{x_1}\setminus Z_{x_0}$, and therefore, for $b,c \in \zo$, $g(x_b,y_c) = b \xor c$.
	
	Let $\sx:[2^n] \to \zn$ be a bijective function such that for every $i \leq j$, $\size{Z_{\sx(i)}} \geq \size{Z_{\sx(j)}}$. Then it must hold that $Z_{\sx(j)}\subseteq Z_{\sx(i)}$, and therefore for every $y \in \zn$, $g(\sx(i),y) \leq g(\sx(j),y)$.
	
	Repeating this argument to construct $\sy$ ends the proof. 
\end{proof}

\begin{lemma}\label{lem:monToXor:dp}
	Let $\eps$ be a function with $\eps(\pk)\in [0,1]$ and let $\pi = (\Ac,\Bc)$ be a  $\eps$-\CDP protocol. Then for every $x_0,x_1,y_0,y_1\in \zn$, the protocol $\tpi= (\tAc,\tBc)$ defined by $(\tAc(b),\tBc(c))(1^\pk) = (\Ac(x_b),\Bc(y_c))(1^\pk)$ is $(n\eps)$-\CDP.
\end{lemma}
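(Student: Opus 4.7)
My plan is to reduce the statement to the classical group privacy argument for differential privacy, adapted to the indistinguishability-based computational definition. The first observation is that $\tpi$ is nothing but $\pi$ composed with a deterministic relabeling of inputs: $(\tAc(b),\tBc(c)) = (\Ac(x_b),\Bc(y_c))$. Hence, any \nuppt adversary $\tBc^\ast$ attacking $\tpi$ with input $c$ can be turned (by a one-line wrapper that ignores its own input and internally simulates $\tBc^\ast$) into a \nuppt adversary $\Bc^\ast$ attacking $\pi$, satisfying $V^{\tBc^\ast}_\pk(c) \equiv V^{\Bc^\ast}_\pk(y_c)$ for every $c\in\zo$. Thus, in order to verify the secrecy of $\tAc$, it suffices to show that for every \nuppt $\Bc^\ast, \Dc$ and every $b\in\zo$,
$$ \pr{\Dc(V^{\Bc^\ast}_\pk(x_b))=1} \le e^{n\eps(\pk)} \pr{\Dc(V^{\Bc^\ast}_\pk(x_{1-b}))=1} + \negl(\pk). $$

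I would then prove the above by a standard hybrid/telescoping argument. Fix any sequence $z_0,z_1,\ldots,z_n \in \zn$ with $z_0 = x_b$, $z_n = x_{1-b}$, and $\Ham(z_i,z_{i+1})=1$ for every $i$ (obtained by flipping one disagreeing coordinate at a time, so of length at most $n$). Set $p_i := \pr{\Dc(V^{\Bc^\ast}_\pk(z_i))=1}$. For each consecutive pair $z_i, z_{i+1}$, the $\eps$-\CDP of $\pi$ yields a negligible function $\mu_i$ with $p_i \le e^{\eps(\pk)} p_{i+1} + \mu_i(\pk)$. Iterating this inequality $n$ times gives
$$ p_0 \le e^{n\eps(\pk)} p_n + \sum_{i=0}^{n-1} e^{i\eps(\pk)} \mu_i(\pk) \le e^{n\eps(\pk)} p_n + n\cdot e^{n\eps(\pk)} \cdot \mu(\pk), $$
where $\mu(\pk) := \max_i \mu_i(\pk)$ is negligible. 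The same argument applied in reverse (using that $\eps$-\CDP is a two-sided condition) provides the matching upper bound for the ratio in the other direction. The secrecy of $\tBc$ is handled identically, using the sequence of Hamming neighbors between $y_0$ and $y_1$ and invoking the $\eps$-\CDP of $\pi$ on the $\Ac$-side.

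The only real subtlety—and the step I expect to flag as the main obstacle—is controlling the residual additive term $n\cdot e^{n\eps(\pk)} \cdot \mu(\pk)$ so that it remains negligible in $\pk$, as demanded by the $(n\eps)$-\CDP definition. This holds as soon as $n \cdot e^{n\eps(\pk)}$ is polynomially bounded, which is automatic in all parameter regimes where the lemma is applied: in \cref{thm:mainMonotone} we have $\beta(\pk) \ge c\cdot n^2\eps(\pk)^2$ and $\beta(\pk)\le 1/2$, forcing $n\eps(\pk)=O(1)$, so $n\cdot e^{n\eps(\pk)}\in\poly$. Beyond this bookkeeping, the argument is a direct instantiation of classical group privacy for computational DP, and no new ideas are required.
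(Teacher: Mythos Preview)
Your approach is essentially identical to the paper's: both invoke the standard ``singleton privacy implies group privacy'' telescoping along a Hamming path of length at most $n$ between $x_b$ and $x_{1-b}$ (and symmetrically for the $y$-side). Two minor points: your additive-term worry is unnecessary since $n$ is fixed in the lemma and $\eps(\pk)\le 1$ by hypothesis (so $n\cdot e^{n\eps(\pk)}\le n e^n$ is a constant independent of $\pk$), and in your wrapper the view should be indexed by $\tAc$'s input $b$ rather than $c$ (the adversary $\tBc^\ast$ in \cref{def:DP:C} carries no private input, so no wrapper is needed---just set $\Bc^\ast:=\tBc^\ast$); neither affects the core argument.
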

\begin{proof}
	For $x,y \in \zn$ and $\pk \in \N$, let $\VBx$ be the view of $\Bc$ in a random execution of $(\Ac(x),\Bc(y))(1^\pk)$. Let $\Dc$ be a \nuppt. Since $\Pi$ is $\eps$-\CDP, for every   $x,x'  \in \zn$ with $\Ham(x,x') =1$ it  hold that

	\begin{align}
	\pr{\Dc(\VBx, 1^\pk) = 1}  \leq   e^{\eps(\pk)} \cdot \pr{\Dc(\VB_{x',y,\pk}, 1^\pk) = 1} + \negl(\pk)
	\end{align}
	
A simple  calculation (known as ``singleton privacy implies group privacy'') shows that for every   $x,x' 
\in  \zn$ with $\Ham(x,x') =d$:
	$$\pr{\Dc(\VBx, 1^\pk) = 1}  \leq   e^{d\cdot\eps(\pk)} \cdot \pr{\Dc(\VB_{x',y,\pk}, 1^\pk) = 1} + \negl(\pk).$$	
	
	The proof for $\Ac$'s privacy thus followed by the fact that for any $x_0,x_1 \in \zn$, the Hamming distance $\Ham(x,x')$ is at most $n$. The proof for the privacy of $\Bc$ follows  similar lines.
\end{proof}

We remark that the loss incurred in \cref{lem:monToXor:dp} is sometimes unnecessary. For example, in the XOR-embedding of the Hamming distance function that we considered above, the distance between $x_0$ and $x_1$ (and also between $y_0$ and $y_1$) is only one, and therefore, no losses in privacy are incurred in this case, and \cref{thm:mainMonotone} holds for $g(x,y)=\Ham(x,y)$ without the loss of $n^2$ factor, in the privacy.

\paragraph{Proving  \cref{thm:mainMonotone}.}
We now ready to prove \cref{thm:mainMonotone}.
\begin{proof}[Proof of \cref{thm:mainMonotone}.]
Let $\pi$ be a protocol that satisfies the requirements of \cref{thm:mainMonotone} \wrt a  function $g$ that is not monotone under relabeling. By \cref{lem:monToXor:corr}, there exist $x_0,x_1,y_0,y_1 \in \zn$ such that for every $b,c \in \zo$, $g(x_b,y_c) = b \xor c$. Therefore, the protocol defined by $(\tAc(b),\tBc(c))(1^\pk) \eqdef (\Ac(x_b),\Bc(y_c))(1^\pk)$ computes the XOR functionality with average correctness at least $\beta(\pk)$, and by \cref{lem:monToXor:dp} this protocol is $(n\eps)$-\CDP. Thus,  the theorem follows by \cref{thm:DPXORtoOT:C}. 
\end{proof}

\section*{Conclusion and Open Problems}
A natural open problem is to characterize the (Boolean) AND differentially private functionality. That is, show a similar dichotomy that characterizes which accuracy and leakage require OT.

More generally, the task of understanding and characterizing other (non Boolean) differentially private  functionalities like hamming distance and inner product remains open.

\section*{Acknowledgement}

We are very grateful to Kobbi Nissim, Eran Omri and Ido Abulafya for helpful conversations and advice.
We thank the anonymous referees for detailed and very helpful comments.

\bibliographystyle{abbrvnat}
\bibliography{crypto}

\appendix
\section{Missing Proofs}\label{sec:appendix}

\paragraph{Proving \cref{Prop:CondSD}}

\begin{proposition}[\cref{Prop:CondSD}, recited]\label{Prop:CondSD:app}
	\PropCondSD
\end{proposition}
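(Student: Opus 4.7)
My plan is to reduce the claim to a standard ``conditioning lemma'': if two distributions $Z_1,Z_2$ satisfy $\SD(Z_1,Z_2)\le\eps$ and an event $A$ has $\Pr[Z_1\in A]\ge\mu$, then $\SD(Z_1|_A,Z_2|_A)\le 2\eps/\mu$. Apply this with $Z_1=(X,Y)$, $Z_2=(\tX,\tY)$, and $A=E_b$, then project to the first coordinate (data processing) and finish by a triangle-inequality sandwich.

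First I would prove the conditioning lemma. Let $p_b=\Pr[Z_b\in A]$ and, for an arbitrary test set $B$, let $q_b=\Pr[Z_b\in B\cap A]$. Since $\SD(Z_1,Z_2)\le\eps$, both $|p_1-p_2|\le\eps$ and $|q_1-q_2|\le\eps$. A short manipulation,
\[
\Bigl|\tfrac{q_1}{p_1}-\tfrac{q_2}{p_2}\Bigr|
\le \tfrac{|q_1-q_2|}{p_1}+\tfrac{q_2}{p_2}\cdot\tfrac{|p_2-p_1|}{p_1}
\le \tfrac{\eps}{p_1}+\tfrac{\eps}{p_1}\le \tfrac{2\eps}{\mu},
\]
gives the lemma (using $\eps<\mu$ so that $p_2>0$ and $q_2/p_2\le 1$). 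Note that this bound uses $p_1\ge\mu$, which is exactly what the hypothesis supplies for both $b\in\zo$.

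Next, for each $b\in\zo$, instantiate the lemma with $Z_1=(X,Y)$, $Z_2=(\tX,\tY)$, $A=E_b$, obtaining
\[
\SD\bigl((X,Y)|_{E_b},(\tX,\tY)|_{E_b}\bigr)\le 2\eps/\mu.
\]
By the data processing inequality for statistical distance (projecting the pair onto its first coordinate), this yields
\[
\SD\bigl(X|_{(X,Y)\in E_b},\ \tX|_{(\tX,\tY)\in E_b}\bigr)\le 2\eps/\mu.
\]
Finally, apply the triangle inequality:
\[
\SD\bigl(\tX|_{E_0},\tX|_{E_1}\bigr)\le \SD\bigl(\tX|_{E_0},X|_{E_0}\bigr)+\SD\bigl(X|_{E_0},X|_{E_1}\bigr)+\SD\bigl(X|_{E_1},\tX|_{E_1}\bigr),
\]
and each ``switching'' term is at most $2\eps/\mu$, giving the desired $+4\eps/\mu$ overhead.

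The only subtlety is making sure the conditioning lemma really gives a $\mu$ (rather than $\mu-\eps$) denominator; the manipulation above, which divides $|q_1-q_2|$ by $p_1$ and not $p_2$, achieves that, and this is the step I would be most careful with when writing the full proof. Everything else is routine triangle inequality and data processing.
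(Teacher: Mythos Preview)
Your proposal is correct and follows essentially the same route as the paper: prove a conditioning lemma giving $\SD\bigl((X,Y)|_{E_b},(\tX,\tY)|_{E_b}\bigr)\le 2\eps/\mu$, apply data processing to project onto the first coordinate, and conclude by the triangle inequality. The only difference is cosmetic algebra in the conditioning step---the paper bounds $\tfrac{q_1}{p_1}-\tfrac{q_2}{p_2}$ by replacing $q_2\ge q_1-\eps$ and $p_2\le p_1+\eps$ and simplifying $\tfrac{q_1}{p_1}-\tfrac{q_1-\eps}{p_1+\eps}=\tfrac{\eps(q_1+p_1)}{p_1(p_1+\eps)}\le\tfrac{2\eps}{\mu}$, whereas you use the split $\tfrac{q_1-q_2}{p_1}+\tfrac{q_2}{p_2}\cdot\tfrac{p_2-p_1}{p_1}$; both are fine and yield the same denominator $\mu$ rather than $\mu-\eps$.
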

\begin{proof}
	In the following we show that for every $b\in \zo$,\\ $\SD(\tX|_{\set{(\tX,\tY) \in E_b}},X|_{\set{(X,Y) \in E_b}})\leq 2\eps/\mu$. The proof then follows using the triangle inequality.
	
	Note that, by data processing,\\  $\SD(\tX|_{\set{(\tX,\tY) \in E_b}},X|_{\set{(X,Y) \in E_b}})\leq \SD((\tX,\tY)|_{\set{(\tX,\tY) \in E_b}},(X,Y)|_{\set{(X,Y) \in E_b}})$
	
	For every set $\cA \subseteq \cX\times \cY$, and $b \in \zo$, we want to bound\\ $\pr{(X,Y) \in \cA \mid (X,Y) \in E_b}-\pr{(\tX,\tY) \in \cA \mid (\tX,\tY) \in E_b}$.
	
	It holds that,
	\begin{align}\label{eq:CondSD}
	&\pr{(X,Y) \in \cA \mid (X,Y) \in E_b}-\pr{(\tX,\tY) \in \cA \mid (\tX,\tY) \in E_b} \\\nonumber
	& = \frac{\pr{(X,Y) \in \cA \cap E_b}}{\pr{(X,Y)\in E_b}}-\frac{\pr{(\tX,\tY) \in \cA \cap E_b}}{\pr{(\tX,\tY)\in E_b}}\\\nonumber
	&\leq \frac{\pr{(X,Y) \in \cA \cap E_b}}{\pr{(X,Y)\in E_b}}-\frac{\pr{(X,Y) \in \cA \cap E_b}-\eps}{\pr{(X,Y)\in E_b}+\eps}\\\nonumber
	&=\frac{\eps\cdot \pr{(X,Y) \in \cA \cap E_b}+ \eps \cdot \pr{(X,Y)\in E_b}}{\pr{(X,Y)\in E_b}(\pr{(X,Y)\in E_b}+\eps)}\\\nonumber
	&\leq \frac{2\eps \cdot\pr{(X,Y)\in E_b}}{(\pr{(X,Y)\in E_b})^2}\\\nonumber
	&\leq\frac{2\eps}{\mu}\\\nonumber
	\end{align}
	Where the last equality follows because $\frac{A}{B} -\frac{A-\eps}{B+\eps} = \frac{\eps(A+B)}{B(B+\eps)}$.
	Since \cref{eq:CondSD} holds for every set  $\cA  \subseteq \cX\times \cY$, we get that\\ $(X,Y)|_{\set{(X,Y) \in E_b}} \sindist_{2\eps/\mu}(X,Y)|_{\set{(\tX,\tY) \in E_b}}$, for every $b \in \zo$.
\end{proof}

\paragraph{Proving \cref{clm:spcl->WBSC}}
	
	\begin{proposition}[\cref{clm:spcl->WBSC}, recited]
		\PropWBSC
	\end{proposition}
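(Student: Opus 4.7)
The plan is to verify each of the three conditions for a $(0,\eps_0,\eps_0,2p,2p)$-\WBSC in turn, starting with the easy ones. The correctness condition is immediate from the SWBSC definition: $\pr{\OA=0}=\half$ trivially lies in $[\half,\half]$, and the equality $\pr{\OB\neq\OA\mid \OA=b_\Ac}=\eps_0$ trivially lies in $[\eps_0,\eps_0]$. The receiver security condition is also essentially free: since $\OA$ is determined by $\VA$ (by the channel convention), $(\VA,\OA)$ is just a deterministic function of $\VA$, so data processing on $\VA|_{\OA=\OB}\sindist_p \VA|_{\OA\ne \OB}$ already yields $(\VA,\OA)|_{\OB=\OA}\sindist_p(\VA,\OA)|_{\OB\ne \OA}$, which is actually tighter than the $2p$ bound we need.

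The only substantive step is sender security, where we must show that for each $b_\Bc\in\zo$, $\VB|_{\OB=b_\Bc,\OA=0}\sindist_{2p}\VB|_{\OB=b_\Bc,\OA=1}$. Exactly one of the two conditioning events lies in $\set{\OA=\OB}$ and the other in $\set{\OA\ne\OB}$: for $b_\Bc=0$ the events $\set{\OB=0,\OA=0}$ and $\set{\OB=0,\OA=1}$ are subsets of $\set{\OA=\OB}$ and $\set{\OA\ne\OB}$ respectively, and the roles swap for $b_\Bc=1$. So it suffices to establish the single claim that for each $b\in\zo$,
\[
\VB|_{\OA=\OB,\,\OB=b}\sindist_{2p}\VB|_{\OA\ne\OB,\,\OB=b}.
\]

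The key observation needed here is the perfect symmetry $\pr{\OB=b\mid \OA=\OB}=\pr{\OB=b\mid \OA\ne\OB}=\half$ for both $b\in\zo$. This follows from a short computation using SWBSC correctness: $\pr{\OA=\OB}=1-\eps_0$ splits evenly between the outcomes $(0,0)$ and $(1,1)$, each having joint probability $(1-\eps_0)/2$, and symmetrically $\pr{\OA\ne\OB}=\eps_0$ splits evenly between $(0,1)$ and $(1,0)$. Armed with this, the factor-of-two loss arises from the elementary fact that if $D_0\sindist_p D_1$ and an event $E$ satisfies $\pr{D_0\in E}=\pr{D_1\in E}=\half$, then for every set $A$,
\[
\bigl|\Pr[D_0\in A\mid E]-\Pr[D_1\in A\mid E]\bigr|=2\bigl|\Pr[D_0\in A\cap E]-\Pr[D_1\in A\cap E]\bigr|\le 2p,
\]
so $D_0|_E\sindist_{2p}D_1|_E$. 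Applying this with $D_0=\VB|_{\OA=\OB}$, $D_1=\VB|_{\OA\ne\OB}$, and $E=\set{\OB=b}$ (a well-defined event on $\VB$ since $\OB$ is a function of $\VB$) completes the proof. I do not expect any real obstacle beyond carefully unwinding the conditionings; the doubling of $p$ is inherent to conditioning on a probability-$\half$ event, not a slack in the argument.
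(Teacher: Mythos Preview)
Your proof is correct and follows essentially the same approach as the paper: both derive correctness and receiver security immediately, and both establish the key symmetry $\pr{\OB=b\mid \OA=\OB}=\pr{\OB=b\mid \OA\ne\OB}=\tfrac12$ before deducing sender security with the factor-of-two loss. The only difference is stylistic: you argue the sender-security step directly via the distributional inequality $\SD(D_0|_E,D_1|_E)\le 2\,\SD(D_0,D_1)$ when $\Pr[D_0\in E]=\Pr[D_1\in E]=\tfrac12$, whereas the paper proves the contrapositive by building an explicit distinguisher $\Dc'(v,y)=\Dc_y(v)$ that combines the hypothetical distinguishers $\Dc_0,\Dc_1$ and averages over $\OB$ to violate the SWBSC sender security --- but this is precisely the operational unwinding of your distributional inequality.
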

	\begin{proof} 
		The correctness and the receiver security properties hold from the definition.
		
		For sender security,  first notice that for every  $b_\Bc$, we get from the symmetry of \SWBSC that:
		
		\begin{align*}
		\pr{\OB=b_\Bc\mid \OA=\OB} &= \frac{\pr{ \OA=\OB\mid\OB=b_\Bc} \pr{\OB=b_\Bc}}{\pr{\OA=\OB}}=1/2,
		\end{align*}
		and
		\begin{align*}
		\pr{\OB= b_\Bc\mid \OA\neq \OB} &= \frac{\pr{ \OA\neq \OB\mid\OB=b_\Bc} \pr{\OB=b_\Bc}}{\pr{\OA\neq \OB}}=1/2
		\end{align*}
		
		
		Now, assume for contradiction that, for some $b \in \zo$, a distinguisher $\Dc_b$  breaks the sender security in the \WBSC definition. That is,\\ $\pr{\Dc_b(\VB)=1|\OB=b,\OA=0}-\pr{\Dc_b(\VB)=1|\OB=b,\OA=1}> 2p$. Then, we can construct a distinguisher that breaks the specialized sender security:
		Let $\Dc_{1-b}$ be an algorithm such that $\pr{\Dc_{1-b}(\VB)=1|\OB=1-b,\OA=0}-\pr{\Dc_{1-b}(\VB)=1|\OB=1-b,\OA=1}\geq 0$, and consider the following algorithm:
		
		\begin{algorithm}[$\Dc'$]
			\item [Input:] $(v,y) \in \Supp(\VB,\OB)$. 
			\item [Operation:] Output  $\Dc_y(v)$.~\\
		\end{algorithm}
		It holds that:
		\begin{align*}
		&p \geq \pr{\Dc'(\VB,\OB)=1|\OA=\OB}-\pr{\Dc'(\VB,\OB)=1|\OA\neq\OB}\\
		&\begin{aligned}
		=&1/2\cdot [\pr{\Dc'(\VB,0)=1|\OA=\OB, \OB=0}-\pr{\Dc'(\VB,0)=1|\OA\neq\OB,\OB=0}]\\
		&+1/2\cdot [\pr{\Dc'(\VB,1)=1|\OA=\OB, \OB=1}-\pr{\Dc'(\VB,1)=1|\OA\neq\OB, \OB=1}]
		\end{aligned}\\
		&\begin{aligned}
		=&1/2\cdot [\pr{\Dc_0(\VB)=1|\OA=\OB, \OB=0}-\pr{\Dc_0(\VB)=1|\OA\neq\OB,\OB=0}]\\
		&+1/2\cdot [\pr{\Dc_1(\VB)=1|\OA=\OB, \OB=1}-\pr{\Dc_1(\VB)=1|\OA\neq\OB, \OB=1}]
		\end{aligned}\\
		&\begin{aligned}
		=&1/2\cdot [\pr{\Dc_0(\VB)=1|\OA=0, \OB=0}-\pr{\Dc_0(\VB)=1|\OA=1,\OB=0}]\\
		&+1/2\cdot [\pr{\Dc_1(\VB)=1|\OA=0, \OB=1}-\pr{\Dc_1(\VB)=1|\OA=1, \OB=1}]
		\end{aligned}\\
		&> p.
		\end{align*}
	\end{proof}

\paragraph{Proving \cref{Prop:simple}}
\begin{proposition}[\cref{Prop:simple}, recited]\label{Prop:simple:app}
	\PropSimple
\end{proposition}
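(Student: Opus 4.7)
Let $f=(1/2+b)^\ell/[(1/2+b)^\ell+(1/2-b)^\ell]$. The plan is to reduce the two-sided bound $f\in[\half(1+b\ell),\half(1+3b\ell)]$ to showing $2f-1\in[b\ell,3b\ell]$, and then to rewrite the middle quantity as a single hyperbolic function. Concretely, from $2f-1=\frac{(1/2+b)^\ell-(1/2-b)^\ell}{(1/2+b)^\ell+(1/2-b)^\ell}$, dividing numerator and denominator by $(1/2-b)^\ell$ and setting $r=(1+2b)/(1-2b)$ gives $2f-1=(r^\ell-1)/(r^\ell+1)=\tanh(\tfrac{1}{2}\ell\ln r)$. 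Since $\ln r=2\operatorname{arctanh}(2b)$, this collapses to the clean identity $2f-1=\tanh\bigl(\ell\cdot\operatorname{arctanh}(2b)\bigr)$. The problem is thus reduced to controlling $\tanh(y)$ where $y=\ell\cdot\operatorname{arctanh}(2b)$.

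Next I would sandwich $y$. Using the power series $\operatorname{arctanh}(x)=\sum_{k\ge0}x^{2k+1}/(2k+1)$ valid for $x\in(0,1)$, one gets immediately $\operatorname{arctanh}(x)\ge x$, hence $y\ge 2b\ell$. For the upper bound, dominating each term by a geometric series gives $\operatorname{arctanh}(x)\le x/(1-x^2)$. The hypotheses $b\ell<1/4$ and $\ell\ge 1$ force $b<1/4$, so $(2b)^2\le 1/4$, yielding $\operatorname{arctanh}(2b)\le (2b)/(1-1/4)=8b/3$. Hence $y\in[2b\ell,\,8b\ell/3]$, and in particular $y\le 2/3$.

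The final step is to apply two-sided bounds on $\tanh$. For the upper bound, $\tanh(y)\le y\le 8b\ell/3<3b\ell$. For the lower bound, the elementary inequality $\tanh(y)\ge y-y^3/3$ for $y\ge 0$ (proved by checking that the derivative of $\tanh(y)-y+y^3/3$ equals $y^2-\tanh^2(y)\ge0$) combined with $y\le 2/3$ gives $\tanh(y)\ge y(1-y^2/3)\ge y\cdot(23/27)>y/2\ge b\ell$. Combining the two bounds yields $2f-1\in[b\ell,3b\ell]$, which is equivalent to the claim.

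I do not anticipate a real obstacle: the whole argument is a short calculus computation. The one point to watch is the bookkeeping of the hypothesis $b\ell<1/4$, which enters twice—once to guarantee $b<1/4$ so the geometric-series bound on $\operatorname{arctanh}(2b)$ is usable, and once to keep $y\le 2/3$ so the cubic correction in the $\tanh$ Taylor expansion is negligible enough to preserve the constant $1$ (rather than something like $2/3$) in the lower bound. If one wished to avoid hyperbolic functions, an equivalent but uglier argument can be run directly on $(r^\ell-1)/(r^\ell+1)$ using $\ln(1\pm 2b)$ estimates; the $\tanh/\operatorname{arctanh}$ reformulation is just the cleanest way to organize the same inequalities.
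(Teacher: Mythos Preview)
Your argument is correct and complete; the reduction to $2f-1=\tanh(\ell\cdot\operatorname{arctanh}(2b))$ is valid, the sandwich $y\in[2b\ell,8b\ell/3]\subset[2b\ell,2/3]$ follows from the hypotheses as you say, and the two-sided $\tanh$ estimates deliver the claimed constants with room to spare.

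The paper takes a different, purely combinatorial route. It expands $(1\pm 2b)^\ell$ with the binomial theorem, so that the denominator becomes $2\sum_{i}\binom{\ell}{2i}(2b)^{2i}$ and the quantity $2f-1$ becomes the ratio of the odd part to the even part of the expansion. The bounds then come from the crude estimate $\binom{\ell}{k}(2b)^k\le (2b\ell)^k$ together with a geometric-series bound using $2b\ell<1/2$. Your approach is more analytic (and arguably cleaner once the $\tanh$ identity is on the table): it packages the same information into standard Taylor inequalities for $\operatorname{arctanh}$ and $\tanh$, which makes the dependence on the parameters and the role of the hypothesis $b\ell<1/4$ more transparent. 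The paper's approach, by contrast, avoids calculus entirely and stays at the level of binomial coefficients. Both arrive at the same constants $1$ and $3$.
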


\begin{proof}
	We start with the lower bound,
	\begin{align*}
\frac{(1/2+b)^\ell}{(1/2+b)^\ell+(1/2-b)^\ell}&=\frac{(1+2b)^\ell}{(1+2b)^\ell+(1-2b)^\ell}\\
&=
\frac{\sum_{i=0}^{\ell}\binom{\ell}{i}(2b)^i}{2\sum_{i=0}^{\floor{\ell/2}}\binom{\ell}{2i}(2b)^{2i}}.\\
&=\frac{\sum_{i=0}^{\floor{\ell/2}}\binom{\ell}{2i}(2b)^{2i}+\sum_{i=0}^{\floor{(\ell-1)/2}}\binom{\ell}{2i+1}(2b)^{2i+1}}{2\sum_{i=0}^{\floor{\ell/2}}\binom{\ell}{2i}(2b)^{2i}}\\
&=1/2+\frac{\sum_{i=0}^{\floor{(\ell-1)/2}}\binom{\ell}{2i+1}(2b)^{2i+1}}{2\sum_{i=0}^{\floor{\ell/2}}\binom{\ell}{2i}(2b)^{2i}}\\
&= 1/2+\frac{2b \ell}{2\sum_{i=0}^{\floor{\ell/2}}\binom{\ell}{2i}(2b)^{2i}} + \frac{\sum_{i=1}^{\floor{(\ell-1)/2}}\binom{\ell}{2i+1}(2b)^{2i+1}}{2\sum_{i=0}^{\floor{\ell/2}}\binom{\ell}{2i}(2b)^{2i}}\\
&= 1/2+\frac{b \ell}{1+\sum_{i=1}^{\floor{\ell/2}}\binom{\ell}{2i}(2b)^{2i}} + \frac{\sum_{i=1}^{\floor{(\ell-1)/2}}\binom{\ell}{2i+1}(2b)^{2i+1}}{2+2\sum_{i=1}^{\floor{\ell/2}}\binom{\ell}{2i}(2b)^{2i}}\\
&\geq 1/2+\frac{b \ell}{\sum_{i=0}^{\floor{\ell/2}}\binom{\ell}{2i}(2b)^{2i}}\\
&=1/2+\frac{b \ell}{1+\sum_{i=1}^{\floor{\ell/2}}\binom{\ell}{2i}(2b)^{2i}}\\
&\geq 1/2+\frac{b \ell}{1+\sum_{i=1}^{\floor{\ell/2}}(2b \ell)^{2i}}\\
&\geq 1/2+{b \ell}/{2}=\half(1+b\ell)
\end{align*}
Finally a similar calculation yields the following upper bound,
	\begin{align*}	
\frac{(1/2+b)^\ell}{(1/2+b)^\ell+(1/2-b)^\ell} &\leq 1/2+b \ell+ \frac{\sum_{i=1}^{\floor{(\ell-1)/2}}\binom{\ell}{2i+1}(2b)^{2i+1}}{2+2\sum_{i=1}^{\floor{\ell/2}}\binom{\ell}{2i}(2b)^{2i}}\\
&\leq 1/2+b \ell+ \frac{\sum_{i=1}^{\floor{(\ell-1)/2}}(2b \ell)^{2i+1}}{2}\\
&\leq\half(1+3b\ell)
\end{align*}
\end{proof}
\end{document}